\theoremstyle{plain}
\newtheorem{theorem}{Theorem}
\newtheorem{corollary}[theorem]{Corollary}
\newtheorem{lemma}[theorem]{Lemma}
\newtheorem{proposition}[theorem]{Proposition}
\theoremstyle{definition}
\newtheorem{definition}[theorem]{Definition}
\newtheorem{example}[theorem]{Example}
\newtheorem{problem}[theorem]{Problem}
\theoremstyle{remark}
\newtheorem{remark}[theorem]{Remark}
\newcommand\fib{\boldsymbol{f}}
\newcommand\mbonacci{\boldsymbol{w_m}}
\newcommand\kstrictepi{\boldsymbol{s}}
\title{Palindromic Ziv--Lempel and
Crochemore Factorizations of $m$-Bonacci Infinite Words}
\author{
  Marieh Jahannia\\
  School of Mathematics, Statistics and Computer Science, College of Science\\
University of Tehran, Tehran, Iran\\
  \url{mjahannia@ut.ac.ir}\\
  \and
  Morteza Mohammad-noori\\
  School of Mathematics, Statistics and Computer Science, College of Science\\
  University of Tehran, Tehran, Iran\\
  School of Mathematics, Institute for Research in Fundamental
  Sciences (IPM)\\
  Tehran, Iran\\
  \url{mmnoori@ut.ac.ir}\\
  \and
Narad Rampersad\\
Department of Mathematics and Statistics\\
University of Winnipeg, Winnipeg, Canada\\
\url{n.rampersad@uwinnipeg.ca}\\
\and
Manon Stipulanti\\
Department of Mathematics\\
University of Li\`ege, Li\`ege, Belgium\\
\url{m.stipulanti@uliege.be}}
\date{\today}
\begin{document}
\maketitle

\begin{abstract}
We introduce a variation of the Ziv--Lempel and
Crochemore factorizations of words by requiring each factor to be a
palindrome.  We compute these factorizations for the Fibonacci word, and
more generally, for all $m$-bonacci words.
\end{abstract}

\bigskip
\hrule
\bigskip

\noindent 2010 {\it Mathematics Subject Classification}: 68R15.

\noindent \emph{Keywords:}
Ziv--Lempel factorization; Crochemore factorization; palindrome; Fibonacci word; $m$-bonacci words; singular words; episturmian words.

\bigskip
\hrule
\bigskip

\section{Introduction}
The Ziv--Lempel \cite{LZ76} and Crochemore \cite{Cro83} factorizations are two well-known factorizations of words used in text compression and other text algorithms.  
Here we apply them to infinite words.  Let $|u|$ denote the length of a finite word $u$.
In this paper, we start indexing words at $0$, i.e., if $u$ is a finite word over the alphabet $A$, then we write $u=u_0\cdots u_{|u|-1}$ where $u_i\in A$ for all $0\le i < |u|$.
If $\boldsymbol{w}$ is an infinite word and $u$ is a finite word, we say
there is an \emph{occurrence of $u$ at position $j$ in $\boldsymbol{w}$} if $\boldsymbol{w} = pu\boldsymbol{w}'$
for some word $p$ of length $j$ and some infinite word $\boldsymbol{w}'$.
Given an infinite word $\boldsymbol{w}$, the \emph{Ziv--Lempel} or \emph{$z$-factorization} of $\boldsymbol{w}$ is the factorization
\[ z(\boldsymbol{w}) = (z_1, z_2, z_3, \ldots) \]
where $z_i$ is the shortest prefix of $z_iz_{i+1}z_{i+2}\cdots$ such that there is no occurrence of $z_i$ in $\boldsymbol{w}$
at any position $j < |z_1z_2\cdots z_{i-1}|$. 
The \emph{Crochemore} or  \emph{$c$-factorization} of $\boldsymbol{w}$ is the factorization
\[ c(\boldsymbol{w}) = (c_1, c_2, c_3, \ldots) \]
where $c_i$ is the longest prefix of $c_ic_{i+1}c_{i+2}\cdots$ such that there is an occurrence of $c_i$ in
$\boldsymbol{w}$ at some position $j < |c_1c_2\cdots c_{i-1}|$, or, if this prefix does not exist, the factor
$c_i$ is just a single letter.

For instance, if $\fib$ is the Fibonacci word, we have
\[
z(\fib) = (0, 1, 00, 101, 00100, 10100101, \ldots)
\]
and
\[
c(\fib) = (0, 1, 0, 010, 10010, 01010010, \ldots).
\]

Note that if $\boldsymbol{w}$ is ultimately periodic the $z$-factorization is not well-defined, since
eventually there will be no factors that do not occur previously in $\boldsymbol{w}$.  Similarly, if $\boldsymbol{w}$ is ultimately periodic the definition of the $c$-factorization will result in some factor $c_i$ being an
infinite word.  We are not interested in ultimately periodic words in this paper and will therefore
ignore this possibility and assume that any infinite word considered in this paper is aperiodic.

In the context of combinatorics on words, these factorizations have
been computed for certain important families of words.  
Berstel and Savelli \cite{BS06} computed the $c$-factorizations of all standard Sturmian words.  
They also observed that the $z$-factorization of the Fibonacci word coincides with the \emph{singular factorization} of the Fibonacci word introduced by Wen and Wen \cite{Wen94}. 
Fici \cite{Fici15} has given an excellent survey of these and other factorizations of the Fibonacci word. 
Ghareghani, Mohammad--noori, and Sharifani \cite{GMS11} determined the $z$- and $c$-factorizations of standard episturmian words.  
Constantinescu and Ilie \cite{CI07} used the $z$-factorization to define the \emph{Lempel--Ziv complexity} of an infinite word.

We introduce the \emph{palindromic $z$-factorization} $pz(\boldsymbol{w})$ and \emph{palindromic $c$-factorization} $pc(\boldsymbol{w})$ by requiring that each of the factors in the previous definitions be palindromes.  
That is, the \emph{palindromic $z$-factorization} of $\boldsymbol{w}$ is the factorization
\[ pz(\boldsymbol{w}) = (z_1, z_2, z_3, \ldots) \]
where $z_i$ is the shortest \emph{palindromic} prefix of $z_iz_{i+1}z_{i+2}\cdots$ such that there is no occurrence of $z_i$ in
$\boldsymbol{w}$ at any position $j < |z_1z_2\cdots z_{i-1}|$.  The palindromic $z$-factorization may
not exist for certain infinite words $\boldsymbol{w}$.  For instance, if $\boldsymbol{w}$ only contains palindromes
of bounded length, then the palindromic $z$-factorization will not exist.  This type of factorization is therefore
only interesting when applied to infinite words with arbitrarily long palindromic factors.
The \emph{palindromic $c$-factorization} of $\boldsymbol{w}$ is the factorization
\[ pc(\boldsymbol{w}) = (c_1, c_2, c_3, \ldots) \]
where $c_i$ is the longest \emph{palindromic} prefix of $c_ic_{i+1}c_{i+2}\cdots$ such that there is an occurrence of $c_i$ in
$\boldsymbol{w}$ at some position $j < |c_1c_2\cdots c_{i-1}|$, or, if this prefix does not exist, the factor
$c_i$ is just a single letter.

For instance,
if $\fib$ is the Fibonacci word, we have
\[
pz(\fib) = (0, 1, 00, 101, 00100, 10100101, \ldots)
\]
and
\[
pc(\fib) = (0, 1, 0, 010, 1001, 0010100, \ldots).
\]
It turns out that $pz(\fib)$ and $z(\fib)$ are the same, and in fact are equal to the \emph{singular factorization} of $\fib$ (which we define later).  
However, the factorizations $pc(\fib)$ and $c(\fib)$ are not the same.  
We show that the factors of $pc(\fib)$ can also be written in
terms of the singular words and the factorization $pc(\fib)$ (except
for the first few factors) coincides with a nice factorization of
$\fib$ that appears in \cite{Fici15}.

We believe that it could be of interest to compare the ordinary $z$-
and $c$-factorizations of certain infinite words with their palindromic $z$- and $c$-factorizations, in the same way that one can compare the ordinary complexity function of an infinite word with its palindromic complexity function (see \cite{ABCD03}).  

The main results of this paper give a description of the palindromic
$z$- and $c$-factorizations of the Fibonacci word and, more generally, the $m$-bonacci word for $m \geq 2$.

\section{Basics from combinatorics on words}\label{sec:basics-cow}

Let $A$ be a finite \emph{alphabet}, i.e., a finite set made of \emph{letters}. 
A \emph{(finite) word} $w$ over $A$ is a finite sequence of letters belonging to $A$. 
If $w=w_0 w_1 \cdots w_n \in A^*$ with $n\ge 0$ and $w_i \in A$ for all $i\in\{0,\ldots,n\}$, then the \emph{length $|w|$} of $w$ is $n+1$, i.e., it is the number of letters that $w$ contains.
We let $\varepsilon$ denote the empty word.  
This special word is the neutral element for concatenation of words, and its length is set to be $0$. 
The set of all finite words over $A$ is denoted by $A^*$, and we let $A^+=A^*\setminus\{\varepsilon\}$ denote the set of non-empty finite words over $A$. 
An \emph{infinite word} $\boldsymbol{w}$ over $A$ is any infinite sequence over $A$. The set of all infinite words over $A$ is denoted by $A^\omega$. 
Note that in this paper infinite words are written in bold.

A finite word $w\in A^*$ is a \emph{prefix} (resp., \emph{suffix}) of another finite word $z\in A^*$ if there exists $u\in A^*$ such that $z=wu$ (resp., $z=uw$).  
The word $w\in A^*$ is said to be a \emph{factor} of $z\in A^*$ if there exist $u,v\in A^*$ such that $z=uwv$.  
If $z=xy$ is a finite word over $A$, we write $x^{-1}z=y$ and $z y^{-1}= x$.
Observe that if $z=xyt$ with $t,x,y,z \in A^*$, then $(xy)^{-1}z=y^{-1}(x^{-1}z)$ and $z(yt)^{-1}=(zt^{-1})y^{-1}$. 
In particular, for any words $u,v \in A^*$, we have $(uv)^{-1}=v^{-1}u^{-1}$.

In the same way, a finite word $w\in A^*$ is a \emph{prefix} of an infinite word $\boldsymbol{z}\in A^\omega$ if there exist $\boldsymbol{u}\in A^\omega$ such that $z=w\boldsymbol{u}$.  
The word $w\in A^*$ is said to be a \emph{factor} of $z\in A^\omega$ if there exist $u\in A^*$ and $\boldsymbol{v}\in A^\omega$ such that $\boldsymbol{z}=uw\boldsymbol{v}$.  

Let $w=w_0 w_1 \cdots w_n \in A^*$ with $n\ge 0$ and $w_i \in A$ for all $i\in\{0,\ldots,n\}$.
The \emph{mirror image}, or \emph{reversal}, of $w$ is the word $w^R=w_n w_{n-1} \cdots w_0$ over $A$, i.e., the word obtained by reading $w$ from right to left. 
We say that a word $w$ over $A$ is a \emph{palindrome} if $w^{R}=w$. 

A \emph{factorization} of a finite word $w \in A^*$ is a finite sequence $(x_n)_{0 \le n \le m}$ of finite words over $A$ such that
$$
w = \prod_{n = 0}^{m} x_n.
$$
Similarly, a \emph{factorization} of an infinite word $\boldsymbol{w}\in A^\omega$ is a sequence $(x_n)_{n\ge 0}$ of finite words over $A$ such that 
$$
\boldsymbol{w} = \prod_{n\ge 0} x_n.
$$

A \emph{morphism} on $A$ is a map $\sigma : A^* \to A^*$ such that for all $u,v \in A^*$, we have $\sigma(uv)=\sigma(u)\sigma(v)$. 
In order to define a morphism, it suffices to provide the image of letters belonging to $A$. 
A morphism is said to be \emph{prolongable} on a letter $a\in A$ if $\sigma(a)=au$ with $u\in A^+$ and $\sigma$ is non-erasing, i.e., the image of no letter is the empty word.
If $\sigma$ is prolongable on $a$, then $\sigma^n(a)$ is a proper prefix of $\sigma^{n+1}(a)$ for all $n\ge 0$. 
Therefore, the sequence $(\sigma^n(a))_{n\ge 0}$ of finite words defines an infinite word $\boldsymbol{w}$ that is a fixed point of $\sigma$.  

In combinatorics on words, given an alphabet $A$, a set $X\subset A^+$ of non-empty words is a \emph{code} on $A$ if any word $w\in A^*$ has at most one factorization using words of $X$. For more on this topic, see, for instance,~\cite[Chapter 6]{Lothaire02}.  
The following result can be found in~\cite[Chapter 6]{Lothaire02}.

\begin{proposition}\label{pro:general-code}
Let $A, B$ be two finite alphabets, and let $\sigma : A^* \to B^*$ be an injective morphism. If $X\subset A^+$ is a code on $A$, then $\sigma(X)$ is a code on $B$.
\end{proposition}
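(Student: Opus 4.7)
The plan is to verify the code property of $\sigma(X)$ directly from its definition by pulling two factorizations back along $\sigma$ into $A^*$, then invoking the code property of $X$. I will start from an arbitrary word $w \in B^*$ admitting two factorizations
\[
w = u_1 u_2 \cdots u_n = v_1 v_2 \cdots v_m
\]
with each $u_i, v_j \in \sigma(X)$, and show that $n=m$ and $u_i = v_i$ for all $i$.

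Next, using the definition of $\sigma(X)$, I choose (for each $i,j$) elements $x_i, y_j \in X$ with $\sigma(x_i)=u_i$ and $\sigma(y_j)=v_j$. Since $\sigma$ is a morphism, the two expressions for $w$ become $\sigma(x_1 \cdots x_n) = \sigma(y_1 \cdots y_m)$. The injectivity hypothesis on $\sigma$ then yields the equality $x_1 x_2 \cdots x_n = y_1 y_2 \cdots y_m$ in $A^*$. This is exactly where the global injectivity of $\sigma$ (not merely injectivity restricted to $X$) is used, and it is the only nontrivial ingredient from the hypotheses besides the code property of $X$.

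Now I apply the hypothesis that $X$ is a code on $A$ to the single word $x_1 \cdots x_n = y_1 \cdots y_m$: since $X$ admits at most one factorization of any word of $A^*$ into elements of $X$, I obtain $n=m$ and $x_i = y_i$ for $1 \le i \le n$. Applying $\sigma$ to each equality gives $u_i = \sigma(x_i) = \sigma(y_i) = v_i$, so the two factorizations of $w$ over $\sigma(X)$ coincide, which is the required conclusion.

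I do not anticipate a genuine obstacle here: the argument is a two-step transport (along $\sigma$ and then back) and its only hypothesis-sensitive point is the injectivity of $\sigma$, which is exactly what allows the preimage factorization to be unambiguously identified. The proof is short enough that no auxiliary lemma is needed, and no choice of representatives $x_i, y_j$ is later required to be canonical, since the final equality is read off after the code property of $X$ has already forced $x_i = y_i$.
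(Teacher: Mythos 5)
Your proof is correct: pulling the two $\sigma(X)$-factorizations back along the injective morphism $\sigma$ and then invoking the code property of $X$ is exactly the standard argument for this fact. The paper itself gives no proof (it cites Lothaire, Chapter 6), and your argument is essentially the one found there, so there is nothing to add.
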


In the following definition, we introduce two new factorizations of interest. 

\begin{definition}\label{def:pal-z-c-fact}
Let $\boldsymbol{w}$ be an infinite word over $A$.
The \emph{palindromic Ziv--Lempel} or
\emph{palindromic $z$-factorization} of $\boldsymbol{w}$ is the factorization
\[ 
pz(\boldsymbol{w}) = (z_1, z_2, z_3, \ldots) 
\]
where $z_i$ is the shortest \emph{palindromic} prefix of $z_iz_{i+1}z_{i+2}\cdots$ such that there is no occurrence of $z_i$ in
$\boldsymbol{w}$ at any position $j < |z_1z_2\cdots z_{i-1}|$.
The \emph{palindromic Crochemore} or 
\emph{palindromic $c$-factorization} of $\boldsymbol{w}$ is the factorization
\[ 
pc(\boldsymbol{w}) = (c_1, c_2, c_3, \ldots) 
\]
where $c_i$ is the longest \emph{palindromic} prefix of $c_ic_{i+1}c_{i+2}\cdots$ such that there is an occurrence of $c_i$ in
$\boldsymbol{w}$ at some position $j < |c_1c_2\cdots c_{i-1}|$, or, if this prefix does not exist, the factor
$c_i$ is just a single letter.
\end{definition}

\section{The Fibonacci case}\label{sec:Fib}

\subsection{Some known results and preliminaries}
Before establishing the two palindromic factorizations of the Fibonacci word, we gather some definitions and necessary results. 
Some of them are well known and can be found in~\cite{Fici15, Wen94}. 
In the following definition, we follow the lines of~\cite{Fici15}.

\begin{definition}\label{def:Fibonacci}
Let $\fib$ be the (infinite) Fibonacci word, i.e., the fixed point of the morphism $\varphi : 0\mapsto 01, 1 \mapsto 0$, starting with $0$. 
For all $n\ge 0$, define the finite word $h_n = \varphi^{n}(0)$ to be the $n$th iteration of $\varphi$ on $0$. 
The first few words of the sequence $(h_n)_{n\ge 0}$ are $0, 01, 010, 01001$.
It is well known that the Fibonacci word $\fib$ is the limit of $(h_n)_{n\ge 0}$.
Let $(p_n)_{n\ge 3}$ be the sequence of the palindromic prefixes of $\fib$, which are also called \emph{central words}. 
The first few terms of this sequence are $\varepsilon, 0, 010, 010010, \ldots$.
The \emph{singular words} $(\hat{f}_n)_{n\ge 1}$ satisfy $\hat{f}_1=0$, $\hat{f}_2=1$ and, for all $n\ge 1$, $\hat{f}_{2n+1}=0p_{2n+1}0$ and $\hat{f}_{2n+2}=1p_{2n+2}1$. 
The first few singular words are $0,1,00, 101, 00100$.
\end{definition}

The following properties of the singular words can be found in~\cite{Wen94}.

\begin{proposition}\label{pro:properties-of-singular-words}
Let $(F_n)_{n\ge 0}$ be the sequence of Fibonacci numbers with initial conditions $F_0=0$ and $F_1=1$.
\begin{itemize}
\item[(1)] For all $n\ge 1$, $\hat{f}_{n}$ is a palindrome.
\item[(2)] For all $n\ge 1$, $|\hat{f}_{n}|=F_n$.
\item[(3)] For all $n\ge 4$, $\hat{f}_{n}=\hat{f}_{n-2} \hat{f}_{n-3} \hat{f}_{n-2}$.
\item[(4)] For all $n\ge 1$, $\hat{f}_{n}$ is not a factor of $\hat{f}_{n+1}$.
\item[(5)] For all $n\ge 1$, $\hat{f}_{n}$ is not a factor of $\prod_{m=1}^{n-1} \hat{f}_{m}$.
\item[(6)] Let $n\ge 1$ and let $\hat{f}_{n+1}=wa$ where $w \in \{0,1\}^{*}$ and $a\in \{0,1\}$. If $\hat{f}_{n+1}'=w\overline{a}$ with $\overline{a}=1-a$, then $\hat{f}_{n+2} = \hat{f}_{n} \hat{f}_{n+1}'$.
\item[(7)] Let $n\ge 3$ and define $\alpha_n$ to be $0$ if $n$ is odd, or $1$ if $n$ is even. 
Then $\hat{f}_n = \alpha_n \prod_{m=1}^{n-2} \hat{f}_{m}$.
\end{itemize}
\end{proposition}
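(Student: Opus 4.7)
The strategy is to derive all seven properties from two fundamental facts about the finite Fibonacci words $h_n = \varphi^n(0)$: the concatenation recurrence $h_n = h_{n-1} h_{n-2}$ for $n \geq 2$, and the decomposition $h_n = p_{n+2}\tau_n$ where $\tau_n \in \{01,10\}$ alternates with the parity of $n$. Both are established by a routine induction on $n$ from the definitions, and together they yield $|h_n| = F_{n+2}$ and hence $|p_n| = F_n - 2$ for $n \geq 3$.

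Properties (1) and (2) follow immediately from the defining formula $\hat{f}_n = a p_n a$ (with $a\in\{0,1\}$ depending on the parity of $n$): the word is a palindrome because $p_n$ is, and $|\hat{f}_n| = |p_n|+2 = F_n$. For property (6), I would verify the central-word identity $p_{n+2} = p_n \cdot \alpha_n\alpha_{n+1} \cdot p_{n+1}$ by a direct length comparison and computation using $h_n = h_{n-1}h_{n-2}$ together with the decompositions $h_k = p_{k+2}\tau_k$; this identity is exactly the statement $\hat{f}_n\hat{f}_{n+1}' = \hat{f}_{n+2}$ once one writes $\hat{f}_{n+1}'$ as $\alpha_{n+1}p_{n+1}\alpha_n$.

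Properties (3) and (7) then follow from (6). For (3), shifting indices in (6) gives $\hat{f}_n = \hat{f}_{n-2}\hat{f}_{n-1}'$ for $n\ge 3$; applying (6) once more yields $\hat{f}_{n-1} = \hat{f}_{n-3}\hat{f}_{n-2}'$, and flipping the final letter of both sides gives $\hat{f}_{n-1}' = \hat{f}_{n-3}\hat{f}_{n-2}$. Substituting produces $\hat{f}_n = \hat{f}_{n-2}\hat{f}_{n-3}\hat{f}_{n-2}$. For (7), I would use strong induction: assuming $\hat{f}_{n-2} = \alpha_{n-2}\prod_{m=1}^{n-4}\hat{f}_m$, substitute into (3) and use $\alpha_n = \alpha_{n-2}$ to obtain $\hat{f}_n = \alpha_n \prod_{m=1}^{n-2}\hat{f}_m$; the base cases $n=3,4$ are checked by direct computation.

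The main obstacle is part (4), a genuine non-occurrence statement. Using (3) to write $\hat{f}_{n+1} = \hat{f}_{n-1}\hat{f}_{n-2}\hat{f}_{n-1}$, an alleged occurrence of $\hat{f}_n$ must, by the length identity $F_n = F_{n-1}+F_{n-2}$, start at some position $0\le j\le F_{n-1}$ in $\hat{f}_{n+1}$. The two extreme positions are ruled out immediately: $j=0$ fails because $\hat{f}_n$ and $\hat{f}_{n+1}$ begin with opposite letters $\alpha_n$ and $\alpha_{n+1} = \overline{\alpha_n}$, while $j = F_{n-1}$ fails because the length-$F_n$ suffix of $\hat{f}_{n+1}$ is $\hat{f}_{n-2}\hat{f}_{n-1}$, differing from $\hat{f}_n = \hat{f}_{n-2}\hat{f}_{n-1}'$ only in the final letter. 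The intermediate positions require a delicate case analysis on where the alleged occurrence straddles the boundaries of the triple decomposition, based on the palindromic and recursive structure of the singular words; this is the technical heart of the argument. Once (4) is established, part (5) is immediate: by (7) applied at index $n+1$, $\hat{f}_1\hat{f}_2\cdots\hat{f}_{n-1} = \alpha_{n+1}^{-1}\hat{f}_{n+1}$ is a suffix of $\hat{f}_{n+1}$, so any factor of this product is a factor of $\hat{f}_{n+1}$, and (4) then rules out $\hat{f}_n$.
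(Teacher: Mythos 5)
Your items (1), (2), (3), (6), (7), and the reduction of (5) to (4) are sound; note that the paper itself gives no proof of this proposition but quotes it from Wen and Wen \cite{Wen94}, so a self-contained argument is welcome. The identity $p_{n+2}=p_n\,\alpha_n\alpha_{n+1}\,p_{n+1}$ does encode item (6) once $\hat f_k$ is written as $\alpha_k p_k\alpha_k$, and your derivations of (3) and (7) from it are correct. One small caveat: the decomposition $h_n=p_{n+2}\tau_n$ shows these truncations \emph{are} palindromic prefixes, but identifying them with the paper's sequence $(p_n)_{n\ge 3}$ of \emph{all} palindromic prefixes (so that $|p_n|=F_n-2$) also requires the standard, but not quite ``routine-induction,'' fact that the Fibonacci word has no other palindromic prefixes.

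The genuine gap is item (4), which you leave open. Ruling out $j=0$ and $j=F_{n-1}$ is indeed immediate, but these are the easy cases; the entire content of (4) lies in the $F_{n-1}-1$ intermediate positions, and ``a delicate case analysis based on the palindromic and recursive structure'' describes the problem rather than solving it. Concretely, if $\hat f_n$ occurred at some $0<j<F_{n-1}$ in $\hat f_{n+1}=\hat f_{n-1}\hat f_{n-2}\hat f_{n-1}$, then, since $|\hat f_n|=F_{n-1}+F_{n-2}$, the occurrence would start strictly inside the first $\hat f_{n-1}$ and end inside the second, so its length-$F_{n-2}$ prefix $\hat f_{n-2}$ (by item (3)) would occur in $\hat f_{n-1}\hat f_{n-2}$ at a position strictly between $0$ and $F_{n-1}$. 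Excluding this needs, besides (4) at a lower index, control of occurrences straddling the block boundary, i.e.\ an auxiliary invariant such as ``no non-empty suffix of $\hat f_k$ is a prefix of $\hat f_{k+1}$'' or a count of occurrences of $\hat f_k$ in suitable products, proved by a simultaneous induction with (4). You cannot simply borrow the paper's Lemma~\ref{lem:common-suffix-prefix-Fib} or Lemma~\ref{lem:two-occurrences} for this, since their proofs already use items (4)--(6) of this very proposition; that would be circular. Until such an auxiliary statement is formulated and proved, (4) --- and hence (5) --- remains unproven, and these are precisely the non-occurrence properties the rest of the paper relies on.
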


The following result can be found in~\cite{Fici15}.
Note that the first factorization of the Fibonacci word $\fib$ also appears in~\cite{Wen94}.

\begin{proposition}\label{pro:fact-Fici-Fibonacci}
We have the following two factorizations of the Fibonacci word
\begin{eqnarray}
\fib
&=& \prod_{n\ge 1} \hat{f}_n \label{eq:fact1} \\
&=& 0\cdot 1 \cdot 00 \cdot 101 \cdot 00100 \cdot 10100101 \cdots \nonumber \\
&=& 010 \prod_{n\ge 2} \hat{f}_{n-1} \hat{f}_n \hat{f}_{n-1}  \label{eq:fact2} \\
&=& 010 \cdot (0 \cdot 1 \cdot 0) \cdot (1 \cdot 00 \cdot 1) \cdot (00 \cdot 101 \cdot 00) \cdot (101 \cdot 00100 \cdot 101) \cdots. \nonumber
\end{eqnarray}
Moreover, the Ziv--Lempel factorization of the Fibonacci word is given by the sequence of singular words, i.e.,
$$
z(\fib) = (\hat{f}_1, \hat{f}_2, \hat{f}_3, \ldots) .
$$
\end{proposition}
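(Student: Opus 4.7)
The plan is to address the three claims of the proposition in order: the product formula~\eqref{eq:fact1}, the regrouping~\eqref{eq:fact2}, and the identification of $z(\fib)$ with the sequence of singular words.

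For~\eqref{eq:fact1} I would proceed by induction on $n$, showing that the partial product $\prod_{m=1}^{n}\hat{f}_m$ equals the iterate $h_n$ with its last letter removed, and hence is a prefix of $\fib$ of length $F_{n+2}-1$. The base cases $n=1,2$ are checked directly. For the inductive step, property~(7) of Proposition~\ref{pro:properties-of-singular-words} gives $\hat{f}_{n+1}=\alpha_{n+1}\prod_{m=1}^{n-1}\hat{f}_m$, which by the inductive hypothesis applied at $n-1$ equals $\alpha_{n+1}$ followed by $h_{n-1}$ stripped of its last letter. Since the last letter of $h_n$ equals $\alpha_{n+1}$ (both depend only on the parity of $n$) and since $h_{n+1}=h_nh_{n-1}$, appending $\hat{f}_{n+1}$ to $\prod_{m=1}^{n}\hat{f}_m$ rebuilds $h_{n+1}$ minus its last letter, closing the induction.

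For~\eqref{eq:fact2} I would rely on~\eqref{eq:fact1}. Observe first that each block $\hat{f}_{n-1}\hat{f}_n\hat{f}_{n-1}$ is a palindrome, since each $\hat{f}_k$ is. Then I would prove by induction on $N$ that the partial product $P_N := 010\,\prod_{n=2}^{N}\hat{f}_{n-1}\hat{f}_n\hat{f}_{n-1}$ is a prefix of $\fib$, using the length identity $|P_N|=L_{N+2}-1$ (where $L_k$ is the $k$-th Lucas number), which follows from $|\hat{f}_{n-1}\hat{f}_n\hat{f}_{n-1}|=2F_{n-1}+F_n=L_n$. The inductive step amounts to substituting property~(3), i.e.\ $\hat{f}_n=\hat{f}_{n-2}\hat{f}_{n-3}\hat{f}_{n-2}$ for $n\ge 4$, into the new block and matching it, factor by factor, against the expansion of $\fib$ supplied by~\eqref{eq:fact1}.

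For the Ziv--Lempel identity, the definition of $z(\fib)$ combined with~\eqref{eq:fact1} reduces the task, for every $i\ge 1$, to showing that (a) $\hat{f}_i$ has no occurrence in $\fib$ at any position $j<|\prod_{m=1}^{i-1}\hat{f}_m|=F_{i+1}-1$, and (b) the prefix of $\hat{f}_i$ of length $|\hat{f}_i|-1$ does occur at some position $j<F_{i+1}-1$. Statement~(b) is routine, as that prefix has length $F_i-1$ and can be located directly inside the initial segment of $\fib$. For~(a), property~(5) immediately rules out positions $0\le j\le F_{i-1}-1$, where $\hat{f}_i$ would fit entirely inside $\prod_{m=1}^{i-1}\hat{f}_m$; the remaining positions $F_{i-1}\le j\le F_{i+1}-2$ correspond to occurrences straddling the boundary between $\prod_{m=1}^{i-1}\hat{f}_m$ and the following $\hat{f}_i$, all of which lie inside $\prod_{m=1}^{i}\hat{f}_m$. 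To rule these out I would use property~(7) to identify $\prod_{m=1}^{i}\hat{f}_m$ with $\hat{f}_{i+2}$ up to a leading letter, decompose $\hat{f}_{i+2}=\hat{f}_i\hat{f}_{i-1}\hat{f}_i$ via property~(3), and invoke property~(4) to show that inside this decomposition only the two obvious occurrences of $\hat{f}_i$ are possible, neither of which lies at a position $<F_{i+1}-1$.

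The main obstacle is this last step of ruling out the straddling occurrences. Property~(5) alone is not sufficient, and one must combine the structural information in properties~(3), (4) and~(7) to pin down the first occurrence of $\hat{f}_i$ in $\fib$ exactly at position $F_{i+1}-1$; the remainder of the proof is essentially bookkeeping with Fibonacci and Lucas lengths.
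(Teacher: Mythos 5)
The paper offers no proof of this proposition: it is quoted from the literature (Fici and Wen--Wen), so there is nothing internal to compare against line by line. Your outline is a reasonable self-contained derivation from Proposition~\ref{pro:properties-of-singular-words}, and it is worth noting that the tools you need are exactly the ones the paper develops later for its new results: your inductive treatment of~\eqref{eq:fact2} is in substance the computation of Proposition~\ref{pro:prefix-Fib} (the partial products $010\prod_{n=2}^{N}\hat{f}_{n-1}\hat{f}_n\hat{f}_{n-1}$ equal $\hat{f}_1\cdots\hat{f}_{N+1}\hat{f}_N\hat{f}_{N-1}$, which is a prefix of the product~\eqref{eq:fact1} because $\hat{f}_{N+2}=\hat{f}_N\hat{f}_{N-1}\hat{f}_N$), and your analysis of the $z$-factorization parallels Lemma~\ref{lem:two-occurrences}. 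One caveat on~\eqref{eq:fact2}: the inductive invariant must be this explicit product identity (or at least the exact position of the partial product inside~\eqref{eq:fact1}); the bare statement ``$P_N$ is a prefix of $\fib$'' does not propagate by itself, so your ``matching factor by factor'' should be understood as carrying that identity along. The argument for~\eqref{eq:fact1} via ``$\prod_{m=1}^n\hat{f}_m$ is $h_n$ minus its last letter'' is correct as stated.

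The one step that needs more than you claim is the exclusion of straddling occurrences. Item (4) alone does not rule out an occurrence of $\hat{f}_i$ overlapping the boundary in $\hat{f}_i\hat{f}_{i-1}$ or $\hat{f}_{i-1}\hat{f}_i$ inside $\hat{f}_{i+2}=\hat{f}_i\hat{f}_{i-1}\hat{f}_i$: such an occurrence only yields a non-empty word that is simultaneously a suffix of $\hat{f}_{i-1}$ and a prefix of $\hat{f}_i$, and killing that requires an extra ingredient --- either a suffix--prefix lemma proved by induction (this is precisely the paper's Lemma~\ref{lem:common-suffix-prefix-Fib}), or item (6), or the cancellation identity $\hat{f}_i\hat{f}_{i-1}=\alpha_{i+2}\,\hat{f}_1\cdots\hat{f}_{i-1}$, obtained by comparing items (3) and (7) at index $i+2$ and cancelling the trailing $\hat{f}_i$, which reduces the straddling case to item (5); the mirrored case then follows from palindromicity, and occurrences swallowing the whole middle block $\hat{f}_{i-1}$ are indeed excluded by (4). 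Your step (b) is fine and can be made explicit with item (6): the length-$(F_i-1)$ prefix of $\hat{f}_i$ is $\hat{f}_{i-2}$ followed by the length-$(F_{i-1}-1)$ prefix of $\hat{f}_{i-1}$, hence occurs at position $F_{i-1}-1<F_{i+1}-1$. With these additions your plan goes through.
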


As a matter of fact, the palindromic $z$-factorization of $\fib$ is easily deduced from the previous result, as shown in the next section. 
However, the palindromic $c$-factorization of $\fib$ cannot be obtained from already known results, and, to that aim, we define a sequence of specific prefixes of $\fib$.

\begin{definition}\label{def:prefix}
For all $n\ge 2$, define 
$$
g_n := 010 \prod_{2 \le m\le n-1} \hat{f}_{m-1} \hat{f}_m \hat{f}_{m-1}.
$$
From~\eqref{eq:fact2}, observe that, for all $n\ge 2$, we have 
$$
\fib = g_n \cdot (\hat{f}_{n-1} \hat{f}_n \hat{f}_{n-1}) \cdot \prod_{ m\ge n+1 } \hat{f}_{m-1} \hat{f}_m \hat{f}_{m-1}.
$$
\end{definition}

Interestingly, the prefix $g_n$ of $\fib$ can be factorized as a particular product of singular words. 

\begin{proposition}\label{pro:prefix-Fib}
For all $n\ge 2$, we have 
\begin{equation}\label{eq:prefix}
g_n = \hat{f}_1 \hat{f}_2 \cdots \hat{f}_{n-1} \hat{f}_n \hat{f}_{n-1} \hat{f}_{n-2}.
\end{equation}
\end{proposition}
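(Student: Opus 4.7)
The plan is to prove the identity by induction on $n \geq 2$. For the base case I would simply check small indices by hand: for $n = 2$ the product in Definition~\ref{def:prefix} is empty, so $g_2 = 010$, which matches $\hat{f}_1 \hat{f}_2 \hat{f}_1 = 010$ (reading the last factor $\hat{f}_{n-2}$ at $n=2$ with the natural convention $\hat{f}_0 = \varepsilon$), and similarly for $n = 3$ one checks $g_3 = 010 \cdot \hat{f}_1 \hat{f}_2 \hat{f}_1 = 010010 = \hat{f}_1 \hat{f}_2 \hat{f}_3 \hat{f}_2 \hat{f}_1$. Verifying $n=3$ by hand is useful because the inductive step I describe below requires Proposition~\ref{pro:properties-of-singular-words}(3), whose hypothesis $n+1\ge 4$ forces $n\ge 3$.

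For the inductive step, the key observation comes directly from Definition~\ref{def:prefix}: the product defining $g_{n+1}$ extends that of $g_n$ by exactly one extra block, so
$$
g_{n+1} = g_n \cdot \hat{f}_{n-1} \hat{f}_n \hat{f}_{n-1}.
$$
Plugging in the induction hypothesis~\eqref{eq:prefix} yields
$$
g_{n+1} = \hat{f}_1 \hat{f}_2 \cdots \hat{f}_{n-1} \hat{f}_n \cdot \hat{f}_{n-1} \hat{f}_{n-2} \cdot \hat{f}_{n-1} \hat{f}_n \hat{f}_{n-1}.
$$
Now Proposition~\ref{pro:properties-of-singular-words}(3), applied at index $n+1 \ge 4$, gives $\hat{f}_{n+1} = \hat{f}_{n-1} \hat{f}_{n-2} \hat{f}_{n-1}$, so the three middle factors collapse into $\hat{f}_{n+1}$, producing
$$
g_{n+1} = \hat{f}_1 \hat{f}_2 \cdots \hat{f}_n \cdot \hat{f}_{n+1} \cdot \hat{f}_n \hat{f}_{n-1},
$$
which is precisely the claimed identity with $n$ replaced by $n+1$.

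There is no real obstacle here: the entire argument hinges on the happy coincidence that the three consecutive factors $\hat{f}_{n-1}\hat{f}_{n-2}\hat{f}_{n-1}$ appearing after applying the induction hypothesis are exactly the ones needed to form $\hat{f}_{n+1}$ via the recursion of Proposition~\ref{pro:properties-of-singular-words}(3). The only thing requiring attention is bookkeeping around the smallest indices, which is why I would spell out $n=2$ (and optionally $n=3$) as base cases rather than trying to start the induction at $n=2$ directly.
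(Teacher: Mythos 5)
Your proof is correct and follows essentially the same route as the paper: induction on $n$ with base cases $n=2$ and $n=3$ checked by hand, then the step $g_{n+1}=g_n\,\hat{f}_{n-1}\hat{f}_{n}\hat{f}_{n-1}$ combined with the recursion $\hat{f}_{n+1}=\hat{f}_{n-1}\hat{f}_{n-2}\hat{f}_{n-1}$ from Proposition~\ref{pro:properties-of-singular-words}(3). Your remark that $n=3$ must be a separate base case (so the recursion's hypothesis $n+1\ge 4$ is available in the inductive step) matches exactly what the paper does.
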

\begin{proof}
Proceed by induction on $n\ge 2$. 
The result holds for $n=2$ because $g_2=010= \hat{f}_1 \hat{f}_2 \hat{f}_1$. 
For $n=3$, we get $g_3= 010 \cdot (0\cdot 1 \cdot 0)$ by Definition~\ref{def:prefix} and therefore
$$
g_3 = 0 \cdot 1 \cdot 00 \cdot 1 \cdot 0 = \hat{f}_1 \hat{f}_2 \hat{f}_3 \hat{f}_2 \hat{f}_1,
$$ 
as desired. 
%
%
Assume that $n\ge 3$. 
Now we suppose the result holds up to $n$ and we show it still holds for $n+1$. 
Using Definition~\ref{def:prefix}, we have
$$
g_{n+1} = g_n  ( \hat{f}_{n-1} \hat{f}_{n} \hat{f}_{n-1} ).
$$
By the induction hypothesis, we get
\begin{align*}
g_{n+1} 
&= (\hat{f}_1 \hat{f}_2 \cdots \hat{f}_{n-1} \hat{f}_n \hat{f}_{n-1} \hat{f}_{n-2})  ( \hat{f}_{n-1} \hat{f}_{n} \hat{f}_{n-1} ) \\
&= \hat{f}_1 \hat{f}_2 \cdots \hat{f}_{n-1} \hat{f}_n  ( \hat{f}_{n-1} \hat{f}_{n-2} \hat{f}_{n-1} ) \hat{f}_{n} \hat{f}_{n-1}.
\end{align*}
Since $n+1\ge 4$, Proposition~\ref{pro:properties-of-singular-words} implies that $\hat{f}_{n+1}=\hat{f}_{n-1} \hat{f}_{n-2} \hat{f}_{n-1}$, and we deduce that
$$
g_{n+1} = \hat{f}_1 \hat{f}_2 \cdots \hat{f}_{n-1} \hat{f}_n   \hat{f}_{n+1}  \hat{f}_{n} \hat{f}_{n-1},
$$
which ends the proof. 
\end{proof}

\subsection{The palindromic $z$-factorization of the Fibonacci word}

In this (very) short section, we obtain the palindromic $z$-factorization of the Fibonacci word, which easily follows from already known results. 

\begin{theorem}\label{thm:pal-z-fact-fibonacci}
The palindromic $z$-factorization of the Fibonacci word $\fib$ is 
$$
pz(\fib)=(\hat{f}_1, \hat{f}_2, \hat{f}_3, \ldots).
$$ 
\end{theorem}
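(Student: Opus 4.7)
The plan is to leverage Proposition~\ref{pro:fact-Fici-Fibonacci}, which already tells us that the ordinary Ziv--Lempel factorization $z(\fib)$ is precisely the sequence of singular words $(\hat{f}_1, \hat{f}_2, \hat{f}_3, \ldots)$. The extra ingredient needed to promote this to a \emph{palindromic} $z$-factorization is that each $\hat{f}_n$ is a palindrome, which is item~(1) of Proposition~\ref{pro:properties-of-singular-words}. So the task reduces to checking two things for every $i \geq 1$: first, that $\hat{f}_i$ is indeed a palindromic prefix of $\hat{f}_i \hat{f}_{i+1}\cdots$ with no occurrence in $\fib$ at any position $j < |\hat{f}_1 \cdots \hat{f}_{i-1}|$; and second, that no strictly shorter palindromic prefix of $\hat{f}_i \hat{f}_{i+1}\cdots$ satisfies the same condition.

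For the first point, I would combine item~(1) of Proposition~\ref{pro:properties-of-singular-words} (so $\hat{f}_i$ is a palindrome) with the fact that $z(\fib) = (\hat{f}_1, \hat{f}_2, \ldots)$. By the defining property of the $z$-factorization, $\hat{f}_i$ has no occurrence at any position $j < |\hat{f}_1\cdots \hat{f}_{i-1}|$, so it is a valid candidate in the palindromic setting as well.

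For the second point, I would observe that any palindromic prefix of $\hat{f}_i \hat{f}_{i+1}\cdots$ of length strictly less than $|\hat{f}_i|$ is in particular a proper prefix of $\hat{f}_i$. But again by $z(\fib) = (\hat{f}_1, \hat{f}_2, \ldots)$ and the minimality clause in the definition of the $z$-factorization, every proper prefix of $\hat{f}_i$ does occur in $\fib$ at some position $j < |\hat{f}_1\cdots \hat{f}_{i-1}|$. Hence no shorter palindromic candidate qualifies, and $\hat{f}_i$ is the shortest palindromic prefix meeting the requirement.

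There is no real obstacle here: the theorem is essentially a corollary of Proposition~\ref{pro:fact-Fici-Fibonacci} together with the palindromicity of singular words. The only subtlety worth spelling out is the reduction of "palindromic prefix of $\hat{f}_i\hat{f}_{i+1}\cdots$ shorter than $\hat{f}_i$" to "proper prefix of $\hat{f}_i$"; once that observation is in place, minimality transfers for free from the ordinary $z$-factorization to the palindromic one.
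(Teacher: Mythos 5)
Your proposal is correct and follows essentially the same route as the paper: the paper's proof simply cites Proposition~\ref{pro:fact-Fici-Fibonacci} for $z(\fib)=(\hat{f}_1,\hat{f}_2,\ldots)$ and Proposition~\ref{pro:properties-of-singular-words}(1) for palindromicity, concluding immediately that this is also $pz(\fib)$. Your extra step—reducing ``shorter palindromic prefix'' to ``proper prefix of $\hat{f}_i$'' so that the minimality clause transfers—is exactly the observation left implicit in the paper, so you have merely spelled out the same argument in more detail.
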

\begin{proof}
From Proposition~\ref{pro:fact-Fici-Fibonacci}, $z(\fib)=(\hat{f}_1, \hat{f}_2, \hat{f}_3, \ldots)$. 
Since the factors $\hat{f}_n$ are all palindromes by Proposition~\ref{pro:properties-of-singular-words}, this factorization is also $pz(\fib)$.  
\end{proof}

\subsection{The palindromic $c$-factorization of the Fibonacci word}

In this section, we show that, after the prefix of length $3$, the factorization~\eqref{eq:fact2} coincides with the factorization $pc(\fib)$.  
Note that in this case $pc(\fib)$ and $c(\fib)$ are not the same, since the factors in $c(\fib)$ are not palindromes.

\begin{lemma}\label{lem:common-suffix-prefix-Fib}
For all $n\ge 1$, the only suffix of $\hat{f}_{n}$ that is also a prefix of $\hat{f}_{n+1}$ is the empty word. 
\end{lemma}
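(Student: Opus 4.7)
The plan is to prove the claim by strong induction on $n$. For the small indices $n\in\{1,2,3\}$, the singular words $\hat{f}_1=0$, $\hat{f}_2=1$, $\hat{f}_3=00$, $\hat{f}_4=101$ are short enough that one verifies directly by enumeration that no nonempty suffix of $\hat{f}_n$ coincides with a prefix of $\hat{f}_{n+1}$. These direct checks serve as base cases for every invocation of the induction hypothesis below.

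For the inductive step, fix $n\ge 4$, assume the claim at all smaller indices, and suppose for contradiction that $u$ is a nonempty word that is both a suffix of $\hat{f}_n$ and a prefix of $\hat{f}_{n+1}$. By Proposition~\ref{pro:properties-of-singular-words}(3) one has the two ternary decompositions $\hat{f}_n=\hat{f}_{n-2}\hat{f}_{n-3}\hat{f}_{n-2}$ and $\hat{f}_{n+1}=\hat{f}_{n-1}\hat{f}_{n-2}\hat{f}_{n-1}$. I would then split on $|u|$ according to the thresholds $F_{n-2}$ and $F_{n-1}$. If $|u|>F_{n-1}$, then $\hat{f}_{n-1}$ is a prefix of $u$ (reading off the decomposition of $\hat{f}_{n+1}$), so $\hat{f}_{n-1}$ is a factor of $\hat{f}_n$, contradicting Proposition~\ref{pro:properties-of-singular-words}(4). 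If $F_{n-2}<|u|\le F_{n-1}$, then $u$ is a prefix of $\hat{f}_{n-1}$, and being a sufficiently long suffix of $\hat{f}_n$ forces the final block $\hat{f}_{n-2}$ of $\hat{f}_n$ to be a suffix of $u$; hence $\hat{f}_{n-2}$ is a factor of $\hat{f}_{n-1}$, again contradicting (4). Finally, if $|u|\le F_{n-2}$, then $u$ is simultaneously a prefix of $\hat{f}_{n-1}$ and a suffix of $\hat{f}_{n-2}$, so the induction hypothesis applied at index $n-2$ yields $u=\varepsilon$.

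The insight making the argument clean is that parts (3) and (4) of Proposition~\ref{pro:properties-of-singular-words} supply exactly the combinatorial input needed: the ternary decompositions of $\hat{f}_n$ and $\hat{f}_{n+1}$ pin down the shapes of prefixes and suffixes at the natural length scales $F_{n-2}$ and $F_{n-1}$, while the non-factoring property (4) eliminates the two longer regimes. The only step requiring care is bookkeeping the index shift when invoking (4) — namely, using that $\hat{f}_{n-1}$ is not a factor of $\hat{f}_n$ in the first subcase and that $\hat{f}_{n-2}$ is not a factor of $\hat{f}_{n-1}$ in the second — together with checking that the base cases $n\in\{1,2,3\}$ cover every application of the induction hypothesis (at index $n-2$ for $n\ge 4$).
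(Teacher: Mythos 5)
Your proof is correct, but it follows a slightly different route than the paper's. The paper also inducts on $n$ and splits on the length of the common word, but it uses only the fact (from Proposition~\ref{pro:properties-of-singular-words}(6) together with palindromicity) that $\hat{f}_{n+1}$ begins and ends with $\hat{f}_{n-1}$, giving just two length regimes relative to $|\hat{f}_{n-1}|$: in the long regime $\hat{f}_{n-1}$ becomes a factor of $\hat{f}_n$, contradicting item (4), and in the short regime it passes to the reversal $x^R$, which (since $\hat{f}_{n-1}$ and $\hat{f}_n$ are palindromes) is a suffix of $\hat{f}_{n-1}$ and a prefix of $\hat{f}_n$, contradicting the induction hypothesis one index down. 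You instead invoke the ternary decompositions $\hat{f}_n=\hat{f}_{n-2}\hat{f}_{n-3}\hat{f}_{n-2}$ and $\hat{f}_{n+1}=\hat{f}_{n-1}\hat{f}_{n-2}\hat{f}_{n-1}$ from item (3), which buys you an argument that never uses reversal or palindromicity: the two longer regimes ($|u|>F_{n-1}$ and $F_{n-2}<|u|\le F_{n-1}$) are both killed by item (4), and the short regime reduces directly to the induction hypothesis two indices down. The price is one extra length regime, extra base cases ($n\le 3$ rather than just $n=1$, correctly accounted for since (3) needs $n\ge 4$ and the step-two descent needs $n-2\ge 2$), and a strong induction; the gain is that the argument is purely about the block structure and does not lean on the words being palindromes. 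Both arguments are sound, and your case analysis by the thresholds $F_{n-2}$ and $F_{n-1}$ is exhaustive and each case is justified by exactly the cited items.
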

\begin{proof}
We proceed by induction on $n\ge 1$.
From Definition~\ref{def:Fibonacci}, the first two singular words are  $\hat{f}_{1}=0$ and $\hat{f}_{2}=1$, so the result can be checked by hand for $n=1$.

Now suppose that $n\ge 2$, and that the only suffix of $\hat{f}_{k}$ that is also a prefix of $\hat{f}_{k+1}$ is the empty word, for all $k\in\{1, \ldots, n-1\}$.
We show that the result still holds for $k=n$. 
Proceed by contradiction and suppose there exists a word $x\in\{0,1,\ldots,m-1\}^{*}$ which is a non-empty suffix of $\hat{f}_{n}$ and a non-empty prefix of $\hat{f}_{n+1}$. We have $1 \le |x| \le |\hat{f}_{n}|$. 
Using Proposition~\ref{pro:properties-of-singular-words}(6), $\hat{f}_{n+1}$ starts and ends with $\hat{f}_{n-1}$. 

If $1 \le |x| \le |\hat{f}_{n-1}|$, then $x$ is a prefix of $\hat{f}_{n-1}$ (recall that $x$ is a prefix of $\hat{f}_{n+1}$).
Consequently, $x^R$ is a non-empty suffix of $\hat{f}_{n-1}$ and a non-empty prefix of $\hat{f}_{n}$. This contradicts the inductive assumption. 

If $|\hat{f}_{n-1}| \le |x| \le |\hat{f}_{n}|$, then $\hat{f}_{n-1}$ is a prefix of $x$ (recall that $x$ is a prefix of $\hat{f}_{n+1}$). 
In particular, $\hat{f}_{n-1}$ is a factor of $x$, and also a factor of $\hat{f}_{n}$ (recall that $x$ is a suffix of $\hat{f}_{n}$).
This contradicts Proposition~\ref{pro:properties-of-singular-words}(4).
\end{proof}

In the following lemma, recall that we start indexing words at $0$.

\begin{lemma}\label{lem:two-occurrences}
Let $n\ge 1$. 
There are exactly two occurrences of the factor $\hat{f}_{n}$ inside the word $g_{n+1}$: one at position $\sum_{m=1}^{n-1} | \hat{f}_{m} | $, the other at position $\sum_{m=1}^{n+1} | \hat{f}_{m} | $.
\end{lemma}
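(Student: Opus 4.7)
My plan is to first invoke Proposition~\ref{pro:prefix-Fib} applied with index $n+1$ (using the convention $\hat{f}_0 := \varepsilon$ where it arises) to rewrite
\[
g_{n+1} = u \cdot \hat{f}_n \cdot \hat{f}_{n+1} \cdot \hat{f}_n \cdot \hat{f}_{n-1}, \qquad \text{where } u := \hat{f}_1 \hat{f}_2 \cdots \hat{f}_{n-1}.
\]
The two stated occurrences of $\hat{f}_n$ are then immediately visible in this decomposition, at positions $|u| = \sum_{m=1}^{n-1}|\hat{f}_m|$ and $|u|+|\hat{f}_n|+|\hat{f}_{n+1}| = \sum_{m=1}^{n+1}|\hat{f}_m|$, so the genuine content of the lemma is uniqueness.

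For uniqueness, I classify the starting position $p$ of a hypothetical additional occurrence of $\hat{f}_n$ in $g_{n+1}$ according to which of the five blocks contains $p$. An occurrence lying entirely inside $u$ is excluded by Proposition~\ref{pro:properties-of-singular-words}(5). An occurrence lying entirely inside the central $\hat{f}_{n+1}$ block (which also covers the position $p = |u|+|\hat{f}_n|$, where $\hat{f}_n$ would appear as a prefix of $\hat{f}_{n+1}$) is excluded by Proposition~\ref{pro:properties-of-singular-words}(4). In all remaining cases $p$ forces the occurrence to straddle one of the block boundaries, and in each such case I extract a non-empty common suffix-prefix of two consecutive singular words, which is ruled out by Lemma~\ref{lem:common-suffix-prefix-Fib}.

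Concretely, a crossing from the first $\hat{f}_n$ into $\hat{f}_{n+1}$ directly produces an equality ``(suffix of $\hat{f}_n$) $=$ (prefix of $\hat{f}_{n+1}$)'' of length $p-|u|>0$, contradicting Lemma~\ref{lem:common-suffix-prefix-Fib}. The symmetric crossing from $\hat{f}_{n+1}$ into the second $\hat{f}_n$ yields ``(suffix of $\hat{f}_{n+1}$) $=$ (prefix of $\hat{f}_n$)'', which, after using the palindromicity of both words (Proposition~\ref{pro:properties-of-singular-words}(1)), reduces to the same Lemma~\ref{lem:common-suffix-prefix-Fib}. The crossing from the second $\hat{f}_n$ into the trailing $\hat{f}_{n-1}$ is handled by the same palindrome trick, now reducing to Lemma~\ref{lem:common-suffix-prefix-Fib} at index $n-1$. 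Finally, a crossing from $u$ into the first $\hat{f}_n$ gives a non-empty suffix of $u$ matching a prefix of $\hat{f}_n$: if this overlap has length at most $|\hat{f}_{n-1}|$ it is a suffix of $\hat{f}_{n-1}$ equal to a prefix of $\hat{f}_n$, again contradicting Lemma~\ref{lem:common-suffix-prefix-Fib} at index $n-1$; if it exceeds $|\hat{f}_{n-1}|$, then $\hat{f}_{n-1}$ would occur as a factor of $\hat{f}_n$, contradicting Proposition~\ref{pro:properties-of-singular-words}(4) at index $n-1$.

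The main obstacle I anticipate is the careful bookkeeping of these overlap situations: specifying for each sub-range of $p$ exactly which boundary is crossed, which two singular words are involved, and what length the forced overlap must have, then translating it (via palindromicity when necessary) into a form excluded by Lemma~\ref{lem:common-suffix-prefix-Fib} or Proposition~\ref{pro:properties-of-singular-words}(4). The boundary case $n=1$, where $u$ and $\hat{f}_{n-1}$ both degenerate to $\varepsilon$ and several crossings become vacuous, is verified directly against $g_2 = 010$.
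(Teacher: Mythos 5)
Your proposal is correct, and its skeleton matches the paper's: decompose $g_{n+1}=\hat{f}_1\cdots\hat{f}_{n-1}\,\hat{f}_n\hat{f}_{n+1}\hat{f}_n\hat{f}_{n-1}$ via Proposition~\ref{pro:prefix-Fib}, read off the two claimed occurrences, rule out occurrences inside a block by Proposition~\ref{pro:properties-of-singular-words}(4)--(5) and a length count, and rule out boundary-straddling occurrences via Lemma~\ref{lem:common-suffix-prefix-Fib}. Where you genuinely diverge is in how the straddling cases are discharged. The paper's Case 4 (crossing from $\hat{f}_1\cdots\hat{f}_{n-1}$ into the first $\hat{f}_n$) invokes the identity $|\hat{f}_n|=|\hat{f}_{n-1}|+|\hat{f}_{n-2}|$ to confine the putative occurrence to $\hat{f}_{n-2}\hat{f}_{n-1}\hat{f}_n$ and then splits on where it starts, and the other crossings (Cases 5--7) are reduced to this one by mirror-image arguments, Case 5 additionally using $\hat{f}_{n+1}=\hat{f}_{n-1}\hat{f}_{n-2}\hat{f}_{n-1}$. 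You instead extract, for each crossing, the overlap word itself---a non-empty suffix of one singular word equal to a non-empty prefix of the next---and contradict Lemma~\ref{lem:common-suffix-prefix-Fib} directly (at index $n$ or $n-1$), using palindromicity only to flip suffixes into prefixes; your $u$-crossing is handled by the split on overlap length, the long branch giving $\hat{f}_{n-1}$ as a factor of $\hat{f}_n$, contrary to Proposition~\ref{pro:properties-of-singular-words}(4). This buys a shorter argument that avoids the Fibonacci length identity and the chain of reversal reductions, and it absorbs $n=2$ into the general scheme (crossings are vacuous for a length-one factor), whereas the paper checks $n=1,2$ by hand; the price is exactly the bookkeeping you flag, plus the convention $\hat{f}_0=\varepsilon$ needed to read Proposition~\ref{pro:prefix-Fib} at the boundary case $n=1$, which you rightly settle by inspecting $g_2=010$ directly.
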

\begin{proof}
If $n=1$, then $g_2 = 0 1 0$ and the factor $\hat{f}_{1}=0$ occurs in $g_2$ at positions $0$ and $2=|\hat{f}_{1}|+|\hat{f}_{2}|$.
If $n=2$, then $g_3 = 0 1 00 1 0 = g_{3,0} \cdots g_{3,5}$ with $g_{3,i}\in\{0,1\}$ for all $1\le i \le 5$. 
There are exactly two occurrences of $\hat{f}_{2}=1$ in $g_3$ starting either at position $1=|\hat{f}_{1}|$ or $4=|\hat{f}_{1}|+|\hat{f}_{2}|+|\hat{f}_{3}|$.

Suppose that $n\ge 3$. 
Using~\eqref{eq:prefix}, let us write $g_{n+1} = p \hat{f}_n \hat{f}_{n+1} \hat{f}_n \hat{f}_{n-1}$ with $p = \hat{f}_1 \hat{f}_2 \cdots \hat{f}_{n-1}$. 
Thanks to this factorization, we immediately see that $\hat{f}_{n}$ occurs at least twice as a factor of $g_{n+1}$: one starting at position $|p|=\sum_{m=1}^{n-1} | \hat{f}_{m} |$, the other beginning at position $|p\hat{f}_n \hat{f}_{n+1}|=\sum_{m=1}^{n+1} | \hat{f}_{m} |$. 
We now show that there are no other occurrences of $\hat{f}_{n}$ as a factor of $g_{n+1}$. There are several cases to consider. 

\textbf{Case 1}. The word $\hat{f}_{n}$ cannot be a factor of $p$, otherwise it contradicts Proposition~\ref{pro:properties-of-singular-words}(5).

\textbf{Case 2}. The word $\hat{f}_{n}$ cannot be a factor of $\hat{f}_{n+1}$, otherwise it contradicts Proposition~\ref{pro:properties-of-singular-words}(4).

\textbf{Case 3}. The word $\hat{f}_{n}$ cannot be a factor of $\hat{f}_{n-1}$ since $|\hat{f}_{n-1}|=F_{n-1}<F_{n}=|\hat{f}_{n}|$ by Proposition~\ref{pro:properties-of-singular-words}(2) (note that $n-1\ge 2$).

\textbf{Case 4}. Suppose that $\hat{f}_{n}$ is a factor of $p\hat{f}_{n}$, overlapping $p$ and $\hat{f}_{n}$.
Using Proposition~\ref{pro:properties-of-singular-words}(2) ($n-2\ge 1$), we know that
$$
|\hat{f}_{n}| = F_{n} = F_{n-1} + F_{n-2} = |\hat{f}_{n-1}| + |\hat{f}_{n-2}|.  
$$
Consequently, $\hat{f}_{n}$ is a factor of $\hat{f}_{n-2}\hat{f}_{n-1}\hat{f}_{n}$.
If $\hat{f}_{n}$ starts somewhere within $\hat{f}_{n-2}$, or if $\hat{f}_{n}$ starts with the first letter of $\hat{f}_{n-1}$, then $\hat{f}_{n-1}$ is a factor of $\hat{f}_{n}$, which contradicts Proposition~\ref{pro:properties-of-singular-words}(4).
Therefore $\hat{f}_{n}$ must be a factor of $\hat{f}_{n-1}\hat{f}_{n}$, i.e., there exist a non-empty suffix $x$ of $\hat{f}_{n-1}$ and a non-empty prefix $y$ of $\hat{f}_{n}$ such that $\hat{f}_{n} = xy$.
Then $x$ is also a non-empty prefix of $\hat{f}_{n}$, which contradicts Lemma~\ref{lem:common-suffix-prefix-Fib}.

\textbf{Case 5}. Suppose that $\hat{f}_{n}$ is a factor of $\hat{f}_{n}\hat{f}_{n+1}$, overlapping $\hat{f}_{n}$ and $\hat{f}_{n+1}$.
This case is similar to the fourth case above. 
Indeed, observe that, since $n+1\ge 4$, Proposition~\ref{pro:properties-of-singular-words}(3) gives
$$
\hat{f}_{n+1} = \hat{f}_{n-1} \hat{f}_{n-2} \hat{f}_{n-1}.
$$
Using Proposition~\ref{pro:properties-of-singular-words} again, we know that $|\hat{f}_{n}| = F_{n} = F_{n-1} + F_{n-2} = |\hat{f}_{n-1}| + |\hat{f}_{n-2}|$. Consequently, $\hat{f}_{n}$ is a factor of $\hat{f}_{n}\hat{f}_{n-1}\hat{f}_{n-2}$, so $(\hat{f}_{n})^R=\hat{f}_{n}$ is a factor of $(\hat{f}_{n}\hat{f}_{n-1}\hat{f}_{n-2})^R=\hat{f}_{n-2}\hat{f}_{n-1}\hat{f}_{n}$, which is impossible due to the fourth case.

\textbf{Case 6}. Suppose that $\hat{f}_{n}$ is a factor of $\hat{f}_{n+1}\hat{f}_{n}$, overlapping $\hat{f}_{n+1}$ and $\hat{f}_{n}$.
In this case, $(\hat{f}_{n})^{R}=\hat{f}_{n}$ is a factor of $(\hat{f}_{n+1}\hat{f}_{n})^{R}=\hat{f}_{n}\hat{f}_{n+1}$ since the singular words are palindromes. 
As in the fifth case, we raise a contradiction.

\textbf{Case 7}. Suppose that $\hat{f}_{n}$ is a factor of $\hat{f}_{n}\hat{f}_{n-1}$, overlapping $\hat{f}_{n}$ and $\hat{f}_{n-1}$.
In this case, $(\hat{f}_{n})^{R}=\hat{f}_{n}$ is a factor of $(\hat{f}_{n}\hat{f}_{n-1})^{R}=\hat{f}_{n-1}\hat{f}_{n}$ since the singular words are palindromes. 
As in the fourth case, we reach a contradiction.
\end{proof}

We prove a technical result before getting the palindromic $c$-factorization of $\fib$. 


\begin{proposition}\label{pro:nothing-to-add}
Let $n\ge 2$. 
Let $w$ be a non-empty common finite prefix of the infinite words 
$$
 \hat{f}_{n-1}   \cdot ( \hat{f}_{n} \hat{f}_{n+1} \hat{f}_{n}  ) \cdot ( \hat{f}_{n+1} \hat{f}_{n+2} \hat{f}_{n+1}  ) \cdots
$$
and
$$
 ( \hat{f}_{n+1} \hat{f}_{n+2} \hat{f}_{n+1}  ) \cdot  ( \hat{f}_{n+2} \hat{f}_{n+3} \hat{f}_{n+2}  ) \cdots.
$$
Then $\hat{f}_{n} \hat{f}_{n+1} \hat{f}_{n}w$ is not a palindrome.
\end{proposition}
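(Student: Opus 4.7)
The plan is a proof by contradiction. Suppose $\hat{f}_n \hat{f}_{n+1} \hat{f}_n w$ is a palindrome, and set $u := \hat{f}_n \hat{f}_{n+1} \hat{f}_n$, which is itself a palindrome by Proposition~\ref{pro:properties-of-singular-words}(1). The aim is to extract two competing descriptions of $w$---one from the palindromicity of $uw$, one from $w$ being a common prefix of the two infinite words---and show that they are incompatible, invoking Lemma~\ref{lem:common-suffix-prefix-Fib} and Proposition~\ref{pro:properties-of-singular-words}(4).

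First I would bound $|w|$. By Proposition~\ref{pro:properties-of-singular-words}(6), one has $\hat{f}_{n+1} = \hat{f}_{n-1} \hat{f}_n'$, where $\hat{f}_n'$ differs from $\hat{f}_n$ only in the last letter. Since the first infinite word in the statement begins with $\hat{f}_{n-1} \hat{f}_n$ and the second begins with $\hat{f}_{n+1}$, both words of length $F_{n+1}$, their longest common prefix is exactly $F_{n+1}-1$. Hence $|w| \leq F_{n+1} - 1$; in particular $|w| < F_{n+1} = |\hat{f}_{n+1}| \leq |u|$, so $w$ is a strict prefix of $\hat{f}_{n+1}$. Next, from $uw = (uw)^R = w^R u^R = w^R u$ together with $|w| \leq |u|$, comparing the first $|w|$ characters yields that $w^R$ is a prefix of $u$; since $u$ is a palindrome, this forces $w$ to be the suffix of $u$ of length $|w|$.

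The argument then splits on the size of $|w|$ relative to $F_n$. If $|w| \leq F_n$, then $w$ is a suffix of the last $\hat{f}_n$ block inside $u$, hence a suffix of $\hat{f}_n$; combined with $w$ being a prefix of $\hat{f}_{n+1}$, Lemma~\ref{lem:common-suffix-prefix-Fib} forces $w = \varepsilon$, contradicting non-emptiness. If $F_n < |w| \leq F_{n+1} - 1$, then the suffix-of-$u$ description reaches into the middle $\hat{f}_{n+1}$ block of $u$ and takes the form $w = s\, \hat{f}_n$, where $s$ is a non-empty suffix of $\hat{f}_{n+1}$; combined with $w$ being a prefix of $\hat{f}_{n+1}$, this places a copy of $\hat{f}_n$ inside $\hat{f}_{n+1}$ starting at position $|w| - F_n$, contradicting Proposition~\ref{pro:properties-of-singular-words}(4). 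The main obstacle will be cleanly establishing the bound $|w| \leq F_{n+1} - 1$: without it, $w$ could a priori extend into $\hat{f}_{n+2}$ along the second infinite word and the case analysis would grow considerably, but Proposition~\ref{pro:properties-of-singular-words}(6) is exactly what confines $w$ inside $\hat{f}_{n+1}$, after which the two classical properties of singular words (no non-trivial common suffix/prefix of $\hat{f}_n$ and $\hat{f}_{n+1}$, and no occurrence of $\hat{f}_n$ in $\hat{f}_{n+1}$) close both cases.
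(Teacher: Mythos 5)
Your proposal is correct and follows essentially the same route as the paper: the bound $0<|w|<F_{n+1}$ via Proposition~\ref{pro:properties-of-singular-words}(6), the palindrome identity $\hat{f}_{n}\hat{f}_{n+1}\hat{f}_{n}w = w^{R}\hat{f}_{n}\hat{f}_{n+1}\hat{f}_{n}$, and a contradiction from Lemma~\ref{lem:common-suffix-prefix-Fib} or Proposition~\ref{pro:properties-of-singular-words}(4). Your explicit case split on $|w|\le F_n$ versus $|w|>F_n$ simply spells out the overlap argument the paper states tersely.
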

\begin{proof}
Let us define 
$$
u_{n+1} := \hat{f}_{n} \hat{f}_{n+1} \hat{f}_{n}w
$$
where $w$ is taken as in the statement.
Using Proposition~\ref{pro:properties-of-singular-words}, since $\hat{f}_{n+1} = \hat{f}_{n-1}\hat{f}_{n}'$ and $|\hat{f}_{n-1}| + |\hat{f}_{n}| = |\hat{f}_{n+1}|$, we know that $0 < |w| <  |\hat{f}_{n+1}|$.
Now proceed by contradiction and suppose that $u_{n+1}$ is a palindrome.
Then we have 
\begin{align}\label{eq:nothing-to-add}
\hat{f}_{n} \hat{f}_{n+1} \hat{f}_{n}w
= u_{n+1}
= u_{n+1}^R
= w^R \hat{f}_{n} \hat{f}_{n+1} \hat{f}_{n}.
\end{align}
The bounds on the length of $w$ lead to an overlap between the occurrence of $\hat{f}_{n+1}$ at position $|\hat{f}_{n}|$ (in the leftmost word in~\eqref{eq:nothing-to-add}), and the occurrence $\hat{f}_{n}$ at position $|w^R|=|w|$ (in the rightmost word in~\eqref{eq:nothing-to-add}). 
This is impossible due to either Proposition~\ref{pro:properties-of-singular-words}(4), or Lemma~\ref{lem:common-suffix-prefix-Fib}.
\end{proof}

\begin{theorem}\label{thm:pal-c-fact-fibonacci}
Let $pc(\fib)=(c_{-1}, c_0, c_1, c_2, \ldots)$ denote the palindromic $c$-factorization of the Fibonacci word $\fib$. 
Then, we have $c_{-1}=0$, $c_0=1$, $c_1=0$ and, for all 
$n\ge 2$, 
$$
c_n = \hat{f}_{n-1}  \hat{f}_{n} \hat{f}_{n-1}.
$$
\end{theorem}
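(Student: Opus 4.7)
The plan is to dispatch the initial factors $c_{-1}, c_0, c_1$ by direct inspection, and then to prove $c_n = \hat{f}_{n-1}\hat{f}_n\hat{f}_{n-1}$ for $n\ge 2$ by induction on $n$. For the base cases, one checks that $c_{-1}=0$ (there are no preceding positions), $c_0=1$ (the candidate $1$ does not occur at position~$0$ of $\fib$), and $c_1=0$ (the palindrome $0$ occurs at position $0$, but $00$ occurs at neither position $0$ nor position $1$). For the inductive step, I would invoke Definition~\ref{def:prefix} to observe that once the first $n+1$ factors have been produced, the processed prefix is exactly $g_n$ and the remaining infinite word is $R_n := (\hat{f}_{n-1}\hat{f}_n\hat{f}_{n-1})\cdot(\hat{f}_n\hat{f}_{n+1}\hat{f}_n)\cdots$.

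First I would verify that the candidate $\hat{f}_{n-1}\hat{f}_n\hat{f}_{n-1}$ satisfies the three requirements to be a palindromic $c$-factor: it is a palindrome (as a sandwich of palindromes by Proposition~\ref{pro:properties-of-singular-words}(1)), it is a prefix of $R_n$ by construction, and combining~\eqref{eq:fact1} with the identity $\hat{f}_{n+1}=\hat{f}_{n-1}\hat{f}_{n-2}\hat{f}_{n-1}$ from Proposition~\ref{pro:properties-of-singular-words}(3) shows that it occurs in $\fib$ at position $\sum_{m=1}^{n-2}|\hat{f}_m|$, which is strictly less than $|g_n|$.

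The crux is to rule out any strictly longer palindromic prefix. Suppose $P = \hat{f}_{n-1}\hat{f}_n\hat{f}_{n-1}\,w$ with $w$ non-empty is a palindromic prefix of $R_n$ occurring at some position $j<|g_n|$ in $\fib$. The first $|\hat{f}_{n-1}\hat{f}_n\hat{f}_{n-1}|$ letters of that occurrence lie inside $g_{n+1}=g_n\cdot\hat{f}_{n-1}\hat{f}_n\hat{f}_{n-1}$, so they yield an occurrence of $\hat{f}_{n-1}\hat{f}_n\hat{f}_{n-1}$ in $g_{n+1}$ at position $j$. Since any such occurrence produces (with offset $|\hat{f}_{n-1}|$) an occurrence of $\hat{f}_n$ in $g_{n+1}$, Lemma~\ref{lem:two-occurrences} leaves only two possibilities for $j$, namely $\sum_{m=1}^{n-2}|\hat{f}_m|$ and $|g_n|$. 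The inequality $j<|g_n|$ then pins $j = \sum_{m=1}^{n-2}|\hat{f}_m|$.

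With $j$ fixed, $w$ must simultaneously match the suffix of $\fib$ starting at $j+|\hat{f}_{n-1}\hat{f}_n\hat{f}_{n-1}|$ and the continuation of $R_n$ after its leading $\hat{f}_{n-1}\hat{f}_n\hat{f}_{n-1}$. A short computation using $\hat{f}_{n+1}=\hat{f}_{n-1}\hat{f}_{n-2}\hat{f}_{n-1}$ and $\hat{f}_{n+2}=\hat{f}_n\hat{f}_{n-1}\hat{f}_n$ identifies these two infinite suffixes with, respectively, the two infinite words in the hypothesis of Proposition~\ref{pro:nothing-to-add} after replacing $n$ by $n-1$. For $n\ge 3$, the re-indexed proposition immediately gives that $\hat{f}_{n-1}\hat{f}_n\hat{f}_{n-1}w$ is not a palindrome, contradicting our assumption. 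The case $n=2$ is handled separately: there the two continuations start with different letters ($0$ versus $1$), so no non-empty common prefix $w$ exists, and $c_2=010$ is confirmed. The main obstacle is this last step: correctly matching the re-indexed hypothesis of Proposition~\ref{pro:nothing-to-add} to the pair of continuations that actually arise, and recognising that these two infinite suffixes of $\fib$ are indeed distinct but share a non-empty initial letter, which is precisely what gives the proposition content here.
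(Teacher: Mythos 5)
Your proposal is correct and follows essentially the same route as the paper: an induction in which the processed prefix is $g_n$ (via Proposition~\ref{pro:prefix-Fib}), the prior occurrence is localized with Lemma~\ref{lem:two-occurrences}, and a longer palindromic extension is excluded with Proposition~\ref{pro:nothing-to-add} (suitably re-indexed), with $n=2$ checked directly. The only notable deviation is that you dismiss the second admissible occurrence immediately from the position constraint $j<|g_n|$ (since that occurrence would force $j=|g_n|$), whereas the paper rules it out by a separate matching argument yielding the contradiction $\hat{f}_{n+1}=\hat{f}_{n-1}\hat{f}_{n}$; both are valid.
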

\begin{proof}
By definition of the palindromic $c$-factorization of the Fibonacci word $\fib$, we clearly have $c_{-1}=0$, $c_0=1$ and $c_1=0$.
For the second part of the result, proceed by induction on $n\ge 2$. 
Suppose $n=2$. 
Let us find the factor $c_2$ of the palindromic $c$-factorization of $\fib$.
We have 
$$
\fib = 0 \cdot 1 \cdot 0 \cdot 0 101 00100 10100101 \cdots
$$
and the longest palindrome starting with $0$ and occurring before is 
$$
c_2 = 010 =  0 \cdot 1 \cdot 0  = \hat{f}_{1}  \hat{f}_{2} \hat{f}_{1},
$$
as expected. 

For the induction step, suppose $n\ge 2$ and assume that, for all $2\le m \le n$, we have $c_m = \hat{f}_{m-1}  \hat{f}_{m} \hat{f}_{m-1}$.
We show it is still true for $m=n+1$. 
On the one hand, by the induction hypothesis, we have
\begin{eqnarray}
\fib
&=& c_{-1} c_0 c_1 c_2 \cdots c_n c_{n+1} c_{n+2} \cdots \nonumber  \\
&=& 010 \left( \prod_{2 \le m \le n} c_m \right) c_{n+1} c_{n+2} \cdots \nonumber  \\
&=& 010 \left( \prod_{2 \le m \le n} \hat{f}_{m-1}  \hat{f}_{m} \hat{f}_{m-1} \right) c_{n+1} c_{n+2} \cdots \label{eq:pal-fact-ind-step}
\end{eqnarray}
and the goal is to find the next factor of the palindromic $c$-factorization of $\fib$, i.e., the word $c_{n+1}$.
On the other hand, using~\eqref{eq:fact2} first and then~\eqref{eq:prefix} since $n$ is large enough, we get 
\begin{eqnarray}
\fib 
&=& 010 \left( \prod_{2 \le m \le n} \hat{f}_{m-1} \hat{f}_m \hat{f}_{m-1} \right) (\hat{f}_{n} \hat{f}_{n+1} \hat{f}_{n}) (\hat{f}_{n+1} \hat{f}_{n+2} \hat{f}_{n+1}) \cdots \nonumber \\
&=& g_{n+1} (\hat{f}_{n} \hat{f}_{n+1} \hat{f}_{n}) (\hat{f}_{n+1} \hat{f}_{n+2} \hat{f}_{n+1}) \cdots \nonumber \\
&=& (\hat{f}_1 \hat{f}_2 \cdots \hat{f}_{n-1} \hat{f}_n \hat{f}_{n+1} \hat{f}_{n} \hat{f}_{n-1}) (\hat{f}_{n} \hat{f}_{n+1} \hat{f}_{n}) (\hat{f}_{n+1} \hat{f}_{n+2} \hat{f}_{n+1}) \cdots. \label{eq:Fici-fact}
\end{eqnarray}
Using~\eqref{eq:Fici-fact}, it is clear that $|c_{n+1}| \ge | \hat{f}_{n} \hat{f}_{n+1} \hat{f}_{n} |$ since $\hat{f}_{n} \hat{f}_{n+1} \hat{f}_{n}$ is a palindrome occurring before.
Therefore, there exists a word $w \in \{0,1\}^{*}$ such that $c_{n+1} = \hat{f}_{n} \hat{f}_{n+1} \hat{f}_{n} w$.
We claim that $w$ is in fact the empty word and proceed by contradiction. 

By Lemma~\ref{lem:two-occurrences}, we know that there are exactly two occurrences of $\hat{f}_{n}$ in $g_{n+1}$: one starts at  position $\sum_{m=1}^{n-1} | \hat{f}_{m} | $, and the other at  position $\sum_{m=1}^{n+1} | \hat{f}_{m} |$.

\textbf{Case 1}. Let us deal with the occurrence of $\hat{f}_{n}$ in $g_{n+1}$ at position $\sum_{m=1}^{n-1} | \hat{f}_{m} | $.
In this case, $w$ must be a common prefix of the infinite words 
$$\hat{f}_{n-1}   \cdot ( \hat{f}_{n} \hat{f}_{n+1} \hat{f}_{n}  ) \cdot ( \hat{f}_{n+1} \hat{f}_{n+2} \hat{f}_{n+1}  ) \cdots
$$
and
$$
 ( \hat{f}_{n+1} \hat{f}_{n+2} \hat{f}_{n+1}  ) \cdot  ( \hat{f}_{n+2} \hat{f}_{n+3} \hat{f}_{n+2}  ) \cdots.
$$ 
By Proposition~\ref{pro:nothing-to-add}, we know that $\hat{f}_{n} \hat{f}_{n+1} \hat{f}_{n}w$ is not a palindrome if $w$ is non-empty, a contradiction. 

\textbf{Case 2}. Let us consider the occurrence of $\hat{f}_{n}$ in $g_{n+1}$ at position $\sum_{m=1}^{n+1} | \hat{f}_{m} | $.
In this case, $\hat{f}_{n} \hat{f}_{n+1} \hat{f}_{n} w$ must be a common prefix of the infinite words 
$$\hat{f}_{n} \hat{f}_{n-1}   \cdot ( \hat{f}_{n} \hat{f}_{n+1} \hat{f}_{n}  ) \cdot ( \hat{f}_{n+1} \hat{f}_{n+2} \hat{f}_{n+1}  ) \cdots
$$
and
$$
(\hat{f}_{n} \hat{f}_{n+1} \hat{f}_{n}) \cdot ( \hat{f}_{n+1} \hat{f}_{n+2} \hat{f}_{n+1}  ) \cdot  ( \hat{f}_{n+2} \hat{f}_{n+3} \hat{f}_{n+2}  ) \cdots.
$$ 
Using Proposition~\ref{pro:properties-of-singular-words}, we know that
$$
|\hat{f}_{n+1}| = F_{n+1} = F_{n} + F_{n-1} = |\hat{f}_{n}| + |\hat{f}_{n-1}|.  
$$
Consequently, $\hat{f}_{n+1} = \hat{f}_{n-1}\hat{f}_{n}$, which violates Proposition~\ref{pro:properties-of-singular-words} (items (4) or (6)).

As a conclusion, the longest palindrome starting with the first letter of $\hat{f}_{n}$ and occurring before is 
$$
c_{n+1} = \hat{f}_{n} \hat{f}_{n+1} \hat{f}_{n},
$$
as required.
%
%
%
\end{proof}

\section{The $m$-bonacci case}\label{sec:mbonacci}

In this section, we extend the results obtained for the Fibonacci word to any $m$-bonacci word, namely we get the palindromic $z$- and $c$-factorizations of any $m$-bonacci word. 
The strategy is similar to the one adopted in the previous case: we define a particular sequence $(z^{(m)}_n)_{n\ge -1}$ of finite words that we will call \emph{p-singular words}, and we write the palindromic $z$- and $c$-factorizations of any $m$-bonacci word in terms of this sequence. 
In the case $m=2$, the words $(z^{(2)}_n)_{n\ge 0}$ turn out to be the singular words $(\hat{f}_n)_{n\ge 1}$ (see Proposition~\ref{pro:f-n-z-n}).

\subsection{Preliminaries}

\begin{definition}\label{def:mbonacci}
Let $m\ge 2$.
We define the morphism $\phi_m $ on $\{0, 1, \ldots, m-1\}$ by
$$
\phi_m : 0\mapsto 01, 1 \mapsto 02, \cdots , (m-2)\mapsto 0(m-1), (m-1)\mapsto 0.
$$
When $m=2$, then $\phi_m = \varphi$, and we fall into the Fibonacci case above. 

Let $\mbonacci$ be the (infinite) $m$-bonacci word, i.e., the fixed point of the morphism $\phi_m$, starting with $0$. 
For all $n\ge 0$, define $h_n^{(m)} = \phi_m^{n}(0)$ to be the $n$th iteration of $\phi_m$ on $0$. 
It is well known that the $m$-bonacci word $\mbonacci$ is the limit of $(h_n^{(m)})_{n\ge 0}$.
For the sake of simplicity, when the context is clear, we write $h_n$ instead of $h_n^{(m)}$.
\end{definition}

From now on, $m$ is a fixed integer greater than $1$ unless otherwise specified. 

\begin{example}

If $m=2$, then $\boldsymbol{w_2}=\fib$ is the Fibonacci word. 
See also Definition~\ref{def:Fibonacci}.
If $m=3$, then $\phi_3 : 0\mapsto 01, 1 \mapsto 02, 2 \mapsto 0$, and the infinite word $\boldsymbol{w_3}$ is called the Tribonacci word. 
If $m=4$, then $\phi_4 : 0\mapsto 01, 1 \mapsto 02, 2 \mapsto 03, 3 \mapsto 0$, and the infinite word $\boldsymbol{w_4}$ is called the Quadribonacci word. 
In Table~\ref{table:first-few-words-h-n}, the first few words of the sequences $(h_n^{(m)})_{n\ge 0}$ are given for $m\in\{2,3,4\}$.
\begin{table}[h]
$$
\begin{array}{c|l}
m & (h_n^{(m)})_{n\ge 0} \\
\hline
2 & 0, 01, 010, 01001, \ldots \\
3 & 0, 01, 0102, 0102010, 0102010010201, \ldots\\
4 & 0, 01, 0102, 01020103, 010201030102010, \ldots \\
\end{array}
$$
\caption{The first few words of the sequences $(h_n^{(m)})_{n\ge 0}$ for $m\in\{2,3,4\}$.}
\label{table:first-few-words-h-n}
\end{table}
\end{example}

\begin{lemma}\label{lem:code-mbonacci}
The set of non-empty words $\{01, 02, \ldots, 0(m-1),0\}$ is a code on the finite alphabet $\{0,1,\ldots, m-1\}$.
\end{lemma}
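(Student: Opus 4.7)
The plan is to observe that $X = \{0, 01, 02, \ldots, 0(m-1)\}$ is a \emph{suffix code}, that is, no element of $X$ is a proper suffix of another element of $X$, and then invoke the classical fact that every suffix code is a code (which can be found in the same reference, \cite[Chapter 6]{Lothaire02}).

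First I would check the suffix-free property by inspection. The only word of length $1$ in $X$ is the word $0$, whose unique letter is $0$. Every other word of $X$ has the form $0a$ with $a \in \{1, \ldots, m-1\}$, and hence ends in the nonzero letter $a$. Therefore $0$ is not a suffix of any $0a$ (since the last letter of $0a$ is $a \neq 0$), and the various words $0a$ have pairwise distinct last letters, so none of them is a suffix of another. Hence $X$ is a suffix code, and therefore a code on $\{0,1,\ldots,m-1\}$.

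If I instead wished to argue from scratch without quoting the suffix-code fact, I would proceed by induction on length: given two factorizations $w = x_1 \cdots x_k = y_1 \cdots y_\ell$ with all $x_i, y_j \in X$, I would compare $x_1$ and $y_1$. Both begin with $0$. If $|x_1| \neq |y_1|$, say $x_1 = 0$ and $y_1 = 0a$ with $a \in \{1, \ldots, m-1\}$, then reading the second letter of $w$ in the first factorization gives the first letter of $x_2$, which equals $0$, while reading it in the second factorization gives $a \neq 0$, a contradiction. Hence $x_1 = y_1$, and stripping this common prefix reduces the problem to a strictly shorter word, closing the induction.

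There is no real obstacle in either route; the argument is completely routine and rests only on the trivial observation that the two length classes $\{0\}$ and $\{0a : a \neq 0\}$ are distinguishable by their final letter.
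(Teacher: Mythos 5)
Your proof is correct, but it follows a genuinely different route from the paper's. The paper observes that the set in question is exactly $\phi_m(\{0,1,\ldots,m-1\})$, i.e., the image of the (trivial) code consisting of the letters themselves under the morphism $\phi_m$, and then invokes Proposition~\ref{pro:general-code} together with the injectivity of $\phi_m$. You instead argue directly on the words: the set is a suffix code, because the only length-one element $0$ ends in $0$ while every element $0a$ with $a\neq 0$ ends in the nonzero letter $a$, and distinct elements $0a$, $0b$ have the same length; a suffix code is a code, and your fallback induction (comparing first blocks of two factorizations and noting that every block begins with $0$, so a length mismatch forces a nonzero letter to equal $0$ in position two) is a correct self-contained substitute for that classical fact. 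Your approach is more elementary and arguably more honest: the paper's one-line proof rests on the unproved assertion that $\phi_m$ is injective as a morphism on $\{0,1,\ldots,m-1\}^*$, a fact which is essentially equivalent to the codicity being established (together with injectivity on letters), whereas your argument needs nothing beyond inspection of last letters. What the paper's route buys is brevity and a link to the general principle that injective morphisms carry codes to codes; what yours buys is independence from that machinery and an explicit verification one can check by hand.
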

\begin{proof}
It directly follows from Proposition~\ref{pro:general-code}, the fact that $\phi_m$ is an injective morphism, and $\{0,1,\ldots,m-1\}$ is a code on $\{0,1,\ldots,m-1\}$ .
\end{proof}

\begin{remark}
Observe that not all words over $\{0,1,\ldots,m-1\}$ have a factorization in terms of blocks of $\{01, 02, \ldots, 0(m-1),0\}$.
For instance, the word $1$ does not have any such factorization. 
However if a word has such a factorization, then it is unique. 
\end{remark}

The following lemma will be useful to prove properties similar to those given in Proposition~\ref{pro:properties-of-singular-words}.

\begin{lemma}\label{lem:strange-technical-lemma-v2}
Let $x,y \in \{0,1,\ldots,m-1\}^*$ be two finite words.
\begin{itemize}
\item[(1)] If $\phi_m(x)0$ is a factor of $\phi_m(y)0$, then $x$ is a factor of $y$. 
\item[(2)] If $\phi_m(x)$ is a factor of $\phi_m(y)$ and $x$ does not end with the letter $m-1$, then $x$ is a factor of $y$. 
\end{itemize}
\end{lemma}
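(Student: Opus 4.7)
The plan is to exploit the block structure induced by the morphism $\phi_m$. The key observation is that $\phi_m(a)$ begins with $0$ for every letter $a\in\{0,1,\ldots,m-1\}$, and, moreover, the only $0$ appearing inside $\phi_m(a)$ is that initial one. Consequently, the positions of $0$'s in any image $\phi_m(w)$ are exactly the starting positions of the blocks $\phi_m(w_0),\phi_m(w_1),\ldots$; in other words, the $0$'s completely mark the block boundaries.

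For part (1), I would write $\phi_m(y)0 = u\cdot \phi_m(x)0\cdot v$ (the case $x=\varepsilon$ is trivial). The first letter of $\phi_m(x)0$ is $0$, so position $|u|$ of $\phi_m(y)0$ carries a $0$; by the observation above, this forces $u=\phi_m(y')$ for some prefix $y'$ of $y$. The final $0$ of $\phi_m(x)0$ similarly forces position $|u|+|\phi_m(x)|$ to sit at a block boundary of $\phi_m(y)0$, possibly the appended $0$ itself. Thus, between these two boundaries, $\phi_m(x)$ is an uninterrupted concatenation of complete blocks, say $\phi_m(y_k)\phi_m(y_{k+1})\cdots\phi_m(y_{k+\ell-1})$. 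Injectivity of $\phi_m$ on words, guaranteed by Lemma~\ref{lem:code-mbonacci}, then yields $x=y_ky_{k+1}\cdots y_{k+\ell-1}$, a factor of $y$.

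For part (2), write $\phi_m(y)=u\cdot\phi_m(x)\cdot v$. The left-alignment argument is the same. For the right end, I would use the hypothesis that $x$ does not end with $m-1$: inspecting the morphism shows that $\phi_m(a)$ ends with $0$ precisely when $a=m-1$, so under our hypothesis $\phi_m(x)$ terminates with some letter of $\{1,\ldots,m-1\}$. Since every non-zero letter of $\phi_m(y)$ occurs as the second letter of a two-letter block $0a$, the position immediately following $\phi_m(x)$ in $\phi_m(y)$ is either past the end of $\phi_m(y)$ or is itself a $0$, i.e., a block boundary. In both subcases, $\phi_m(x)$ fills a run of complete blocks of $\phi_m(y)$, and the same injectivity argument completes the proof.

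The main obstacle is mostly bookkeeping: cleanly handling the few boundary situations (empty $x$, or an occurrence that reaches the last letter of $\phi_m(y)$) without breaking the symmetric ``$0$'s $=$ block boundaries'' argument, and being explicit that the trailing $0$ in part (1) is precisely what plays the role of ``$\phi_m(x)$ does not end with $m-1$'' in part (2). Once the left and right alignments are secured, Lemma~\ref{lem:code-mbonacci} makes the passage from $\phi_m(x)$ back to $x$ automatic.
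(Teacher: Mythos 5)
Your proof is correct and follows essentially the same route as the paper: both arguments align the occurrence of the image with the block decomposition induced by $\phi_m$ (the paper phrases this via the code $\{01,02,\ldots,0(m-1),0\}$ and uniqueness of factorization, you via the equivalent observation that the letter $0$ marks exactly the block starts), and then descend to $x$ being a factor of $y$ by injectivity of $\phi_m$. The boundary cases you flag (empty $x$, occurrence reaching the end, the appended $0$ in item (1) playing the role of the non-$(m-1)$ hypothesis in item (2)) are handled the same way in the paper.
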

\begin{proof}
If $x$ is the empty word, then both items are true.
Now assume that $x$ is non-empty, so is $y$.
From Lemma~\ref{lem:code-mbonacci}, the set of words $C= \{01,02, \ldots, 0(m-1), 0\}$ is a code, and thus the words $\phi_m(x)$ and $\phi_m(y)$ respectively admit a unique factorization in terms of blocks belonging to $C$. 
There exist positive integers $\ell, k$ and words $x_1,\ldots,x_\ell,y_1,\ldots,y_k \in C$ such that
$\phi_m(x)=x_1 \cdots x_\ell$ and $\phi_m(y)=y_1 \cdots y_k$. 

Let us prove item $(1)$. 
Note that $\phi_m(x(m-1))=\phi_m(x)0=x_1 \cdots x_\ell 0$ and $\phi_m(y(m-1))=\phi_m(y)0=y_1 \cdots y_k 0$. 
By assumption, there exist words $w,t \in\{0,1,\ldots,m-1\}^{*}$ such that $\phi_m(y)0 = w \phi_m(x) 0 t$.
Using the form of the blocks in $C$, there exist $1\le i \le j \le k$ such that $y_i=x_1$ and $y_j=x_\ell$ (since no words in $C$ start with a letter different from $0$ and since $00$ implies that the first letter $0$ is a block in $C$). 
By uniqueness of the factorization, we also have $y_{i+r-1}=x_r$ for all $1 \le r \le \ell$ and $j=i+\ell-1$.
Consequently, $w=y_0 y_1 \cdots y_{i-1}$ and $0t= y_{i+\ell} \cdots y_k 0$.
From the form of the words $w$ and $0t$, we deduce that there exist words $w',t' \in\{0,1,\ldots,m-1\}^{*}$ such that $\phi_m(w')=w$ and $\phi_m(t')=0t$. 
Thus,
$$
\phi_m(y)0
= w \cdot \phi_m(x) \cdot (0 t)
= \phi_m(w')\phi_m(x)\phi_m(t') 
=\phi_m(w'xt').
$$
By injectivity of $\phi_m$, $y=w'xt'$, and $x$ is a factor of $y$, as desired. 

Let us show that item $(2)$ also holds. 
By hypothesis, $\phi_m(x)$ is a factor of $\phi_m(y)$ and the last letter of $x$ is not $m-1$ (i.e., the block $x_\ell$ is of length $2$ and ends with a letter different from $0$). 
By an analogous reasoning, there exists $1\le i \le k$ such that $y_{i+r-1}=x_r$ for all $1 \le r \le \ell$.
Now let $w=y_0 y_1 \cdots y_{i-1}$ and $t= y_{i+\ell} \cdots y_k$.
We have
$$
\phi_m(y)
= (y_0 \cdots y_{i-1}) \cdot (y_i  \cdots y_{i+\ell-1}) \cdot (y_{i+\ell} \cdots y_k)
= w \phi_m(x) t.
$$
As before, there exist $w',t' \in\{0,1,\ldots,m-1\}^{*}$ such that $\phi_m(w')=w$ and $\phi_m(t')=t$. 
By injectivity of $\phi_m$, $x$ is again a factor of $y$. 
\end{proof}

\begin{example}
Let $x=1012$ and $y=1010$ be words in $\{0,1,2\}^*$ ($m=3$ here). 
We see that $\phi_m(x)=0201020$ is a factor of $\phi_m(y)=02010201$ while $x$ is not a factor of $y$.
This is due to the fact that $x$ ends with $m-1=2$.
\end{example}

\subsection{Properties of p-singular words}\label{sec:p-singular}

In~\cite{Fici15,Wen94}, the Fibonacci word $\fib$ is factorized into singular words (see Proposition~\ref{pro:fact-Fici-Fibonacci}).
In~\cite{Mel99}, this notion of singular words is extended to cover the case of characteristic Sturmian words.
In particular, any characteristic Sturmian word $\boldsymbol{c_\alpha}$ has a singular decomposition and those singular words are useful to find the palindromic factors of $\boldsymbol{c_\alpha}$.
Leaving the framework of a two-letter alphabet, it is shown in~\cite{TanWen06} that there are two kinds of singular words in the Tribonacci case ($m=3$), and that the Tribonacci word possesses a decomposition into singular words.
Afterwards, the study of singular words has been extended to include standard episturmian words.
More particularly, a standard episturmian $\kstrictepi$ word over $\{a_1, \ldots, a_k\}$ is \emph{$k$-strict} if every letter $a_i$, $1\le i \le k$, occurs infinitely many times in its directive word. 
In fact, the $k$-strict standard episturmian words are exactly the $k$-letter Arnoux--Rauzy sequences. 
To learn more about the subject, we refer the reader to~\cite{Glen06}. 
In~\cite[Chapter 7]{Glen06}, it is shown that any word $\kstrictepi$ in a class of specific $k$-strict standard episturmian words has several kinds of generalized singular words. 
Roughly, those singular words turn out to be notably useful to study factors of $\kstrictepi$ (e.g., squares, cubes and other powers), and can also be used to factorize $\kstrictepi$ (this particular factorization is referred to as a \emph{partition} in ~\cite[Chapter 7]{Glen06}).

Following the same lead, we define the p-singular words in the general case of the $m$-bonacci word $\mbonacci$. 
Those particular words are useful to obtain the palindromic $z$- and $c$-factorizations of $\mbonacci$. 
In this section, we study some of their properties. 

\begin{definition}\label{def:z-n-rec-mbonacci}
Define the sequence $(z_n^{(m)})_{n\ge -1}$ of finite words over the alphabet $\{0,1,\ldots,m-1\}$ by
$z_{-1}^{(m)}=\varepsilon$, $z_{0}^{(m)}=0$, and
\begin{itemize}
\item[(1)] For all $1 \le n\le m-1$, $z_n^{(m)}= z_{n-2}^{(m)} z_{n-3}^{(m)} \cdots z_1^{(m)} z_0^{(m)} n z_0^{(m)} z_1^{(m)} \cdots z_{n-3}^{(m)} z_{n-2}^{(m)}$;
\item[(2)] For all $n\ge m$, $z_n^{(m)}= z_{n-2}^{(m)} z_{n-3}^{(m)} \cdots z_{n-(m-1)}^{(m)} z_{n-m}^{(m)} z_{n-(m+1)}^{(m)} z_{n-m}^{(m)} z_{n-(m-1)}^{(m)}  \cdots z_{n-3}^{(m)} z_{n-2}^{(m)}$. 
\end{itemize}
Note that for $0\le n\le m-1$ (resp. $n\ge m$), $z_n^{(m)}$ is centered at $n$ (resp., $z_{n-(m+1)}^{(m)}$).
\end{definition}

In the Fibonacci case when $m=2$, we will show in Proposition~\ref{pro:f-n-z-n} that the corresponding words are the singular words $(\hat{f}_n)_{n\ge 1}$. 
For that reason, the sequence $(z_n^{(m)})_{n\ge -1}$ is the sequence
of words called \textit{p-singular words}.  The p-singular words
satisfy a number of identities related to the standard and central
words; for instance, the following result 
gives another way we could have chosen to define the p-singular words;
see \cite{GMS11}, where the ordinary $z$- and $c$-factorizations of
episturmian words are best described in terms of the words $h_n^R$.

\begin{lemma}\label{lem:z-n-mbonacci}
For all $n\ge 0$, we have
$$
z_n^{(m)} =
\begin{cases}
(h_1^{(m)})^R (h_3^{(m)})^R \cdots (h_n^{(m)})^R \left((h_0^{(m)})^R (h_2^{(m)})^R \cdots (h_{n-1}^{(m)})^R\right)^{-1},  & \text{if } n \text{ is odd}; \\
(h_0^{(m)})^R (h_2^{(m)})^R \cdots (h_n^{(m)})^R \left((h_1^{(m)})^R (h_3^{(m)})^R \cdots  (h_{n-1}^{(m)})^R \right)^{-1},  & \text{if } n \text{ is even}.
\end{cases}
$$
\end{lemma}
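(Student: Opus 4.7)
The formula expresses $z_n^{(m)}$ as a right quotient of two products of reverses of the prefixes $h_k^{(m)}$. Writing
\[
A_n := \begin{cases} (h_0^{(m)})^R (h_2^{(m)})^R \cdots (h_n^{(m)})^R, & n \text{ even,} \\ (h_1^{(m)})^R (h_3^{(m)})^R \cdots (h_n^{(m)})^R, & n \text{ odd,} \end{cases}
\]
with the convention $A_{-1} = \varepsilon$, the denominator in the stated formula for $z_n^{(m)}$ is precisely $A_{n-1}$. Hence the lemma is equivalent to the word identity $A_n = z_n^{(m)} A_{n-1}$ for all $n \ge 0$, and this is what I would prove.

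My proof would be by strong induction on $n$. The base cases $n = 0, 1$ are immediate: $A_0 = (h_0^{(m)})^R = 0 = z_0^{(m)} A_{-1}$ and $A_1 = (h_1^{(m)})^R = 10 = 1 \cdot 0 = z_1^{(m)} A_0$. For the inductive step, I would first establish, by an independent induction on $n$ using $h_n^{(m)} = \phi_m(h_{n-1}^{(m)})$ and the explicit action of $\phi_m$ on letters, the decompositions
\[
h_n^{(m)} = h_{n-1}^{(m)} h_{n-2}^{(m)} \cdots h_0^{(m)} \cdot n \quad (1 \le n \le m-1), \qquad h_n^{(m)} = h_{n-1}^{(m)} h_{n-2}^{(m)} \cdots h_{n-m}^{(m)} \quad (n \ge m).
\]
Taking reversals, $(h_n^{(m)})^R$ admits the dual decompositions; in particular $(h_{n-1}^{(m)})^R$ is always a suffix of $(h_n^{(m)})^R$, which is also visible directly from the fact that $h_{n-1}^{(m)}$ is a prefix of $h_n^{(m)}$.

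Since $A_n = A_{n-2}(h_n^{(m)})^R$ by the very definition of $A_n$, the claim reduces to $A_{n-2}(h_n^{(m)})^R = z_n^{(m)} A_{n-1}$. Using $A_{n-1} = A_{n-3}(h_{n-1}^{(m)})^R$ and cancelling the common suffix $(h_{n-1}^{(m)})^R$, this is in turn equivalent to $A_{n-2} u_n = z_n^{(m)} A_{n-3}$, where $u_n$ denotes the prefix of $(h_n^{(m)})^R$ obtained by removing the trailing $(h_{n-1}^{(m)})^R$. Substituting the inductive hypothesis in the equivalent form $A_k = z_k^{(m)} z_{k-1}^{(m)} \cdots z_0^{(m)}$, together with the explicit description of $u_n$ as a product of reversed $h$'s, the identity unwinds, by iterated cancellation, into exactly the defining recurrence for $z_n^{(m)}$: palindromic around $z_{n-m-1}^{(m)}$ when $n \ge m$, and centered at the letter $n$ when $1 \le n \le m-1$.

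The main obstacle is the technical bookkeeping of these cancellations and the need to treat the two ranges $n \ge m$ and $1 \le n \le m-1$ separately, because both the decomposition of $h_n^{(m)}$ and the recurrence for $z_n^{(m)}$ take different forms in each range. I would record as a short auxiliary lemma the fact that $A_{k-1}$ is a suffix of $A_k$ for every $k$ (an immediate consequence of the inductive hypothesis), so that each formal inverse appearing in the manipulation is justified as a genuine right-cancellation of finite words rather than merely a free-monoid symbol.
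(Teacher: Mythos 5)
Your proposal is correct in outline but follows a genuinely different route from the paper. The paper first proves Lemma~\ref{lem:z-n-phi-m} (namely $z_n=0^{-1}\phi_m(z_{n-1})$ for odd $n$ and $z_n=\phi_m(z_{n-1})0$ for even $n$) and then obtains Lemma~\ref{lem:z-n-mbonacci} by a three-line induction: rewrite the claim (using that $z_n$ is a palindrome) as $h_{n-1}h_{n-3}\cdots z_n = h_n h_{n-2}\cdots$, apply $\phi_m$ to the identity at level $n-1$, use $h_{k+1}=\phi_m(h_k)$, and insert $00^{-1}$; the morphism does all the work. You instead take the ``direct from the definitions'' route that the paper explicitly mentions and declines (``The lemma can be proved directly from the definitions, but we are able to give a much more elegant proof\dots''): reformulate the claim as $A_n=z_nA_{n-1}$, use the standard decompositions $h_n=h_{n-1}\cdots h_0\cdot n$ ($1\le n\le m-1$) and $h_n=h_{n-1}\cdots h_{n-m}$ ($n\ge m$), and unwind against Definition~\ref{def:z-n-rec-mbonacci}. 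This does work, and your preliminary reductions (base cases, $A_{n-1}$ being a suffix of $A_n$, right cancellation of $(h_{n-1})^R$, the form $A_k=z_kz_{k-1}\cdots z_0$ of the hypothesis) are all sound; but be aware that the phrase ``unwinds, by iterated cancellation'' hides essentially all of the remaining work, and it is worth isolating the mechanism explicitly: from the induction hypothesis one gets, for all $k\le n-1$, the word identity $A_{k-2}(h_k)^R=A_k=z_kz_{k-1}A_{k-2}$, and pushing the factors $(h_{n-m})^R,\ldots,(h_{n-1})^R$ (resp.\ $(h_0)^R,\ldots,(h_{n-1})^R$ after the central letter $n$) leftwards one at a time with this relation, absorbing $z_{j}A_{j-1}=A_j$ after each step, produces exactly $z_nA_{n-1}$, with the two ranges of $n$ handled separately as you anticipate. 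What each approach buys: the paper's proof is shorter and reuses Lemma~\ref{lem:z-n-phi-m} (itself needed elsewhere), while yours is self-contained at the level of the recurrences for $h_n$ and $z_n$ and incidentally produces the identity $h_0^Rh_1^R\cdots h_{n-1}^R=z_0z_1\cdots z_{n-2}z_{n-1}z_{n-2}\cdots z_0$ that the paper only mentions in its closing section.
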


The lemma can be proved directly from the definitions, but we
are able to give a much more elegant proof after first proving some preliminary
results.  We therefore postpone the proof until after Lemma~\ref{lem:z-n-phi-m}.

\begin{example}
In Table~\ref{table:first-few-words-z-n}, the first few p-singular words are displayed for $m\in\{2,3,4,5\}$.
\begin{table}[h]
$$
\begin{array}{c|l}
m & (z_n^{(m)})_{n\ge -1} \\
\hline
2 & \varepsilon, 0,1,00,101,00100, \ldots \\
3 & \varepsilon,0,1,020, 1001, 020 101 020, 1001 020 1 020 1001, \ldots\\
4 & \varepsilon,0,1,020, 10301, 020 1001 020, 10301 020 1 0 1 020 10301, \ldots \\
5 & \varepsilon,0,1,020, 10301, 020 10 4 01 020, 10301 020 1 0 0 1 020 10301, \ldots
\end{array}
$$
\caption{The first few words of the sequences $(z_n^{(m)})_{n\ge -1}$ for $m\in\{2,3,4,5\}$.}
\label{table:first-few-words-z-n}
\end{table}
\end{example}

In fact, in the context of the Fibonacci word ($m=2$), the word $z^{(2)}_n$ is the $(n+1)$st singular word, as shown below. 
As a consequence, the palindromic $z$- and $c$-factorizations of the Fibonacci word can be rewritten in terms of the sequence $(z^{(2)}_n)_{n\ge 0}$ of p-singular words; see Theorems~\ref{thm:pal-z-fact-fibonacci} and~\ref{thm:pal-c-fact-fibonacci}. 
In the same way, we will show that the palindromic $z$- and $c$-factorizations of any $m$-bonacci word involve the p-singular words.

\begin{proposition}\label{pro:f-n-z-n}
For all $n\ge 0$, we have $\hat{f}_{n+1} =z^{(2)}_n $.
\end{proposition}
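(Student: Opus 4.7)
The plan is to prove the identity $\hat{f}_{n+1} = z_n^{(2)}$ by strong induction on $n \ge 0$, showing that both sides satisfy the same recurrence once one has enough base cases in place. The whole thing is essentially a matter of matching definitions.

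First I would dispose of the base cases $n \in \{0,1,2\}$ by direct inspection. From Definition~\ref{def:Fibonacci} we have $\hat{f}_1 = 0$, $\hat{f}_2 = 1$, and $\hat{f}_3 = 00$. Unfolding Definition~\ref{def:z-n-rec-mbonacci} with $m=2$ gives $z_0^{(2)} = 0$; then, since $m-1 = 1$, item~(1) yields $z_1^{(2)} = z_{-1}^{(2)} \cdot 1 \cdot z_{-1}^{(2)} = 1$; and item~(2) with $n=2$ yields $z_2^{(2)} = z_0^{(2)} z_{-1}^{(2)} z_0^{(2)} = 0 \cdot \varepsilon \cdot 0 = 00$. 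Thus the three base cases match.

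For the inductive step, fix $n \ge 3$ and assume that $\hat{f}_{k+1} = z_k^{(2)}$ for every $k$ with $0 \le k < n$. Since $n+1 \ge 4$, Proposition~\ref{pro:properties-of-singular-words}(3) gives
\[
\hat{f}_{n+1} = \hat{f}_{n-1}\,\hat{f}_{n-2}\,\hat{f}_{n-1}.
\]
On the other side, $n \ge 2 = m$, so Definition~\ref{def:z-n-rec-mbonacci}(2) specialises (for $m=2$) to
\[
z_n^{(2)} = z_{n-2}^{(2)}\,z_{n-3}^{(2)}\,z_{n-2}^{(2)}.
\]
Applying the induction hypothesis to the indices $n-2$, $n-3$, $n-2$ (all of which lie in $\{0,\ldots,n-1\}$ since $n \ge 3$) and comparing the two displayed factorisations gives $\hat{f}_{n+1} = z_n^{(2)}$, closing the induction.

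There is no real obstacle here: once one observes that Proposition~\ref{pro:properties-of-singular-words}(3) and the $m=2$ specialisation of Definition~\ref{def:z-n-rec-mbonacci}(2) are \emph{the same} three-term recurrence, the identity follows immediately from matching the three base values. The only mild care needed is to check that the recurrence for the $\hat{f}$'s kicks in at index $\ge 4$, which is exactly why we have to verify $n = 0, 1, 2$ by hand before starting the induction.
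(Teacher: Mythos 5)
Your proof is correct and follows essentially the same route as the paper: induction on $n$ with base cases $n\in\{0,1,2\}$, then matching the recurrence $\hat{f}_{n+1}=\hat{f}_{n-1}\hat{f}_{n-2}\hat{f}_{n-1}$ from Proposition~\ref{pro:properties-of-singular-words}(3) with the $m=2$ specialisation $z_n^{(2)}=z_{n-2}^{(2)}z_{n-3}^{(2)}z_{n-2}^{(2)}$ of Definition~\ref{def:z-n-rec-mbonacci}(2) via the induction hypothesis. The only difference is that you spell out the base-case computations explicitly, which the paper leaves to the reader.
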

\begin{proof}
We proceed by induction on $n\ge 0$. 
The result is true for $n\in\{0,1,2\}$.
Now suppose that $n\ge 3$ and the result holds up to $n-1$. 
We show it is still true for $n$.
Using Proposition~\ref{pro:properties-of-singular-words}, then the induction hypothesis and finally Definition~\ref{def:z-n-rec-mbonacci}, we get the result
\[
\hat{f}_{n+1} 
= \hat{f}_{n-1} \hat{f}_{n-2} \hat{f}_{n-1}
= z^{(2)}_{n-2} z^{(2)}_{n-3} z^{(2)}_{n-2}
= z^{(2)}_n. \qedhere
\]
\end{proof}

Again, for the sake of simplicity, when the context is clear, we write $z_n$ instead of $z_n^{(m)}$.
By induction and Definition~\ref{def:z-n-rec-mbonacci}, it is clear
that the p-singular words are palindromes. 

\begin{proposition}\label{pro:z-n-pal-mbonacci}
For all $n\ge -1$, $z_n$ is a palindrome. 
\end{proposition}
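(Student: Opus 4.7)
The plan is a straightforward induction on $n$, exploiting the fact that each recursive definition in Definition~\ref{def:z-n-rec-mbonacci} is explicitly symmetric around a distinguished central factor. The base cases $z_{-1}=\varepsilon$ and $z_0=0$ are palindromes by inspection, so no work is needed there.

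For the inductive step, I would assume that $z_k$ is a palindrome for every $-1\le k<n$ and split into the two cases of the definition. When $1\le n\le m-1$, writing $u_n=z_{n-2}z_{n-3}\cdots z_1 z_0$, the definition reads $z_n = u_n\cdot n\cdot u_n^R{}'$ where by construction the right-hand block is obtained from $u_n$ by reversing the \emph{order} of the factors (not the factors themselves). Using $(xy)^R=y^Rx^R$ repeatedly and the induction hypothesis $z_k^R=z_k$, a short calculation gives
\[
z_n^R = (z_0 z_1 \cdots z_{n-2})^R \cdot n \cdot (z_{n-2}\cdots z_1 z_0)^R = z_{n-2}\cdots z_0 \cdot n \cdot z_0\cdots z_{n-2} = z_n.
\]
The case $n\ge m$ is handled in exactly the same way, the only difference being that the central letter $n$ is replaced by the word $z_{n-(m+1)}$, which is itself a palindrome by the induction hypothesis; again one reverses the concatenation order, applies the induction hypothesis to each factor, and observes that the symmetric layout of indices makes the result coincide with $z_n$.

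There is no real obstacle: the entire content of the statement is that the recursion writes $z_n$ as a palindromic arrangement of palindromes with a palindromic centre, so the induction closes immediately. The only thing to be careful about is bookkeeping of the reversed concatenation and confirming that the central factor ($n$ in the first case, $z_{n-(m+1)}$ in the second) is itself a palindrome, which is trivial for a single letter and follows from the induction hypothesis in the second case.
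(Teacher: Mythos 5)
Your induction is correct and is exactly the argument the paper has in mind: the paper simply states that the palindromicity of the $z_n$ is clear by induction from the symmetric form of Definition~\ref{def:z-n-rec-mbonacci}, and your write-up fills in precisely that bookkeeping (reversal of the concatenation, induction hypothesis on each factor, palindromic centre $n$ or $z_{n-(m+1)}$). Nothing is missing; note only that for $n\ge m$ the central index satisfies $n-(m+1)\ge -1$, so the induction hypothesis indeed applies to it.
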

%
Also from Definition~\ref{def:z-n-rec-mbonacci}, we know the prefixes and suffixes of length at most $3$ of the p-singular words.

\begin{proposition}\label{pro:z-n-first-last-letters}
For all $n\ge 0$, $z_n$ starts and ends with the letter $0$ (resp., $1$) if $n$ is even (resp., odd). 
Moreover, for all even $n\ge 2$, $z_n$ starts and ends with $00$ if $m=2$, or $020$ if $m\ge 3$; for all odd $n\ge 2$, $z_n$ starts with $101$ if $m=2$, or $100$ if $m=3$, or $103$ if 
$m\ge 4$, and ends with $101$ if $m=2$, or $001$ if $m=3$, or $301$ if $m\ge 4$.
\end{proposition}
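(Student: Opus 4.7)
My plan is to prove all parts by strong induction on $n$, using the recursive definition of the p-singular words together with their palindromic character. The first move is to invoke Proposition~\ref{pro:z-n-pal-mbonacci}: each $z_n$ is a palindrome, so every ``ends with $\xi$'' claim follows from the matching ``starts with $\xi$'' claim by reversing. From here on I only need to track prefixes.

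Next I would settle the base cases $n\in\{0,1,2,3\}$ by direct computation from Definition~\ref{def:z-n-rec-mbonacci}. A small case split on $m$ is needed because the branch of the recursion depends on whether $n\le m-1$ or $n\ge m$: for example, $z_2=z_0 z_{-1} z_0=00$ when $m=2$ but $z_2=z_0\cdot 2\cdot z_0=020$ when $m\ge 3$, and the three subcases $m=2$, $m=3$, $m\ge 4$ for $n=3$ produce exactly $101$, $1001$, $10301$, matching the proposition.

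The inductive step for $n\ge 4$ is very short. In both branches of Definition~\ref{def:z-n-rec-mbonacci}, the leftmost factor in the product defining $z_n$ is $z_{n-2}$. Since $n-2\ge 2$, the inductive hypothesis already describes the first two or three letters of $z_{n-2}$, and these are therefore the first two or three letters of $z_n$ as well. With matching parity this immediately yields the length-$2$ or length-$3$ prefix claims for $n\ge 4$. The ``first letter'' assertion for $n\ge 2$ then follows as a corollary (each of $00, 020, 101, 100, 103$ starts with $0$ or $1$ as dictated by parity), while the cases $n\in\{0,1\}$ of that assertion are already covered by the base step.

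I do not anticipate a real obstacle: once palindromicity reduces the work to prefixes, the recursion itself finishes the argument in essentially one line. The only step that demands care is bookkeeping in the base cases, where the ranges $m=2$, $m=3$, and $m\ge 4$ must be matched against the correct branch of the defining recursion, so that the length-$2$ prefix in the case $(m,n)=(2,2)$ and the length-$3$ prefixes in the remaining base subcases are verified against exactly the statement claimed.
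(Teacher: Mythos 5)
Your proof is correct and takes essentially the same route as the paper: induction on $n$ with base cases $z_0,z_1,z_2,z_3$ computed from Definition~\ref{def:z-n-rec-mbonacci}, and the inductive step driven by the fact that $z_{n-2}$ is the leading factor of $z_n$ in both branches of the recursion. The only cosmetic difference is that you deduce the suffix claims from palindromicity (Proposition~\ref{pro:z-n-pal-mbonacci}), whereas the paper reads off directly from the symmetric definition that $z_{n-2}$ is also the trailing factor of $z_n$.
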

\begin{proof}
For the first part of the result, we proceed by induction on $n\ge 0$.
From Definition~\ref{def:z-n-rec-mbonacci}, $z_0=0$ and $z_1=1$, so the result is true for $n\in\{0,1\}$.
Now suppose that $n\ge 2$, and that the result holds for values less than $n$. 
If $1\le n \le m-1$ (resp., $n\ge m$), then Definition~\ref{def:z-n-rec-mbonacci}(1) (resp., Definition~\ref{def:z-n-rec-mbonacci}(2)) shows that $z_n$ ends and starts with $z_{n-2}$.
Using the induction hypothesis since $n-2 \ge 0$, we know that $z_{n-2}$ starts and ends with $0$ (resp., $1$) if $n-2$ is even (resp., odd). 
Consequently, $z_n$ starts and ends with the letter $0$ (resp., $1$) if $n$ is even (resp., odd). 

The proof of the second part of the statement is obtained in the same manner by first observing that Definition~\ref{def:z-n-rec-mbonacci} (or Table~\ref{table:first-few-words-z-n}) gives $z_2=00$ if $m=2$, or $z_2=020$ if $m\ge 3$, and 
$$
z_3 = 
\begin{cases}
101,  & \text{if } m=2; \\
1001,  & \text{if } m=3; \\
10301,  & \text{if } m\ge 4.
\end{cases}
$$
\end{proof}

In the following two corollaries, resulting from Definition~\ref{def:z-n-rec-mbonacci}, we study the length of p-singular words. 

\begin{corollary}\label{cor:comparison-of-length-mbonacci-1}
We have $|z_0|=1$, and for all $1 \le n\le m-1$, $|z_n| = 2 \sum_{k=0}^{n-2} |z_k| +1$.
In particular, for all $1 \le n\le m-1$, $|z_n| = |z_{n-1}| +  2 |z_{n-2}|$.
\end{corollary}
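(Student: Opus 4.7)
The plan is to read both equalities directly off Definition~\ref{def:z-n-rec-mbonacci}(1), with the second equality obtained by subtraction. The statement $|z_0|=1$ is immediate from the definition $z_0=0$.

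For the first equality, I would fix $n$ with $1\le n\le m-1$ and count letters in the defining expression
\[
z_n = z_{n-2} z_{n-3} \cdots z_1 z_0 \, n \, z_0 z_1 \cdots z_{n-3} z_{n-2}.
\]
Each p-singular word $z_k$ with $0\le k\le n-2$ appears exactly twice in this concatenation, and there is one additional letter (the symbol $n$), so $|z_n|=2\sum_{k=0}^{n-2}|z_k|+1$. One should check the boundary case $n=1$: the sum is then empty, giving $|z_1|=1$, which is consistent with $z_1=1$ from Definition~\ref{def:z-n-rec-mbonacci}.

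For the second equality, I would apply the first formula to both $|z_n|$ and $|z_{n-1}|$ (the latter requires $n-1\ge 1$, i.e.\ $n\ge 2$; the case $n=1$ must be handled separately, using $|z_{-1}|=0$ from $z_{-1}=\varepsilon$ and $|z_0|=1$, so that $|z_{-1}|+2|z_0|=0+2=2 \neq 1=|z_1|$ — so actually the ``in particular'' formula genuinely requires $n\ge 2$, and I should note this qualification, unless the convention is that $|z_{-1}|=0$ is used and the identity is claimed only for $n\ge 2$; a quick subtraction gives
\[
|z_n|-|z_{n-1}| = \left(2\sum_{k=0}^{n-2}|z_k|+1\right) - \left(2\sum_{k=0}^{n-3}|z_k|+1\right) = 2|z_{n-2}|,
\]
which rearranges to $|z_n|=|z_{n-1}|+2|z_{n-2}|$.

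There is essentially no obstacle here: the proof is a bookkeeping argument built entirely on Definition~\ref{def:z-n-rec-mbonacci}(1). The only subtlety worth flagging is the boundary at $n=1$, where the empty sum convention ensures the first identity holds but the recurrence $|z_n|=|z_{n-1}|+2|z_{n-2}|$ should be understood for $n\ge 2$ (or interpreted with $|z_{-1}|=0$ and adjusted accordingly).
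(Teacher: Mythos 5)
Your treatment of the first identity is fine and matches the paper: both read $|z_n|=2\sum_{k=0}^{n-2}|z_k|+1$ directly off Definition~\ref{def:z-n-rec-mbonacci}(1), and your subtraction argument for $n\ge 2$ is equivalent to the paper's regrouping of the sum. However, your discussion of the boundary case $n=1$ contains a genuine error: you evaluated $|z_{-1}|+2|z_0|=2$ and concluded that the recurrence fails at $n=1$, so that the ``in particular'' statement would need the restriction $n\ge 2$. You substituted the two terms in the wrong order. The claimed identity at $n=1$ is
\[
|z_1| \;=\; |z_{0}| + 2\,|z_{-1}| \;=\; 1 + 2\cdot 0 \;=\; 1,
\]
which is correct, since $z_1=1$ and $z_{-1}=\varepsilon$. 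So the corollary as stated (for all $1\le n\le m-1$) is true, and no qualification is needed; the paper simply notes that ``the case $n=1$ is easily handled'' and then, for $2\le n\le m-1$, applies the first formula to $n-1$ and regroups, exactly as your subtraction does. Fix the $n=1$ check (it is a one-line verification, not an exception) and your proof coincides with the paper's.
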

\begin{proof}
From Definition~\ref{def:z-n-rec-mbonacci}, we have $|z_0|=|0|=1$.
Now let $1 \le n\le m-1$.
From Definition~\ref{def:z-n-rec-mbonacci}(1), we get
$$
|z_n| = 2 \sum_{k=0}^{n-2} |z_k| + 1,
$$
which proves the first part of the statement. 
Let us show the second part of the statement.
The case $n=1$ is easily handled. 
Suppose that $2 \le n\le m-1$.
From the first part of the result with $n-1 \in \{1,\ldots,m-1\}$, we know that
\[
|z_n| = 2 \sum_{k=0}^{n-2} |z_k| + 1 = \left( 2 \sum_{k=0}^{n-3} |z_k| + 1 \right) + 2 |z_{n-2}|  = |z_{n-1}| + 2 |z_{n-2}| . 
 \qedhere
\]
\end{proof}

In the following corollary, when $n$ is big enough, the length of the p-singular word $z_n$ is expressed in terms of the length of the previous $m$ p-singular words $z_{n-1}, \ldots, z_{n-m}$. 
Note that, when $m=2$, then the following result is implied by Propositions~\ref{pro:properties-of-singular-words}(2) and~\ref{pro:f-n-z-n}.
Also observe that, when $m$ is even, the sequence $(|z_n|)_{n\ge 0}$ of positive integers satisfies a $m$-bonacci type recurrence relation. 
However, that is not the case when $m$ is odd.

\begin{corollary}\label{cor:comparison-of-length-mbonacci-2}
If $m$ is even, then, for all $n\ge m-1$, we have 
$$
|z_n| =
|z_{n-1}| + |z_{n-2}| + \cdots + |z_{n-m}|.
$$
If $m$ is odd, then, for all $n\ge m-1$, we have 
$$
|z_n| = |z_{n-1}| + |z_{n-2}| + \cdots + |z_{n-m}| + (-1)^n.
$$
\end{corollary}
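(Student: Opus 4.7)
The plan is to handle both parities of $m$ at once by defining $\epsilon_n := 0$ when $m$ is even and $\epsilon_n := (-1)^n$ when $m$ is odd, and proving the single identity $|z_n| = \sum_{k=1}^{m}|z_{n-k}| + \epsilon_n$ for all $n \geq m-1$ by induction on $n$. The algebraic fact linking the two cases is $\epsilon_n = -\epsilon_{n-1}$, which is trivial for $m$ even and follows from $(-1)^n = -(-1)^{n-1}$ for $m$ odd; this will be used crucially in the inductive step.

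For the base case $n = m-1$, I would invoke Corollary~\ref{cor:comparison-of-length-mbonacci-1} to write $|z_{m-1}| = 2\sum_{k=0}^{m-3}|z_k| + 1$. Introducing $T_j := |z_j| - \sum_{k=0}^{j-1}|z_k|$ is convenient, since $|z_{-1}|=0$ means the desired identity is exactly $T_{m-1} = \epsilon_{m-1}$. The same corollary, rewritten, yields the one-step recurrence $T_j + T_{j-1} = 1$ for $1 \leq j \leq m-1$; combined with $T_0 = 1$, this forces $T_j = 1$ when $j$ is even and $T_j = 0$ when $j$ is odd. A parity check of $m-1$ then gives $T_{m-1} = \epsilon_{m-1}$ in both cases.

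For the inductive step $n \geq m$, I would read off from Definition~\ref{def:z-n-rec-mbonacci}(2) the length identity $|z_n| = 2\sum_{k=2}^{m}|z_{n-k}| + |z_{n-m-1}|$. A short rearrangement (absorbing one copy of each $|z_{n-k}|$ with $k \in \{2,\dots,m\}$ together with $|z_{n-m-1}|$ into the telescoped sum, and splitting off $|z_{n-1}|$) produces
\[
|z_n| - \sum_{k=1}^{m}|z_{n-k}| \;=\; \sum_{k=1}^{m}|z_{(n-1)-k}| - |z_{n-1}|.
\]
Since $n-1 \geq m-1$, the inductive hypothesis applied at $n-1$ turns the right-hand side into $-\epsilon_{n-1}$, and the relation $\epsilon_n = -\epsilon_{n-1}$ closes the induction.

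I do not anticipate any real obstacle. The only mildly delicate point is the base case, where the trick is choosing $T_j$ so that Corollary~\ref{cor:comparison-of-length-mbonacci-1} becomes a simple two-term alternation; once that is done, the parity matching with $\epsilon_{m-1}$ drops out automatically, and the inductive step is a routine bookkeeping calculation.
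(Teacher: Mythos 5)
Your proposal is correct and follows essentially the same route as the paper: induction on $n\ge m-1$, with the base case extracted from Corollary~\ref{cor:comparison-of-length-mbonacci-1} and the inductive step combining the length identity from Definition~\ref{def:z-n-rec-mbonacci}(2) with the hypothesis at $n-1$. Your $\epsilon_n$ and $T_j$ bookkeeping merely unifies the even/odd cases (and incidentally covers $m=2$ uniformly, which the paper instead defers to the Fibonacci facts), but the underlying argument is the same.
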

\begin{proof}
If $m=2$, the result follows from Propositions~\ref{pro:properties-of-singular-words}(2) and~\ref{pro:f-n-z-n}.
Now suppose that $m\ge 3$, and, as a first case, suppose that $m$ is even. 
Proceed by induction on $n\ge m-1$.
If $n=m-1$, then using Corollary~\ref{cor:comparison-of-length-mbonacci-1} several times, we have 
\begin{align*}
|z_{m-1}| &= |z_{m-2} | + 2 |z_{m-3}|  = |z_{m-2} | + |z_{m-3}| + |z_{m-4} | + 2 |z_{m-5}|  =  |z_{m-2} | + |z_{m-3}|  + \cdots + |z_2| + 2 |z_1| \\
&= |z_{m-2} | + |z_{m-3}|  + \cdots + |z_2| +  |z_1| + |z_0| + |z_{-1}| 
\end{align*}
since $|z_1| = |z_0|$, and $|z_{-1}| =0$.
Now suppose that $n\ge m$ and the result holds for values less than $n$. 
From Definition~\ref{def:z-n-rec-mbonacci}, we obtain
\begin{align*}
|z_n| &=  2 |z_{n-2}| + 2 |z_{n-3}| + \cdots + 2 |z_{n-m}| + |z_{n-(m+1)}| \\
&= (|z_{n-2}| +  |z_{n-3}| + \cdots +  |z_{n-1-(m-1)}| + |z_{n-1-m}| ) + |z_{n-2}| +  |z_{n-3}| + \cdots +  |z_{n-m}|,
\end{align*}
and using the induction hypothesis, we find
$$
|z_n| =
|z_{n-1}| + |z_{n-2}| +  |z_{n-3}| + \cdots + |z_{n-m}|.
$$

Secondly assume that $m$ is odd, and as is the previous case, proceed by induction on $n\ge m-1$.
If $n=m-1$, then using Corollary~\ref{cor:comparison-of-length-mbonacci-1} several times, we have 
\begin{align*}
|z_{m-1}| &= |z_{m-2} | + 2 |z_{m-3}|  = |z_{m-2} | + |z_{m-3}| + |z_{m-4} | + 2 |z_{m-5}|  =  |z_{m-2} | + |z_{m-3}|  + \cdots + |z_1|  + 2 |z_0| \\
&= |z_{m-2} | + |z_{m-3}|  + \cdots +  |z_1| + |z_0| + |z_{-1}| + 1 
\end{align*}
since $|z_0| = 1$ and $|z_{-1}|=0$. 
Now suppose that $n\ge m$, and assume that result holds for all values less than $n$. 
From Definition~\ref{def:z-n-rec-mbonacci}, we have
\begin{align*}
|z_n| 
&= 2 |z_{n-2}| + 2 |z_{n-3}| + \cdots + 2 |z_{n-m}| + |z_{n-(m+1)}| \\
&= ( |z_{n-2}| + |z_{n-3}| + \cdots + |z_{n-m}| + |z_{n-(m+1)}| -(-1)^n)
+ |z_{n-2}| + |z_{n-3}| + \cdots + |z_{n-m}| +(-1)^n\\
&= ( |z_{n-2}| + |z_{n-3}| + \cdots + |z_{n-m}| + |z_{n-(m+1)}| +(-1)^{n-1})
+ |z_{n-2}| + |z_{n-3}| + \cdots + |z_{n-m}| +(-1)^n.
\end{align*}
The induction hypothesis allows us to conclude that
$$
|z_n| 
= |z_{n-1}| + |z_{n-2}| + |z_{n-3}| + \cdots + |z_{n-m}| +(-1)^n,
$$
as desired. 
%
\end{proof}

The following inequalities on the lengths of p-singular words will be useful later on. 

\begin{proposition}\label{pro:comparison-of-length-mbonacci}
We have the following inequalities. 
\begin{itemize}
\item[(1)] For all $0 \le n \le m-1$, 
$$
|z_n| \ge \sum_{k=0}^{n-1} |z_k| = \sum_{k=-1}^{n-1} |z_k|.
$$
\item[(2)] For all $n \ge 1$, $|z_n| \ge |z_{n-1}| +  |z_{n-2}|$.
\item[(3)] For all $n \ge 1$, $|z_{n+1}| > |z_{n}|$.
\end{itemize}
\end{proposition}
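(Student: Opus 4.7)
My plan is to treat the three items separately, working from the recurrences in Corollaries~\ref{cor:comparison-of-length-mbonacci-1} and~\ref{cor:comparison-of-length-mbonacci-2}, and then derive (3) as a corollary of (2).

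For item (1), the equality $\sum_{k=0}^{n-1}|z_k| = \sum_{k=-1}^{n-1}|z_k|$ is immediate from $|z_{-1}| = 0$. Write $S_n := \sum_{k=0}^{n-1}|z_k|$. I would prove $|z_n| \ge S_n$ for $0 \le n \le m-1$ by induction on $n$, with bases $n=0$ (where $|z_0|=1$ and $S_0=0$) and $n=1$ (where $|z_1|=1=|z_0|=S_1$). For the inductive step with $2 \le n \le m-1$, Corollary~\ref{cor:comparison-of-length-mbonacci-1} gives $|z_n| = 2S_{n-1}+1$, so the desired inequality $2S_{n-1}+1 \ge S_n = S_{n-1}+|z_{n-1}|$ reduces to $|z_{n-1}| \le S_{n-1}+1$. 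Using $|z_{n-1}| = 2S_{n-2}+1$ again (valid since $n-1 \le m-1$) and $S_{n-1} = S_{n-2}+|z_{n-2}|$, this in turn reduces to the inequality $|z_{n-2}| \ge S_{n-2}$, which is exactly the induction hypothesis at $n-2$.

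For item (2), I would split on the range of $n$. When $n=1$, we have $|z_1|=1=|z_0|+|z_{-1}|$. When $2 \le n \le m-1$, Corollary~\ref{cor:comparison-of-length-mbonacci-1} gives $|z_n|=|z_{n-1}|+2|z_{n-2}|\ge |z_{n-1}|+|z_{n-2}|$. When $n \ge m$, I invoke Corollary~\ref{cor:comparison-of-length-mbonacci-2}. If $m$ is even, every term in the sum $\sum_{k=1}^{m}|z_{n-k}|$ is non-negative, so dropping all but the first two yields $|z_n| \ge |z_{n-1}|+|z_{n-2}|$. If $m$ is odd, the desired inequality becomes $\sum_{k=3}^{m}|z_{n-k}| + (-1)^n \ge 0$; for $n$ even this is trivially true, and for $n$ odd it suffices that $|z_{n-3}| \ge 1$, which holds since $n \ge m \ge 3$ implies $n-3 \ge 0$ and hence $|z_{n-3}| \ge 1$ (the word $z_{n-3}$ is non-empty for index $\ge 0$).

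Finally, item (3) is an immediate consequence of item (2): for $n \ge 1$, we have $|z_{n+1}| \ge |z_n| + |z_{n-1}| \ge |z_n| + 1 > |z_n|$, because $|z_{n-1}| \ge 1$ whenever $n-1 \ge 0$.

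The only mildly delicate point is the odd-$m$ case of (2), where the sign correction $(-1)^n$ requires separating parities of $n$, but this is handled by noting $|z_{n-3}|\ge 1$. All other steps are essentially direct applications of the length recurrences already established.
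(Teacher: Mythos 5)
Your proposal is correct and follows essentially the same route as the paper: induction via Corollary~\ref{cor:comparison-of-length-mbonacci-1} for item (1), a range-and-parity case analysis via Corollaries~\ref{cor:comparison-of-length-mbonacci-1} and~\ref{cor:comparison-of-length-mbonacci-2} for item (2), and item (3) as an immediate consequence of (2) since $|z_{n-1}|\ge 1$. The only cosmetic differences are that the paper bounds $\sum_{k=0}^{n-3}|z_k|\le |z_{n-2}|$ directly in the inductive step of (1) rather than passing through the $2S_{n-1}+1$ form, and in the odd-$m$, odd-$n$ case of (2) it uses $|z_{n-m}|\ge 1$ where you use $|z_{n-3}|\ge 1$; both are equivalent.
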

\begin{proof}
Let us prove (1) by induction on $0 \le n \le m-1$.
If $n=0$, then $|z_{-1}| = 0 \le 1 = |z_0|$.
If $n=1$, then $|z_{-1}| + |z_{0}| = 1 \le 1 = |z_1|$.
Now suppose that $2 \le n \le m-1$, and that the result is true for values less than $n$. 
By the induction hypothesis, we have
$$
\sum_{k=0}^{n-1} |z_k| = \sum_{k=0}^{n-3} |z_k| + |z_{n-2}| + |z_{n-1}| \le 2 |z_{n-2}| + |z_{n-1}|,
$$
and by Corollary~\ref{cor:comparison-of-length-mbonacci-1}, we have
$$
\sum_{k=0}^{n-1} |z_k|  \le |z_n|.
$$

Let us prove (2). 
First, suppose that $1 \le n \le m-1$. 
Then Corollary~\ref{cor:comparison-of-length-mbonacci-1} implies that $|z_{n}| = |z_{n-1}| + 2 |z_{n-2}| \ge |z_{n-1}| +  |z_{n-2}|$ since $ |z_{n-2}| \ge  |z_{-1}|  =0$. 
Suppose that $n\ge m$.
If $m$ is even, then by Corollary~\ref{cor:comparison-of-length-mbonacci-2}, we know that
$$
|z_{n}| =
|z_{n-1}| + |z_{n-2}| + |z_{n-3}| + \cdots + |z_{n-m}| \ge |z_{n-1}| + |z_{n-2}|
$$
since $|z_{n-3}|, \ldots, |z_{n-m}| \ge |z_{-1}| =0$ (when $m=2$, the inequality above is an equality).
If $m$ is odd, then by Corollary~\ref{cor:comparison-of-length-mbonacci-2}, we have 
$$
|z_{n}| =
|z_{n-1}| + |z_{n-2}| + \cdots + |z_{n-m}| + (-1)^n.
$$
When $n$ is even, then clearly $|z_n| \ge |z_{n-1}| + |z_{n-2}|$.
When $n$ is odd, then $ |z_{n-m}| - 1 \ge 0$, so we have $|z_n| \ge |z_{n-1}| + |z_{n-2}|$.

Let us show that (3) holds. 
Suppose that $n \ge 1$. 
Since $|z_{n-1}| > 0$, the result can easily be deduced as a corollary of (2).
\end{proof}

From Table~\ref{table:first-few-words-z-n}, one can observe that the first few words in two consecutive sequences of p-singular words are the same.
In the following proposition, we compare the first $m+1$ terms of the sequences $(z_n^{(m)})_{n\ge -1}$ and $(z_n^{(m+1)})_{n\ge -1}$ by showing that they are equal. 
Also notice the words differ after that.

\begin{proposition}\label{prop:comparison-p-sing-words-different-m}
For all $-1\le n \le m-1$, we have $z^{(m)}_n=z^{(m+1)}_n$.
\end{proposition}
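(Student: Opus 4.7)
The plan is to proceed by a straightforward induction on $n$ running from $-1$ up to $m-1$, exploiting the fact that case~(1) of Definition~\ref{def:z-n-rec-mbonacci} makes no explicit reference to $m$ beyond the range constraint on $n$.

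For the base cases $n=-1$ and $n=0$, Definition~\ref{def:z-n-rec-mbonacci} sets $z_{-1}^{(m)}=\varepsilon$ and $z_0^{(m)}=0$ independently of $m$, so the equalities $z_{-1}^{(m)}=z_{-1}^{(m+1)}$ and $z_0^{(m)}=z_0^{(m+1)}$ are immediate.

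For the inductive step, I would fix $1\le n\le m-1$ and assume $z_k^{(m)}=z_k^{(m+1)}$ for every $-1\le k\le n-1$. The key observation is the range containment: since $n\le m-1\le (m+1)-1$, the word $z_n^{(m+1)}$ is also defined by case~(1) of Definition~\ref{def:z-n-rec-mbonacci}, so
\[
z_n^{(m)} = z_{n-2}^{(m)} z_{n-3}^{(m)} \cdots z_0^{(m)}\, n\, z_0^{(m)} \cdots z_{n-3}^{(m)} z_{n-2}^{(m)}
\]
and
\[
z_n^{(m+1)} = z_{n-2}^{(m+1)} z_{n-3}^{(m+1)} \cdots z_0^{(m+1)}\, n\, z_0^{(m+1)} \cdots z_{n-3}^{(m+1)} z_{n-2}^{(m+1)}.
\]
The indices appearing on the right-hand sides are all at most $n-2\le m-3$, hence covered by the induction hypothesis, giving $z_n^{(m)}=z_n^{(m+1)}$.

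There is essentially no obstacle here: the entire content of the argument is the syntactic remark that the recursion for $z_n$ in the range $1\le n\le m-1$ is insensitive to $m$. The only point to verify carefully is that the boundary case $n=m-1$ still falls within case~(1) for the sequence $(z_n^{(m+1)})$, which is clear because $m-1<m=(m+1)-1$. As a sanity check one may note that for $n=m$ the recursion switches to case~(2) for $(z_n^{(m)})$ but remains in case~(1) for $(z_n^{(m+1)})$, which is precisely why the two sequences generally diverge starting at index $m$.
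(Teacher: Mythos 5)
Your proposal is correct and follows exactly the paper's argument: induction on $n$ with the same base cases, and the observation that for $1\le n\le m-1$ both $z_n^{(m)}$ and $z_n^{(m+1)}$ are given by case~(1) of Definition~\ref{def:z-n-rec-mbonacci}, so the induction hypothesis applied to the lower-indexed factors finishes the step. The paper's proof is the same, merely stated more tersely.
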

\begin{proof}
We proceed by induction on $n$, with $-1\le n \le m-1$.
It is clear that for all $m\ge 2$, we have $z_{-1}^{(m)}=\varepsilon$ and $z_{0}^{(m)}=0$, 
so the base case is true.
Now suppose that $1 \le n \le m-1$, and that $z^{(m)}_k=z^{(m+1)}_k$ for $-1 \le k < n$. 
From Definition~\ref{def:z-n-rec-mbonacci}(1), we have
$$
z^{(m)}_n = z^{(m)}_{n-2} z^{(m)}_{n-3} \cdots z^{(m)}_1 z^{(m)}_0 n z^{(m)}_0 z^{(m)}_1 \cdots z^{(m)}_{n-3} z^{(m)}_{n-2}.
$$
Now using the induction hypothesis, we have
$$
z^{(m)}_n = z^{(m+1)}_{n-2} z^{(m+1)}_{n-3} \cdots z^{(m+1)}_1 z^{(m+1)}_0 n z^{(m+1)}_0 z^{(m+1)}_1 \cdots z^{(m+1)}_{n-3} z^{(m+1)}_{n-2},
$$
and using Definition~\ref{def:z-n-rec-mbonacci}(1) again, we get
$z^{(m)}_n=z^{(m+1)}_n$.
\end{proof}

The idea to obtain the palindromic $z$- and $c$-factorizations of the $m$-bonacci word is to mimic the reasoning in the previous case. 
Namely, we establish results similar to Propositions~\ref{pro:properties-of-singular-words},~\ref{pro:fact-Fici-Fibonacci} and~\ref{pro:prefix-Fib}.
Before getting those properties in the more general $m$-bonacci case,
a few preliminaries are necessary. 
In the following lemma, we get a formula for the p-singular word $z_n$ in terms of the morphism $\phi_m$ and the p-singular word $z_{n-1}$.

\begin{lemma}\label{lem:z-n-phi-m}
For all $n\ge 0$, 
$$
z_n
=
\begin{cases}
 0^{-1}\phi_m(z_{n-1}),  & \text{if } n \text{ is odd}; \\
\phi_m(z_{n-1})0,  & \text{if } n \text{ is even}.
\end{cases}
$$
\end{lemma}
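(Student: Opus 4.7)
The plan is to proceed by strong induction on $n \ge 0$. The base cases $n = 0$ (giving $\phi_m(\varepsilon) \cdot 0 = 0 = z_0$) and $n = 1$ (giving $0^{-1}\phi_m(0) = 0^{-1}(01) = 1 = z_1$) are immediate. The induction hypothesis is equivalent to the statement that, for every $0 \le k < n - 1$, we have $\phi_m(z_k) = 0 \, z_{k+1}$ when $k$ is even and $\phi_m(z_k) = z_{k+1} \cdot 0^{-1}$ when $k$ is odd; the latter expression is well-defined because $z_{k+1}$ ends with $0$ for even $k+1$ by Proposition~\ref{pro:z-n-first-last-letters}.

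For the inductive step, fix $n \ge 2$ and expand $z_{n-1}$ using Definition~\ref{def:z-n-rec-mbonacci}. When $n - 1 \le m - 1$, the word $z_{n-1}$ is the palindrome with central letter $n - 1$ and wings composed of $z_{n-3}, \ldots, z_0$; when $n - 1 \ge m$, the centre is instead the word $z_{n-m-2}$ (with $z_{-1} = \varepsilon$) and the wings run over $z_{n-3}, \ldots, z_{n-m-1}$. Applying the morphism $\phi_m$ and evaluating the central contribution separately (using $\phi_m(n-1) = 0 \, n$ when $n \le m - 1$, $\phi_m(m-1) = 0$ when $n = m$, or the induction hypothesis when $n \ge m + 2$), one obtains $\phi_m(z_{n-1})$ as a product whose non-central factors are images $\phi_m(z_k)$ for $k < n-1$.

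Now replace each $\phi_m(z_k)$ by its value from the induction hypothesis. In every adjacent pair of factors within the product, a trailing $0^{-1}$ meets a leading $0$ and cancels, producing a telescoping. Sweeping from the outer wings toward the centre, and recognising that the $0$ produced by $\phi_m$ on the central letter plays the role of $z_0 = 0$ in the expression for $z_n$ given by Definition~\ref{def:z-n-rec-mbonacci}, the product collapses precisely to $z_n$, modulo an overall leading $0$ (when $n$ is odd, equivalently $n-3$ is even) or a missing trailing $0$ (when $n$ is even, equivalently $n-3$ is odd). This matches the two required identities.

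The main obstacle is the parity bookkeeping, especially in the regime $n \ge m$: the boundary behaviour of the telescoping depends on the parities of both the outer index $n - 3$ and the innermost wing index $n - m - 1$, yielding four sub-cases indexed by the parities of $n$ and $m$ in which one must verify that the cancellation leaves exactly the correct overall boundary $0$. A small technicality arises at $n = m + 1$, where the central subword of $z_{n-1}$ equals $z_{-1} = \varepsilon$; although the centre contributes nothing, the two wings on either side still telescope cleanly to produce $z_n$, so this edge case causes no real difficulty.
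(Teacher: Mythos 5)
Your proposal is correct and follows essentially the same route as the paper: induct on $n$, expand $z_{n-1}$ via Definition~\ref{def:z-n-rec-mbonacci}, apply $\phi_m$, and use the induction hypothesis in the form $\phi_m(z_k)=0z_{k+1}$ ($k$ even) or $z_{k+1}0^{-1}$ ($k$ odd) so that inserted factors $00^{-1}$ telescope, with the same treatment of the centre ($\phi_m(n-1)=0n$ for $n\le m-1$, $\phi_m(m-1)=0$ for $n=m$, the induction hypothesis, or $z_{-1}=\varepsilon$, when $n\ge m+1$). The paper carries out exactly this bookkeeping, writing the $00^{-1}$ insertions explicitly in the cases $2\le n\le m-1$, $n=m$ and $n\ge m+1$, each split by the parity of $n$.
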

\begin{proof}
We proceed by induction on $n\ge 0$.
If $n=0$, then $z_0=0=\phi_m(\varepsilon)0=\phi_m(z_{-1})0$. 
If $n=1$, then 
$$z_1=1=0^{-1}01=0^{-1}\phi_m(0)=0^{-1}\phi_m(z_0).$$
Now suppose that $n\ge 2$ and that the result is true for values less than $n$. 
As a first case, suppose that $2 \le n \le m-1$. In particular, $1 \le n-1\le m-2$, and we deduce from Definition~\ref{def:z-n-rec-mbonacci}(1) that
$$
z_{n-1}= z_{n-3} z_{n-4} \cdots z_1 z_0 (n-1) z_0 z_1 \cdots z_{n-4} z_{n-3}.
$$
If $n$ is even, then 
\begin{align*}
\phi_m(z_{n-1}) 0
=& \phi_m(z_{n-3}) \phi_m(z_{n-4}) \cdots \phi_m(z_1) \phi_m(z_0) 0n \phi_m(z_0) \phi_m(z_1) \cdots \phi_m(z_{n-4}) \phi_m(z_{n-3}) 0 \\
=& \phi_m(z_{n-3}) 00^{-1} \phi_m(z_{n-4}) \cdots \phi_m(z_1) 00^{-1} \phi_m(z_0) 0n 00^{-1} \phi_m(z_0) \phi_m(z_1) 0 \cdots 0^{-1} \phi_m(z_{n-4}) \\
&\phi_m(z_{n-3}) 0.
\end{align*}
By the induction hypothesis, we obtain
\begin{align*}
\phi_m(z_{n-1}) 0
&= z_{n-2} z_{n-3} \cdots z_2 z_1 0n 0 z_1 z_2 \cdots z_{n-3} z_{n-2}
\\
&= z_{n-2} z_{n-3} \cdots z_2 z_1 z_0 n z_0 z_1 z_2 \cdots z_{n-3} z_{n-2},
\end{align*}
and the last equality holds because $z_0=0$.
From Definition~\ref{def:z-n-rec-mbonacci}(1), we have $\phi_m(z_{n-1}) 0=z_n$.
If $n$ is odd, then 
\begin{align*}
0^{-1} \phi_m(z_{n-1}) 
=& 0^{-1} \phi_m(z_{n-3}) \phi_m(z_{n-4}) \cdots \phi_m(z_1) \phi_m(z_0) 0n \phi_m(z_0) \phi_m(z_1) \cdots \phi_m(z_{n-4}) \phi_m(z_{n-3}) \\
=& 0^{-1} \phi_m(z_{n-3}) \phi_m(z_{n-4}) 0  \cdots \phi_m(z_1) 00^{-1} \phi_m(z_0) 0n 00^{-1} \phi_m(z_0) \phi_m(z_1) 0 \cdots \phi_m(z_{n-4}) 0 \\
&0^{-1}  \phi_m(z_{n-3}).
\end{align*}
By the induction hypothesis, we obtain
\begin{align*}
0^{-1} \phi_m(z_{n-1})
&= z_{n-2} z_{n-3} \cdots z_2 z_1 0n 0 z_1 z_2 \cdots z_{n-3} z_{n-2}
\\
&= z_{n-2} z_{n-3} \cdots z_2 z_1 z_0 n z_0 z_1 z_2 \cdots z_{n-3} z_{n-2},
\end{align*}
and the last equality is true because $z_0=0$.
From Definition~\ref{def:z-n-rec-mbonacci}(1), we have $0^{-1}\phi_m(z_{n-1})=z_n$.

Now suppose that $n=m$ (this implies that $m=n\ge 2$). 
By Definition~\ref{def:z-n-rec-mbonacci}(1), we have
$$
z_{m-1}= z_{m-3} z_{m-4} \cdots z_1 z_0 (m-1) z_0 z_1 \cdots z_{m-4} z_{m-3}.
$$
If $n=m$ is even, then 
\begin{align*}
\phi_m(z_{m-1}) 0
=& \phi_m(z_{m-3}) \phi_m(z_{m-4}) \cdots \phi_m(z_1) \phi_m(z_0) 0 \phi_m(z_0) \phi_m(z_1) \cdots \phi_m(z_{m-4}) \phi_m(z_{m-3}) 0 \\
=& \phi_m(z_{m-3}) 00^{-1} \phi_m(z_{m-4}) \cdots \phi_m(z_1) 00^{-1} \phi_m(z_0) 0 00^{-1} \phi_m(z_0) \phi_m(z_1) 0 \cdots 0^{-1} \phi_m(z_{m-4}) \\
&\phi_m(z_{m-3}) 0.
\end{align*}
By the induction hypothesis and since $z_0=0$, we obtain
\begin{align*}
\phi_m(z_{m-1}) 0
&= z_{m-2} z_{m-3} \cdots z_2 z_1 00 z_1 z_2 \cdots z_{m-3} z_{m-2}
\\
&= z_{m-2} z_{m-3} \cdots z_2 z_1 z_0 z_0 z_1 z_2 \cdots z_{m-3} z_{m-2}.
\end{align*}
From Definition~\ref{def:z-n-rec-mbonacci}(2), we have $\phi_m(z_{m-1}) 0=z_m$.
If $n=m$ is odd, then 
\begin{align*}
0^{-1} \phi_m(z_{m-1}) 
=& 0^{-1} \phi_m(z_{m-3}) \phi_m(z_{m-4}) \cdots \phi_m(z_1) \phi_m(z_0) 0 \phi_m(z_0) \phi_m(z_1) \cdots \phi_m(z_{m-4}) \phi_m(z_{m-3}) \\
=& 0^{-1} \phi_m(z_{m-3}) \phi_m(z_{m-4}) 0  \cdots \phi_m(z_1) 00^{-1} \phi_m(z_0) 0 00^{-1} \phi_m(z_0) \phi_m(z_1) 0 \cdots \phi_m(z_{m-4}) 0 \\
&0^{-1}  \phi_m(z_{m-3}).
\end{align*}
By the induction hypothesis and since $z_0=0$, we obtain
\begin{align*}
0^{-1} \phi_m(z_{m-1})
&= z_{m-2} z_{m-3} \cdots z_2 z_1 0 0 z_1 z_2 \cdots z_{m-3} z_{m-2}
\\
&= z_{m-2} z_{m-3} \cdots z_2 z_1 z_0 z_0 z_1 z_2 \cdots z_{m-3} z_{m-2}.
\end{align*}
From Definition~\ref{def:z-n-rec-mbonacci}(2), we have $0^{-1}\phi_m(z_{m-1})=z_m$.

Finally, assume that $n\ge m +1$. 
By Definition~\ref{def:z-n-rec-mbonacci}(2), we have
$$
z_{n-1}= z_{n-3} z_{n-4} \cdots z_{n-1-(m-1)} z_{n-1-m} z_{n-1-(m+1)} z_{n-1-m} z_{n-1-(m-1)}  \cdots z_{n-4} z_{n-3}.
$$
If $n$ is even, then 
\begin{align*}
\phi_m(z_{n-1}) 0
=& \phi_m(z_{n-3}) \phi_m(z_{n-4}) \cdots \phi_m(z_{n-1-(m-1)}) \phi_m(z_{n-1-m}) \\ &\phi_m(z_{n-1-(m+1)}) \phi_m(z_{n-1-m}) \phi_m(z_{n-1-(m-1)})  \cdots \phi_m(z_{n-4}) \phi_m(z_{n-3}) 0.
\end{align*}
Inserting $00^{-1}$ where needed (places where to insert it differ when $m$ is even or odd) and using the induction hypothesis , we obtain
$$
\phi_m(z_{n-1}) 0
= z_{n-2} z_{n-3} \cdots z_{n-(m-1)} z_{n-m} z_{n-(m+1)} z_{n-m} z_{n-(m-1)}  \cdots z_{n-3} z_{n-2}.
$$
From Definition~\ref{def:z-n-rec-mbonacci}(2), we have $\phi_m(z_{n-1}) 0=z_n$ as desired.
If $n$ is odd, then 
\begin{align*}
0^{-1} \phi_m(z_{m-1}) 
=& 0^{-1} \phi_m(z_{n-3}) \phi_m(z_{n-4}) \cdots \phi_m(z_{n-1-(m-1)}) \phi_m(z_{n-1-m}) \\ &\phi_m(z_{n-1-(m+1)}) \phi_m(z_{n-1-m}) \phi_m(z_{n-1-(m-1)})  \cdots \phi_m(z_{n-4}) \phi_m(z_{n-3}).
\end{align*}
As before, inserting $00^{-1}$ where needed and making use of the induction hypothesis, we get
$$
0^{-1} \phi_m(z_{n-1})
=z_{n-2} z_{n-3} \cdots z_{n-(m-1)} z_{n-m} z_{n-(m+1)} z_{n-m} z_{n-(m-1)}  \cdots z_{n-3} z_{n-2}.
$$
From Definition~\ref{def:z-n-rec-mbonacci}(2), we have $0^{-1}\phi_m(z_{n-1})=z_n$.
This ends the proof. 
\end{proof}

Using the previous lemma, we are able to prove Lemma~\ref{lem:z-n-mbonacci}.

\begin{proof}[Proof of Lemma~\ref{lem:z-n-mbonacci}]
For the sake of simplicity, let us drop the exponent $(m)$ in this proof. 
We equivalently show that, if $n\ge 0$ is odd,
$$
h_{n-1} h_{n-3} \cdots h_{2} h_{0} z_n = h_{n} h_{n-2} \cdots h_{3} h_{1}
$$
and if $n\ge 0$ is even,
$$
h_{n-1} h_{n-3} \cdots h_{3} h_{1} z_n = h_{n} h_{n-2} \cdots h_{2} h_{0}.
$$
We proceed by induction on $n\ge 0$. 
If $n=0$, then $\varepsilon \cdot z_0=0=h_0$ holds.
If $n=1$, then $h_0 z_1=01=h_1$.
Now suppose that $n\ge 2$ and that the result is true for values less than $n$. 

If $n$ is odd, then $n-1$ is even and the induction hypothesis yields
$$
h_{n-2} h_{n-4} \cdots h_{3} h_{1} z_{n-1} = h_{n-1} h_{n-3} \cdots h_{2} h_{0}.
$$ 
Applying $\phi_m$ on both sides, we get 
$$
h_{n-1} h_{n-3} \cdots h_{4} h_{2} \phi_m(z_{n-1}) = h_{n} h_{n-2} \cdots h_{3} h_{1}.
$$ 
We may now insert $00^{-1}$ before $\phi_m(z_{n-1})$ in the left-hand side of the previous equality to obtain
$$
h_{n-1} h_{n-3} \cdots h_{4} h_{2} 00^{-1} \phi_m(z_{n-1}) = h_{n} h_{n-2} \cdots h_{3} h_{1}.
$$ 
We conclude by using the fact that $h_0=0$ and $0^{-1} \phi_m(z_{n-1})=z_n$ thanks to Lemma~\ref{lem:z-n-phi-m}.

If $n$ is even, then $n-1$ is odd and the induction assumption gives
$$
h_{n-2} h_{n-4} \cdots h_{2} h_{0} z_{n-1} = h_{n-1} h_{n-3} \cdots h_{3} h_{1}.
$$ 
Applying $\phi_m$ on both sides and appending a letter $0$, we obtain 
$$
h_{n-1} h_{n-3} \cdots h_{3} h_{1} \phi_m(z_{n-1}) 0 = h_{n} h_{n-2} \cdots h_{4} h_{2} 0.
$$ 
We end this case by using the fact that $h_0=0$ and $\phi_m(z_{n-1})0=z_n$ thanks to Lemma~\ref{lem:z-n-phi-m}.
\end{proof}

The following result matches Proposition~\ref{pro:properties-of-singular-words}(4) in the Fibonacci case. 

\begin{proposition}\label{pro:z-n-not-factor-mbonacci}
For all $n\ge 0$, $z_n$ is not a factor of $z_{n+1}$.
\end{proposition}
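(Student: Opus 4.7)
The plan is to proceed by induction on $n\ge 0$. The base case $n=0$ is immediate since $z_0=0$ is not a factor of $z_1=1$. For the inductive step I want to show that if $z_n$ is a factor of $z_{n+1}$, then $z_{n-1}$ is a factor of $z_n$, contradicting the induction hypothesis. The key tool is Lemma~\ref{lem:z-n-phi-m}, which expresses $z_n$ and $z_{n+1}$ in terms of $\phi_m(z_{n-1})$ and $\phi_m(z_n)$, together with Lemma~\ref{lem:strange-technical-lemma-v2}, which lets me transfer a factor relation from the image to the preimage. Two cases arise according to the parity of $n$.

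When $n$ is even (so $n\ge 2$), we have $z_n=\phi_m(z_{n-1})0$ and $z_{n+1}=0^{-1}\phi_m(z_n)$. Assuming $z_n$ is a factor of $z_{n+1}$, it is also a factor of $0\cdot z_{n+1}=\phi_m(z_n)$, so $\phi_m(z_{n-1})0$ is a factor of $\phi_m(z_n)$, and hence of $\phi_m(z_n)0$. Lemma~\ref{lem:strange-technical-lemma-v2}(1) yields that $z_{n-1}$ is a factor of $z_n$, a contradiction.

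When $n$ is odd, the situation is more delicate because $z_n$ starts with the letter $1$ (by Proposition~\ref{pro:z-n-first-last-letters}) while $z_{n+1}=\phi_m(z_n)0$ starts with $0$. Any occurrence of $z_n$ in $z_{n+1}$ must therefore begin at some position $i\ge 1$. The structural observation is that in any word of the form $\phi_m(w)0$, every non-zero letter is preceded by $0$, since the blocks of $\phi_m$ all start with $0$. Thus position $i-1$ carries a $0$, which means $0\cdot z_n=\phi_m(z_{n-1})$ occurs inside $z_{n+1}=\phi_m(z_n)0$. I now want to argue this occurrence actually sits inside $\phi_m(z_n)$, not running off into the final $0$: the last letter of $\phi_m(z_{n-1})$ is the last letter of $\phi_m(0)=01$, namely $1$, because $n-1$ is even and so $z_{n-1}$ ends with $0$ by Proposition~\ref{pro:z-n-first-last-letters}; but the last letter of $z_{n+1}$ is $0$. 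Hence $\phi_m(z_{n-1})$ is a factor of $\phi_m(z_n)$, and since $z_{n-1}$ does not end with $m-1$, Lemma~\ref{lem:strange-technical-lemma-v2}(2) gives the desired contradiction $z_{n-1}$ being a factor of $z_n$.

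The main obstacle is the odd case: unlike the even case, where the relation lifts directly along $\phi_m$, here I must first prepend a $0$ to turn the hypothetical occurrence of $z_n$ into an occurrence of $\phi_m(z_{n-1})$. This step requires the block-structure observation about non-zero letters, and then a careful last-letter comparison in order to discard the trailing $0$ and apply Lemma~\ref{lem:strange-technical-lemma-v2}(2) (rather than (1)), whose hypothesis on the last letter of $z_{n-1}$ fortunately holds for free by parity.
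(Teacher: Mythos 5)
Your proof is correct and follows essentially the same route as the paper: induction on $n$, a parity split, and the transfer of the factor relation through $\phi_m$ via Lemma~\ref{lem:z-n-phi-m} and Lemma~\ref{lem:strange-technical-lemma-v2}. The only (welcome) difference is that your even case uses item (1) of Lemma~\ref{lem:strange-technical-lemma-v2} and your odd case uses a last-letter comparison that never mentions the letter $2$, so your argument works uniformly for all $m\ge 2$, whereas the paper treats $m=2$ separately through the singular-word results and starts the induction at $n=3$ after hand-checking $n\le 2$.
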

\begin{proof}
Observe first that the case $m=2$ is covered using Propositions~\ref{pro:properties-of-singular-words}(4) and~\ref{pro:f-n-z-n}.
So we can suppose that $m\ge 3$, and we proceed by induction on $n\ge 0$. 
The result can be checked by hand for $0\le n \le 2$ since $z_0=0$, $z_1=1$, and $z_2=020$ (see Definition~\ref{def:z-n-rec-mbonacci}).
Suppose that $n\ge 3$ and assume that $z_k$ is not a factor of $z_{k+1}$ for $0 \le k \le n-1$. 
We show it still holds for $k=n$, i.e., $z_n$ is not a factor of $z_{n+1}$.
Proceed by contradiction and suppose that $z_n$ is a factor of $z_{n+1}$.
We divide the proof into two cases according to the parity of $n$.

\textbf{Case 1}. Suppose that $n$ is odd.
From Lemma~\ref{lem:z-n-phi-m}, we know that $z_n=0^{-1}\phi_m(z_{n-1})$ and $z_{n+1}=\phi_m(z_{n})0$.
We prove that $0z_n=00^{-1}\phi_m(z_{n-1})=\phi_m(z_{n-1})$ is a factor of $z_{n+1}=\phi_m(z_{n})0$. 
By hypothesis, there exist words $x,y\in\{0,1,\ldots,m-1\}^*$ such that $z_{n+1}=0xz_ny$. If $x=\varepsilon$, then the statement is true. If $|x|=\ell>0$, write $x=x_1 \cdots x_\ell$ with $x_i \in\{0,1,\ldots,m-1\}$ for all $1\le i \le \ell$. By definition of the morphism $\phi_m$, we find $x_\ell=0$ since $z_n$ starts with a positive letter, which ends the intermediate result. Now, we claim that $\phi_m(z_{n-1})$ is in fact a factor of $\phi_m(z_{n})$.
First, using Proposition~\ref{pro:z-n-first-last-letters}, 
$$
z_{n-1}=0u0 \quad \text{and} \quad z_{n}=1v1
$$ 
for non-empty words $u,v\in\{0,1,\ldots,m-1\}^+$ (recall that $n-1\ge 2$). 
Thus $\phi_m(z_{n-1})$ (resp., $\phi_m(z_{n})$) starts and ends with $01$ (resp., $02$). 
Consequently, $\phi_m(z_{n-1})$ is a factor of $\phi_m(z_{n})$, as claimed. 
From Lemma~\ref{lem:strange-technical-lemma-v2}, $z_{n-1}$ is a factor of $z_n$, which contradicts the induction hypothesis.

\textbf{Case 2}. Assume that $n$ is even. 
From Lemma~\ref{lem:z-n-phi-m}, we know that $z_n=\phi_m(z_{n-1})0$ and $z_{n+1}=0^{-1}\phi_m(z_{n})$.
By hypothesis, $z_n=\phi_m(z_{n-1})0$ is a factor of $0z_{n+1}=00^{-1}\phi_m(z_{n})=\phi_m(z_{n})$.
Thus, $\phi_m(z_{n-1})$ is in fact a factor of $\phi_m(z_{n})$.
Using Proposition~\ref{pro:z-n-first-last-letters}, 
$$
z_{n-1}=1u1 \quad \text{and} \quad z_{n}=0v0
$$ 
for non-empty words $u,v\in\{0,1,\ldots,m-1\}^+$ (recall that $n-1\ge 2$). 
From Lemma~\ref{lem:strange-technical-lemma-v2}, $z_{n-1}$ is a factor of $z_n$, which again contradicts the induction hypothesis.
\end{proof}

The following result is the counterpart to Proposition~\ref{pro:properties-of-singular-words}(5) in the Fibonacci case. 

\begin{proposition}\label{pro:z-n-not-factor-prod-mbonacci}
For all $n\ge 1$, $z_n$ is not a factor of the product $\prod_{k=0}^{n-1} z_{k}$.
\end{proposition}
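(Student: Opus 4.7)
The plan is to proceed by induction on $n \geq 1$, after first establishing a clean $\phi_m$-recurrence for the prefix $P_n := \prod_{k=0}^{n-1} z_k$.

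\textbf{Preliminary.} Setting $P_0 := \varepsilon$, I would verify that $P_n = \phi_m(P_{n-1})$ for $n \geq 2$ even and $P_n = \phi_m(P_{n-1}) \cdot 0$ for $n \geq 1$ odd. This is a short simultaneous induction using Lemma~\ref{lem:z-n-phi-m}: one rewrites $P_n = P_{n-1} z_{n-1}$, inserts the appropriate form of $z_{n-1}$ (either $\phi_m(z_{n-2}) \cdot 0$ or $0^{-1} \phi_m(z_{n-2})$, depending on the parity of $n-1$), and combines with the previous parity case for $P_{n-1}$; the intermediate $0 \cdot 0^{-1}$ cancellation collapses to $\phi_m(P_{n-1})$. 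Base cases $P_1 = 0 = \phi_m(\varepsilon)\cdot 0$ and $P_2 = 01 = \phi_m(P_1)$ are direct.

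\textbf{Induction.} The base case $n = 1$ is trivial since $z_1 = 1$ is not a factor of $P_1 = 0$. For the inductive step, assume $z_{n-1}$ is not a factor of $P_{n-1}$ and suppose for contradiction that $z_n$ is a factor of $P_n$. If $n$ is even, then $z_n = \phi_m(z_{n-1}) \cdot 0$ is a factor of $P_n = \phi_m(P_{n-1})$, and therefore of $\phi_m(P_{n-1}) \cdot 0$; Lemma~\ref{lem:strange-technical-lemma-v2}(1) immediately yields that $z_{n-1}$ is a factor of $P_{n-1}$, contradicting the induction hypothesis.

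If $n$ is odd, then $z_n = 0^{-1} \phi_m(z_{n-1})$, which both starts and ends with the letter $1$ by Proposition~\ref{pro:z-n-first-last-letters}. Let $j$ be the position of an occurrence of $z_n$ inside $P_n = \phi_m(P_{n-1}) \cdot 0$. Since the first letter of $P_n$ is $0$ one has $j \geq 1$; since $z_n$ ends in $1$ the occurrence cannot reach the final appended $0$, so $z_n$ lies entirely inside $\phi_m(P_{n-1})$. Now every non-zero letter of $\phi_m(P_{n-1})$ is immediately preceded by $0$, because the blocks $0, 01, \ldots, 0(m-1)$ of $\phi_m$ all begin with $0$. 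Hence the letter at position $j-1$ is $0$, so $\phi_m(z_{n-1}) = 0 \cdot z_n$ is a factor of $\phi_m(P_{n-1})$. Since $n-1$ is even, $z_{n-1}$ ends in $0 \neq m-1$, and Lemma~\ref{lem:strange-technical-lemma-v2}(2) yields $z_{n-1}$ is a factor of $P_{n-1}$, again contradicting the induction hypothesis.

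\textbf{Main obstacle.} The two delicate steps are the preliminary $\phi_m$-recurrence for $P_n$ (mildly technical but essential), and in the odd case the positional argument that prepends a $0$ to the occurrence of $z_n$ in order to recover the full $\phi_m$-image before invoking Lemma~\ref{lem:strange-technical-lemma-v2}(2); once these are in place, the two items of Lemma~\ref{lem:strange-technical-lemma-v2} close the induction.
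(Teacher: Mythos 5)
Your proof is correct and takes essentially the same route as the paper: the parity identities $P_n=\phi_m(P_{n-1})$ (for $n$ even) and $P_n=\phi_m(P_{n-1})\cdot 0$ (for $n$ odd) are exactly the identities derived inside the paper's proof, and the contradiction is closed by the same pull-back through Lemma~\ref{lem:strange-technical-lemma-v2}. The only (welcome) difference is bookkeeping: by keeping the trailing $0$ and using item (1) in the even case, and by prepending the forced $0$ before the occurrence in the odd case (where $z_{n-1}$ ends in $0\neq m-1$), your induction runs uniformly for all $n\ge 2$ and all $m\ge 2$, so you avoid the paper's separate length-based argument for $1\le n\le m-1$ and its separate treatment of the case $m=2$.
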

\begin{proof}
For the sake of simplicity, we define, for all $n\ge 1$, 
$$
P(n) = \prod_{k=0}^{n-1} z_{k}.
$$
Observe first that the case $m=2$ follows from Propositions~\ref{pro:properties-of-singular-words}(5) and~\ref{pro:f-n-z-n}.
So we can suppose that $m\ge 3$. 
To prove the result, we proceed by induction on $n\ge 1$. 

If $1\le n\le m-1$, then by Proposition~\ref{pro:comparison-of-length-mbonacci}, we have $|z_n| \ge \sum_{k=0}^{n-1} |z_k|$. 
If the inequality is strict, then we are done. If we actually have an equality, then $z_{n-1}$ would be a factor of $z_n$, which contradicts Proposition~\ref{pro:z-n-not-factor-mbonacci}. 

Suppose that $n\ge m$ and assume that $z_i$ is not a factor of $P(i)$ for $1 \le i \le n-1$. 
We show it still holds for $i=n$, i.e., $z_n$ is not a factor of $P(n)$.
Proceed by contradiction and suppose that $z_n$ is a factor of $P(n)$.
We divide the proof into two cases according to the parity of $n$.


\textbf{Case 1}. Suppose that $n$ is odd.
From Lemma~\ref{lem:z-n-phi-m}, we get 
\begin{align}
P(n)
&= z_0 z_1 \cdots z_{n-2} z_{n-1} \nonumber \\
&= (\phi_m(z_{-1})0)(0^{-1}\phi_m(z_0)) \cdots (0^{-1}\phi_m(z_{n-3}))(\phi_m(z_{n-2})0) \nonumber \\
&= \phi_m(z_{-1} z_0 \cdots z_{n-3} z_{n-2}) 0 \nonumber \\
&= \phi_m(P(n-1)) 0. \label{eq:z-n-not-factor-prod-mbonacci-1}
\end{align}
By hypothesis, $z_n=0^{-1}\phi_m(z_{n-1})$ is a factor of $P(n)=\phi_m(P(n-1)) 0$, so $0z_n=00^{-1}\phi_m(z_{n-1})=\phi_m(z_{n-1})$ is also a factor of $0P(n)=0\phi_m(P(n-1)) 0$ (the reasoning is similar to the one developed in the previous proof).
We claim that $\phi_m(z_{n-1})$ is in fact a factor of $\phi_m(P(n-1))$.
First, using Proposition~\ref{pro:z-n-first-last-letters}, we have
$$
z_{n-1} = 0u0 \quad \text{and} \quad P(n-1)=0v1
$$
for two non-empty words $u,v\in\{0,1,\ldots,m-1\}^+$.
Consequently, $\phi_m(z_{n-1})$ starts and ends with $01$, and $\phi_m(P(n-1))$ starts with $01$ and ends with $02$. 
Thus, $\phi_m(z_{n-1})$ is a factor of $\phi_m(P(n-1))$, as expected.
From Lemma~\ref{lem:strange-technical-lemma-v2}, $z_{n-1}$ is a factor of $P(n-1)$, which contradicts the induction hypothesis.

\textbf{Case 2}. Assume that $n$ is even.
From Lemma~\ref{lem:z-n-phi-m}, we get 
\begin{align}
P(n)
&= z_0 z_1 \cdots z_{n-2} z_{n-1} \nonumber \\
&= (\phi_m(z_{-1})0)(0^{-1}\phi_m(z_0)) \cdots (\phi_m(z_{n-3})0)(0^{-1}\phi_m(z_{n-2})) \nonumber \\
&= \phi_m(z_{-1} z_0 \cdots z_{n-3} z_{n-2})  \nonumber \\
&= \phi_m(P(n-1)). \label{eq:z-n-not-factor-prod-mbonacci-2}
\end{align}
By hypothesis, $z_n=\phi_m(z_{n-1})0$ is a factor of $P(n)=\phi_m(P(n-1))$, so $\phi_m(z_{n-1})$ is also a factor of $\phi_m(P(n-1))$.
Using Proposition~\ref{pro:z-n-first-last-letters}, we have
$$
z_{n-1} = 1u1 \quad \text{and} \quad P(n-1)=0v0
$$
for two non-empty words $u,v\in\{0,1,\ldots,m-1\}^+$.
From Lemma~\ref{lem:strange-technical-lemma-v2}, $z_{n-1}$ is a factor of $P(n-1)$, which also contradicts the induction hypothesis.
\end{proof}

\begin{proposition}\label{pro:z-n-do-not-contain}
For all $0 \le n \le m-1$, the words $z_{-1}, z_{0}, z_{1}, \ldots, z_{n-1}$ do not contain the letter $n$. 
\end{proposition}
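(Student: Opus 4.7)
The plan is to prove by induction on $k$, for $-1 \le k \le m-1$, the stronger statement that every letter appearing in $z_k$ belongs to the set $\{0, 1, \ldots, k\}$ (interpreted as the empty set when $k = -1$). The proposition then follows immediately: for fixed $n$ with $0 \le n \le m-1$ and any $j$ with $-1 \le j \le n-1$, the letters of $z_j$ lie in $\{0, 1, \ldots, j\} \subseteq \{0, 1, \ldots, n-1\}$, so in particular none of them is the letter $n$.

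The base cases $k = -1$ and $k = 0$ are immediate from $z_{-1} = \varepsilon$ and $z_0 = 0$. For the inductive step, I fix $1 \le k \le m-1$ and assume the stronger claim holds for every index strictly less than $k$. Since $k \le m-1$, Definition~\ref{def:z-n-rec-mbonacci}(1) applies and gives
$$z_k = z_{k-2} z_{k-3} \cdots z_1 z_0 \cdot k \cdot z_0 z_1 \cdots z_{k-3} z_{k-2},$$
where the surrounding product is understood to be empty when $k = 1$. Every letter of $z_k$ is therefore either the explicit letter $k$ or a letter appearing in some $z_j$ with $-1 \le j \le k-2$. By the induction hypothesis, those letters all lie in $\{0, 1, \ldots, k-2\}$, so the full set of letters of $z_k$ is contained in $\{0, 1, \ldots, k-2\} \cup \{k\} \subseteq \{0, 1, \ldots, k\}$, which closes the induction.

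No substantive obstacle is anticipated. The crucial point that makes the argument clean is precisely the hypothesis $n \le m-1$: it restricts us to case (1) of Definition~\ref{def:z-n-rec-mbonacci}, where the only new alphabet symbol introduced by the recurrence for $z_k$ is the letter $k$ itself. The second branch of the definition (for indices at least $m$) would only reuse p-singular words of smaller index and thus introduce no new letter beyond $m-1$, which is consistent with the stronger claim but not needed here.
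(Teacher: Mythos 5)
Your proof is correct, but it follows a different route from the paper's. You argue entirely within the fixed alphabet of $\mbonacci$, proving by induction on $k$ the slightly stronger statement that every letter of $z_k^{(m)}$ lies in $\{0,1,\ldots,k\}$, which follows cleanly from Definition~\ref{def:z-n-rec-mbonacci}(1) since for $1\le k\le m-1$ the only letter not coming from smaller-index p-singular words is the central letter $k$. The paper instead handles $n=0,1$ by hand and, for $2\le n\le m-1$, iterates Proposition~\ref{prop:comparison-p-sing-words-different-m} to identify $z_i^{(m)}$ with $z_i^{(n)}$ for $-1\le i\le n-1$; since the words $z_i^{(n)}$ are by construction words over the alphabet $\{0,1,\ldots,n-1\}$, they cannot contain the letter $n$. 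The paper's argument buys brevity by reusing the already-proved comparison between the sequences $(z_i^{(m)})$ and $(z_i^{(m+1)})$, though it leans on the implicit observation that the $n$-bonacci p-singular words live over the $n$-letter alphabet --- precisely the alphabet-tracking fact your induction makes explicit. Your version is more self-contained (it does not need Proposition~\ref{prop:comparison-p-sing-words-different-m} at all) and records the finer information that $z_k$ uses no letter larger than $k$; either proof is perfectly acceptable.
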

\begin{proof}
If $n=0$, then $z_{-1}$ is the empty word, and we are done. 
If $n=1$, each of the words $z_{-1}=\varepsilon$ and $z_{0}=0$ does not contain the letter $1$.
Let $2 \le n \le m-1$.
Then the words $(z_i^{(n)})_{i\ge -1}$ are well defined (see Definition~\ref{def:z-n-rec-mbonacci}).
Iteratively applying Proposition~\ref{prop:comparison-p-sing-words-different-m}, we obtain $z_i^{(n)}=z_i^{(m)}$ for all $-1\le i \le n-1$.
Since $z_i^{(n)}$ is a word defined over the alphabet $\{0,1,\ldots,n-1\}$ for any $i\ge -1$, the conclusion follows. 
%
%
\end{proof}

The following result compares the prefixes of $z_{n-1}$ to suffixes of $z_n$. Its proof follows the same lines as the proof of Lemma~\ref{lem:common-suffix-prefix-Fib} in the Fibonacci case.

\begin{lemma}\label{lem:common-suffix-prefix-mbonacci}
For all $n\ge 0$, the only suffix of $z_{n-1}$ that is also a prefix of $z_n$ is the empty word. 
\end{lemma}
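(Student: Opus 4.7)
The plan is to mirror the Fibonacci-case argument in the proof of Lemma~\ref{lem:common-suffix-prefix-Fib}, proceeding by induction on $n\ge 0$ and exploiting the fact that for $n\ge 2$ the p-singular word $z_n$ starts (and, by palindromicity, ends) with $z_{n-2}$.

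For the base cases $n=0$ and $n=1$, a direct check suffices: $z_{-1}=\varepsilon$ makes the statement trivial when $n=0$, and for $n=1$ we have $z_0=0$, $z_1=1$, whose only common suffix/prefix is $\varepsilon$. For the inductive step, assume $n\ge 2$ and that the result holds for all smaller indices. Suppose for contradiction that there is a non-empty word $x$ which is both a suffix of $z_{n-1}$ and a prefix of $z_n$, so $1\le |x|\le |z_{n-1}|$. From Definition~\ref{def:z-n-rec-mbonacci}, in both cases $2\le n\le m-1$ and $n\ge m$, the word $z_n$ begins with $z_{n-2}$; hence $z_{n-2}$ is a prefix of $z_n$.

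I then split into two cases according to the size of $|x|$ relative to $|z_{n-2}|$. If $|x|\le |z_{n-2}|$, then $x$ is a prefix of $z_{n-2}$; invoking Proposition~\ref{pro:z-n-pal-mbonacci} (both $z_{n-1}$ and $z_{n-2}$ are palindromes), $x^R$ is a non-empty suffix of $z_{n-2}$ and a non-empty prefix of $z_{n-1}$, contradicting the induction hypothesis applied to $n-1$. If instead $|z_{n-2}|< |x|\le |z_{n-1}|$, then $z_{n-2}$ is a proper prefix of $x$, and since $x$ is a suffix of $z_{n-1}$, the word $z_{n-2}$ is a factor of $z_{n-1}$, contradicting Proposition~\ref{pro:z-n-not-factor-mbonacci} with index $n-2$.

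I do not anticipate a serious obstacle: the two key tools, palindromicity of the p-singular words and the ``not-a-factor'' property, are already established, and the decomposition $z_n=z_{n-2}\cdot(\cdots)$ holds uniformly in both recursion regimes. The only delicate point is bookkeeping in the base cases, making sure the induction has enough room to start (here $n=0,1$ handled separately, with $n=2$ already falling into the general inductive argument since $z_2$ begins with $z_0=0$).
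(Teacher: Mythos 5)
Your proof is correct and follows essentially the same route as the paper's: induction on $n$, using that $z_n$ begins with $z_{n-2}$, splitting on whether $|x|\le |z_{n-2}|$ (then reverse via palindromicity to contradict the induction hypothesis at $n-1$) or $|x|>|z_{n-2}|$ (then $z_{n-2}$ is a factor of $z_{n-1}$, contradicting Proposition~\ref{pro:z-n-not-factor-mbonacci}). The only difference is cosmetic: you make the appeal to Proposition~\ref{pro:z-n-pal-mbonacci} explicit where the paper leaves the palindromicity step implicit.
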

\begin{proof}
We proceed by induction on $n\ge 0$.
From Definition~\ref{def:z-n-rec-mbonacci}, we have $z_{-1}=\varepsilon$, $z_0=0$ and $z_1=1$, so the result can be checked by hand for $n\in\{0,1\}$.

Now suppose that $n\ge 2$, and that the only suffix of $z_{k-1}$ that is also a prefix of $z_k$ is the empty word, for all $k\in\{0, 1, \ldots, n-1\}$.
We show that the result still holds for $k=n$. 
Proceed by contradiction and suppose there exists a word $x\in\{0,1,\ldots,m-1\}^{*}$ which is a non-empty suffix of $z_{n-1}$ and a non-empty prefix of $z_n$. We have $1 \le |x| \le |z_{n-1}|$. 
Using Definition~\ref{def:z-n-rec-mbonacci}, $z_n$ starts and ends with $z_{n-2}$. 

If $1 \le |x| \le |z_{n-2}|$, then $x$ is a prefix of $z_{n-2}$ (recall that $x$ is a prefix of $z_n$).
Consequently, $x^R$ is a non-empty suffix of $z_{n-2}$ and a non-empty prefix of $z_{n-1}$. This contradicts the inductive assumption. 

If $|z_{n-2}| \le |x| \le |z_{n-1}|$, then $z_{n-2}$ is a prefix of $x$ (recall that $x$ is a prefix of $z_n$). 
In particular, $z_{n-2}$ is a factor of $x$, and also a factor of $z_{n-1}$ (recall that $x$ is a suffix of $z_{n-1}$).
This contradicts Proposition~\ref{pro:z-n-not-factor-mbonacci}.
\end{proof}

\subsection{Two particular factorizations of the $m$-bonacci word}

In this section, we study two different factorizations of the $m$-bonacci word in terms of p-singular words (see Propositions~\ref{pro:mbonacci-fact-z-n} and~\ref{pro:mbonacci-fact-p-m-Q-n}), extending Proposition~\ref{pro:fact-Fici-Fibonacci}.
The first one is similar to the factorization~\eqref{eq:fact1} of the Fibonacci word given in Proposition~\ref{pro:fact-Fici-Fibonacci}.
To see this, simply put~\eqref{eq:fact1} and Proposition~\ref{pro:f-n-z-n} altogether.

\begin{proposition}\label{pro:mbonacci-fact-z-n}
We have the following factorization of the $m$-bonacci word
$$
\mbonacci = \prod_{n \ge 0} z_n.
$$ 
\end{proposition}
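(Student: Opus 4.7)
The plan is to establish by induction on $N \ge 0$ the identity
\[
\prod_{n=0}^{N} z_n = h_N\, h_{N-2}\, h_{N-4} \cdots h_{\varepsilon(N)},
\]
where $\varepsilon(N)=0$ if $N$ is even and $\varepsilon(N)=1$ if $N$ is odd. Granting this identity, the result is immediate: the right-hand side starts with $h_N = \phi_m^{N}(0)$, which is a prefix of $\mbonacci$ (by Definition~\ref{def:mbonacci}) of length tending to infinity with $N$. Hence the partial products $\prod_{n=0}^{N} z_n$ form an increasing family of prefixes of $\mbonacci$ whose lengths are unbounded, forcing $\mbonacci = \prod_{n \ge 0} z_n$.

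To obtain a form of Lemma~\ref{lem:z-n-mbonacci} convenient for induction, I would first clear the right-inverse. Reversing both sides of the identity in Lemma~\ref{lem:z-n-mbonacci}, applying the rule $(uv)^R = v^R u^R$, and using the palindromicity of $z_n$ provided by Proposition~\ref{pro:z-n-pal-mbonacci}, the lemma takes the equivalent concatenation form
\[
h_{n-1}\, h_{n-3} \cdots h_{0}\cdot z_n = h_n\, h_{n-2} \cdots h_1 \qquad (n \text{ odd}),
\]
\[
h_{n-1}\, h_{n-3} \cdots h_{1}\cdot z_n = h_n\, h_{n-2} \cdots h_0 \qquad (n \text{ even}).
\]

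The induction itself is then a short bookkeeping argument. The base case $N=0$ reads $z_0 = 0 = h_0$. For the inductive step, I would multiply the hypothesis $\prod_{n=0}^{N} z_n = h_N h_{N-2} \cdots h_{\varepsilon(N)}$ on the right by $z_{N+1}$ and apply whichever of the two identities above corresponds to the parity of $N+1$; the chosen identity exactly collapses $h_N h_{N-2} \cdots h_{\varepsilon(N)} \cdot z_{N+1}$ into $h_{N+1} h_{N-1} \cdots h_{\varepsilon(N+1)}$, completing the step.

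There is no serious obstacle: the whole proof treats Lemma~\ref{lem:z-n-mbonacci} as a black box and amounts to a parity-indexed telescoping. The only mild care required is in the preliminary reformulation of that lemma into a pure concatenation identity, and in matching parities correctly at each step of the induction.
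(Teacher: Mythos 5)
Your proposal is correct, but it takes a genuinely different route from the paper's proof. The paper sets $P(n)=z_0z_1\cdots z_{n-1}$ and shows by induction that every $P(n)$ is a prefix of $\mbonacci$, using Lemma~\ref{lem:z-n-phi-m} to write $P(n)=\phi_m(P(n-1))0$ (for $n$ odd) or $P(n)=\phi_m(P(n-1))$ (for $n$ even) and then invoking the fixed-point equation $\mbonacci=\phi_m(\mbonacci)$, with the strict growth of $|P(n)|$ finishing the argument. You instead take Lemma~\ref{lem:z-n-mbonacci} as the key input: reversing it and using the palindromicity of $z_n$ yields exactly the concatenation identities $h_{n-1}h_{n-3}\cdots h_0\,z_n=h_nh_{n-2}\cdots h_1$ ($n$ odd) and $h_{n-1}h_{n-3}\cdots h_1\,z_n=h_nh_{n-2}\cdots h_0$ ($n$ even) --- these are precisely the equivalent statements the paper proves inside the proof of that lemma, so the reformulation is legitimate and there is no circularity (Lemma~\ref{lem:z-n-mbonacci} is established via Lemma~\ref{lem:z-n-phi-m}, independently of the present proposition) --- and your parity-indexed telescoping then gives the closed form $\prod_{n=0}^{N}z_n=h_Nh_{N-2}\cdots h_{\varepsilon(N)}$, a pleasant by-product the paper's proof does not provide. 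The only point to tighten is the final inference: the identity tells you each partial product \emph{begins with} $h_N$, which does not by itself make the partial product a prefix of $\mbonacci$. The fix is one line: $\prod_{n=0}^{N}z_n$ is a prefix of $\prod_{n=0}^{N'}z_n$ for $N'>N$, and the latter begins with $h_{N'}$ whose length eventually exceeds $\bigl|\prod_{n=0}^{N}z_n\bigr|$, so each partial product is indeed a prefix of $\mbonacci$; equivalently, the infinite product and $\mbonacci$ share the arbitrarily long common prefixes $h_N$ and therefore coincide. With that justification supplied, your proof is complete.
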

\begin{proof}
For all $n\ge 0$, set $P(n)=\prod_{k=0}^{n-1} z_{k}$ (when $n=0$, $P(0)$ is the empty word).
To prove the statement, we show two things: 
\begin{itemize}
\item[(1)] For all $n\ge 1$, $|P(n)| > |P(n-1)|$,
\item[(2)] $(P(n))_{n\ge 0}$ is a sequence of prefixes of $\mbonacci$.
\end{itemize}
Then, the mentioned factorization easily follows. 
For all $n\ge 1$, we trivially have 
$$
|P(n)| = |P(n-1)| + |z_{n-1}| > |P(n-1)|,
$$
since $n\ge 1$ implies $|z_{n-1}|>0$.
Thus (1) is proved. 
For (2), we proceed by induction on $n\ge 0$. 
The $m$-bonacci word $\mbonacci$ starts with $01$, so it is clear that $P(n)$ is a prefix of $\mbonacci$ for $n\in\{0,1,2\}$.
Suppose that $n\ge 3$ and that $P(n-1)$ is a prefix of $\mbonacci$. 
The proof is again divided into two parts, according to the parity of $n$. 

\textbf{Case 1}. Suppose that $n$ is odd. 
From~\eqref{eq:z-n-not-factor-prod-mbonacci-1} (which is valid for any odd $n\ge 0$), we know that 
$P(n) = \phi_m(P(n-1)) 0$, and using Proposition~\ref{pro:z-n-first-last-letters}, $P(n-1)$ ends with $1$. 
By the induction hypothesis, $P(n-1)$ is a prefix of $\mbonacci$ ending with $1$. 
Thus, there exists an infinite word $\boldsymbol{z}$ over $\{0,1, \ldots, m-1\}$ such that $\mbonacci = P(n-1) 0 \boldsymbol{z}$. 
Since $\mbonacci$ is a fixed point of $\phi_m$, we get
$$
\mbonacci = \phi_m(\mbonacci) = \phi_m(P(n-1)) 01 \phi_m(\boldsymbol{z}) = P(n)1 \phi_m(\boldsymbol{z}),
$$ 
showing that $P(n)$ is also a prefix of $\mbonacci$.

\textbf{Case 2}. Assume that $n$ is even. 
From~\eqref{eq:z-n-not-factor-prod-mbonacci-2} (which is valid for any even $n\ge 0$), we already have $P(n)= \phi_m(P(n-1))$.
By the induction hypothesis, there exists an infinite word $\boldsymbol{z}$ over $\{0,1, \ldots, m-1\}$ such that $\mbonacci = P(n-1) \boldsymbol{z}$.
Since $\mbonacci$ is a fixed point of $\phi_m$, we get
$$
\mbonacci = \phi_m(\mbonacci) = \phi_m(P(n-1)) \phi_m(\boldsymbol{z}) = P(n) \phi_m(\boldsymbol{z}),
$$ 
as desired.
\end{proof}

The factorization of the $m$-bonacci word in Proposition~\ref{pro:mbonacci-fact-p-m-Q-n} is similar to the factorization~\eqref{eq:fact2} of the Fibonacci word given in Proposition~\ref{pro:fact-Fici-Fibonacci}.
We first need some notations. 

\begin{definition}\label{def:mbonacci-particular-prefix-and-factors}
Let $p_m$ be the finite word over $\{0,1, \ldots, m-1\}$ defined by 
$$
z_0 z_1 \cdots z_{m-4} z_{m-3} z_{m-2} z_{m-3} z_{m-4} \cdots z_1 z_0 (m-1).
$$
For all $n\ge m-2$, define the finite word $Q(n)$ over $\{0,1, \ldots, m-1\}$ by
$$
Q(n) = z_{n-(m-1)} z_{n-(m-2)} \cdots z_{n-2} z_{n-1} z_n z_{n-1} z_{n-2} \cdots  z_{n-(m-2)} z_{n-(m-1)}.
$$
Note that the word $Q(n)$ is centered at $z_n$. 
\end{definition}

\begin{example}
If $m=2$, $p_2=z_0(2-1)=01$ and $Q(n)=z_{n-1} z_n z_{n-1}$ for all $n\ge 0$.
When $m=3$, we find $p_3=z_0z_1z_0(3-1)=0102$, and for all $n\ge 1$, we have $Q(n)=z_{n-2} z_{n-1} z_n z_{n-1} z_{n-2}$.
\end{example}

\begin{proposition}\label{pro:mbonacci-fact-p-m-Q-n}
We have the following factorization of the $m$-bonacci word
$$
\mbonacci =
p_m \cdot \prod_{n \ge m-2} Q(n).
$$
\end{proposition}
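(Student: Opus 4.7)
My plan is to deduce this proposition from Proposition~\ref{pro:mbonacci-fact-z-n}, which already gives $\mbonacci = \prod_{k \ge 0} z_k$, by proving by induction on $N \ge m-2$ the finite identity
\[
p_m \prod_{n=m-2}^{N} Q(n) \;=\; \Bigl(\prod_{k=0}^{N+1} z_k\Bigr)\, z_N z_{N-1} \cdots z_{N-(m-1)},
\]
using the convention $z_{-1} = \varepsilon$ from Definition~\ref{def:z-n-rec-mbonacci}. Once this identity is in hand, the conclusion is immediate: by Proposition~\ref{pro:mbonacci-fact-z-n}, $\prod_{k=0}^{N+1} z_k$ is a prefix of $\mbonacci$, and by Definition~\ref{def:z-n-rec-mbonacci}(2) the trailing chain $z_N z_{N-1} \cdots z_{N-(m-1)}$ is exactly the initial $m$-factor segment appearing in the recursive decomposition of $z_{N+2}$; hence $p_m \prod_{n=m-2}^{N} Q(n)$ is itself a prefix of $\mbonacci$, and since its length diverges with $N$, the two infinite words must coincide.

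The base case $N = m-2$ is a direct computation. After simplifying $z_{-1} = \varepsilon$ in $Q(m-2)$, one writes
\[
p_m\, Q(m-2) \;=\; z_0 z_1 \cdots z_{m-2}\, z_{m-3} \cdots z_0\,(m-1)\, z_0 z_1 \cdots z_{m-3}\, z_{m-2}\, z_{m-3} \cdots z_0,
\]
and recognises the central block $z_{m-3} \cdots z_0\,(m-1)\, z_0 \cdots z_{m-3}$ as $z_{m-1}$ via Definition~\ref{def:z-n-rec-mbonacci}(1). Regrouping then yields $\bigl(\prod_{k=0}^{m-1} z_k\bigr)\, z_{m-2} z_{m-3} \cdots z_0$, matching the claimed formula at $N = m-2$.

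For the inductive step from $N-1$ to $N$ (valid whenever $N \ge m-1$), I would append $Q(N) = z_{N-(m-1)} \cdots z_{N-1}\, z_N\, z_{N-1} \cdots z_{N-(m-1)}$ to the inductive hypothesis. The crucial observation is that the trailing chain $z_{N-1} z_{N-2} \cdots z_{N-m}$ coming from the hypothesis concatenates with the initial $m-1$ factors $z_{N-(m-1)} \cdots z_{N-1}$ of $Q(N)$ to form a palindromic chain whose ordered index sequence reads
\[
N-1,\, N-2,\, \ldots,\, N-m+1,\, N-m,\, N-m+1,\, \ldots,\, N-2,\, N-1,
\]
which by Definition~\ref{def:z-n-rec-mbonacci}(2) is precisely the recursive expansion of $z_{N+1}$. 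The leftover tail $z_N z_{N-1} \cdots z_{N-(m-1)}$ of $Q(N)$ then becomes the new trailing chain at level $N$, and the induction closes.

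The main obstacle is careful index bookkeeping at the boundary between $Q(N-1)$ and $Q(N)$: one must recognise that the $2m-1$ p-singular factors straddling this boundary reassemble into the single new word $z_{N+1}$ via the recursion in Definition~\ref{def:z-n-rec-mbonacci}(2). Once this telescoping pattern is correctly identified, the identity propagates mechanically from $N-1$ to $N$, and the factorization stated in the proposition follows.
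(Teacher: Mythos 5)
Your proposal is correct. It rests on the same two ingredients as the paper's proof (the factorization $\mbonacci=\prod_{k\ge 0}z_k$ of Proposition~\ref{pro:mbonacci-fact-z-n} and the recursion of Definition~\ref{def:z-n-rec-mbonacci}(2), which glues the descending tail left over at one stage to the ascending head of the next $Q$-block to form $z_{N+1}$), but you organize the bookkeeping differently. The paper substitutes the recursive expansion into every $z_n$ with $n\ge m-1$ and regroups the resulting \emph{infinite} product in one shot, leaving the telescoping across block boundaries implicit; you instead prove the finite identity $p_m\prod_{n=m-2}^{N}Q(n)=\bigl(\prod_{k=0}^{N+1}z_k\bigr)z_Nz_{N-1}\cdots z_{N-(m-1)}$ by induction and then pass to the limit via a prefix argument. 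Note that your finite identity is precisely the content of the paper's Proposition~\ref{pro:mbonacci-prefix-g-n}: since $g_{N+1}=p_m\prod_{n=m-2}^{N}Q(n)$ by Definition~\ref{def:mbonacci-prefix-g-n}, and $z_0\cdots z_{N+1-m}\,Q(N+1)\,z_{N+1-m}=\bigl(\prod_{k=0}^{N+1}z_k\bigr)z_N\cdots z_{N+1-m}$, you are in effect proving that proposition first and deducing the factorization from it, whereas the paper establishes the factorization directly and proves the $g_n$-identity afterwards by a very similar induction. Your route buys a fully explicit justification of the regrouping (the paper's ``we finally get'' step) at the modest cost of the closing limit argument; incidentally, that closing step can be simplified: from your identity, $\prod_{k=0}^{N+1}z_k$ is a common prefix of $\mbonacci$ and of $p_m\prod_{n\ge m-2}Q(n)$ with length tending to infinity, so the two infinite words coincide without needing the observation about the initial segment of $z_{N+2}$ (which is nevertheless correct).
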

\begin{proof}
From Proposition~\ref{pro:mbonacci-fact-z-n}, we get
\begin{equation}\label{eq:mbonacci-second-fact-1}
\mbonacci = \prod_{n \ge 0} z_n 
= z_0 \cdot z_1 \cdots z_{m-2} \cdot  z_{m-1} \cdot  \prod_{n \ge m} z_n.
\end{equation}
Using Definition~\ref{def:z-n-rec-mbonacci}(1), we have
$$
z_{m-1} = z_{m-3} z_{m-4} \cdots z_1 z_0 (m-1) z_0 z_1 \cdots z_{m-4} z_{m-3},
$$
and for all $n\ge m$, Definition~\ref{def:z-n-rec-mbonacci}(2) shows that
$$
z_n= z_{n-2} z_{n-3} \cdots z_{n-(m-1)} z_{n-m} z_{n-(m+1)} z_{n-m} z_{n-(m-1)}  \cdots z_{n-3} z_{n-2}.
$$
Plugging these equalities into~\eqref{eq:mbonacci-second-fact-1}, we find
\begin{align*}
\mbonacci =& z_0 \cdot z_1 \cdots z_{m-2} \cdot (z_{m-3} z_{m-4} \cdots z_1 z_0 (m-1) z_0 z_1 \cdots z_{m-4} z_{m-3}) \\ &\cdot  \prod_{n \ge m} z_{n-2} z_{n-3} \cdots z_{n-(m-1)} z_{n-m} z_{n-(m+1)} z_{n-m} z_{n-(m-1)}  \cdots z_{n-3} z_{n-2} \\
=& p_m  \cdot z_0 z_1 \cdots z_{m-4} z_{m-3} \\
 &\cdot  \prod_{n \ge m} z_{n-2} z_{n-3} \cdots z_{n-(m-1)} z_{n-m} z_{n-(m+1)} z_{n-m} z_{n-(m-1)}  \cdots z_{n-3} z_{n-2}
\end{align*}
using Definition~\ref{def:mbonacci-particular-prefix-and-factors}. 
Since $z_{-1}=\varepsilon$, we finally get
\[
\mbonacci = p_m \prod_{n \ge m-2} z_{n-(m-1)} z_{n-(m-2)} \cdots z_{n-2} z_{n-1} z_n z_{n-1} z_{n-2} \cdots  z_{n-(m-2)} z_{n-(m-1)} = p_m \cdot \prod_{n \ge m-2} Q(n).
\qedhere
\]
\end{proof}

In the following proposition, we get a particular factorization of the prefix $p_m$ of the $m$-bonacci word $\mbonacci$.
This factorization is a step forward to obtain the palindromic $c$-factorization of the $m$-bonacci word $\mbonacci$.

\begin{proposition}\label{pro:mbonacci-p-m-fact}
The word $p_m$ can be factorized as
$$
p_m = z_0 \cdot 1 \cdot z_0 \cdot 2 \cdot (z_0 z_1 z_0) \cdot 3 \cdots (z_0 z_1 \cdots z_{m-4} z_{m-3} z_{m-4} \cdots z_1 z_0) \cdot (m-1).
$$
In particular, this factorization contains $2m-2$ factors and all of them are palindromes. 
Moreover, if this factorization is written as
$$
p_m = q_1 q_2 \cdots q_{2m-2}, 
$$ 
then, for all $1 \le k \le 2m-2$ and for any infinite word $\boldsymbol{w}$, $q_k$ is the longest palindromic prefix of $q_k q_{k+1} \cdots q_{2m-2}\boldsymbol{w}$ with a previous occurrence in $p_{m}\boldsymbol{w}=q_1 q_2 \cdots q_{2m-2}\boldsymbol{w}$, or if this prefix does not exist, the factor $q_k$ is a single letter.
\end{proposition}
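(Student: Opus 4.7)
The argument splits into three stages. \emph{Stage 1 (the factorization identity).} I would first introduce the shorthand $T_0 := z_0$ and, for $k \ge 1$,
\[
T_k := z_0 z_1 \cdots z_{k-1} z_k z_{k-1} \cdots z_1 z_0.
\]
Expanding the central occurrence of $z_{m-2}$ inside $p_m$ through Definition~\ref{def:z-n-rec-mbonacci}(1) yields the recursion $p_m = p_{m-1} \cdot T_{m-3} \cdot (m-1)$ for $m \ge 3$, with base case $p_2 = z_0 \cdot 1$. A direct induction on $m$ then shows that $p_m$ splits into the $2m-2$ factors of the claim, namely $q_1 = z_0$, $q_{2j} = j$ for $1 \le j \le m-1$, and $q_{2j+1} = T_{j-1}$ for $1 \le j \le m-2$. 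By-products are the identities $q_1 \cdots q_{2j} = p_{j+1}$ and $q_1 \cdots q_{2j+1} = p_{j+1} \cdot T_{j-1}$, which I will reuse later. \emph{Stage 2 (palindromicity).} The letter factors $q_{2j} = j$ are trivially palindromes, and each $T_k$ is a palindrome because every $z_i$ is one by Proposition~\ref{pro:z-n-pal-mbonacci} and the sequence of $z_i$'s composing $T_k$ is itself symmetric about the central $z_k$.

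\emph{Stage 3 (longest palindromic prefix property).} I would treat three cases. If $k = 1$, no earlier position exists and $q_1 = z_0$ is a single letter by the default rule. For a letter factor $q_{2j} = j$ with $1 \le j \le m-1$, any palindromic prefix of $q_{2j} q_{2j+1} \cdots \boldsymbol{w}$ starts with $j$; but by Proposition~\ref{pro:z-n-do-not-contain}, each $z_i$ with $i \le j-1$ lies in $\{0, 1, \ldots, j-1\}$, so $q_1 \cdots q_{2j-1}$ also avoids $j$. Hence no palindromic prefix starting with $j$ occurs at any position strictly less than $|q_1 \cdots q_{2j-1}|$, and the default rule forces $q_{2j} = j$. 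For a central factor $q_{2j+1} = T_{j-1}$ with $1 \le j \le m-2$, note that $p_{j+1} = T_{j-1} \cdot j$, so $T_{j-1}$ already occurs at position $0$, strictly before the current position $|q_1 \cdots q_{2j}| = |p_{j+1}|$. For maximality, suppose toward a contradiction that some palindromic prefix of $T_{j-1} \cdot (j+1) \cdot T_j \cdots \boldsymbol{w}$ of length $L > |T_{j-1}|$ occurs at some position $p < |p_{j+1}|$. Its $(|T_{j-1}|+1)$-th letter is $j+1$, forcing $p_m \boldsymbol{w}[p + |T_{j-1}|] = j+1$ with $p + |T_{j-1}| < |p_{j+1}| + |T_{j-1}| = |q_1 \cdots q_{2j+1}|$. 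But the same alphabet argument shows that $q_1 \cdots q_{2j+1}$ lies in $\{0, 1, \ldots, j\}$ and thus contains no $j+1$, a contradiction. Hence $T_{j-1}$ is indeed the longest admissible palindromic prefix.

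The main obstacle is this last sub-case of Stage~3: one has to align the relative position of the letter $j+1$ inside any hypothetical longer palindromic prefix with the exact location of its very first occurrence in $p_m$, a location pinned down jointly by the length identity $|q_1 \cdots q_{2j+1}| = |p_{j+1}| + |T_{j-1}|$ obtained in Stage~1 and by the alphabet restrictions coming from Proposition~\ref{pro:z-n-do-not-contain}. Everything else (the recursion, the induction, and the palindromicity check) is routine bookkeeping.
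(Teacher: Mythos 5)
Your proposal is correct, and its first two stages coincide with the paper's proof: the paper obtains exactly your recursion $p_m = p_{m-1}\cdot T_{m-3}\cdot(m-1)$ by expanding the central $z_{m-2}$ via Definition~\ref{def:z-n-rec-mbonacci}(1), inducts on $m$, and checks palindromicity; note that identifying the resulting prefix with $p_{m-1}$ (and your by-product identities $q_1\cdots q_{2j}=p_{j+1}$, $p_{j+1}=T_{j-1}\cdot j$) silently uses Proposition~\ref{prop:comparison-p-sing-words-different-m}, since $p_{j+1}$ is defined from the $(j+1)$-letter p-singular words --- either cite it, as the paper does, or read $p_{j+1}$ as shorthand for $T_{j-1}\cdot j$ within the fixed $m$-letter setting. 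Where you diverge is Stage~3: the paper handles the factors $q_1,\ldots,q_{2m-4}$ by appealing to the induction hypothesis (this is precisely why the statement is quantified over an arbitrary infinite word $\boldsymbol{w}$: the tail $q_{2m-3}q_{2m-2}\boldsymbol{w}$ is just another such word), and argues directly only for $q_{2m-3}$ and $q_{2m-2}$ using Proposition~\ref{pro:z-n-do-not-contain}. You instead unroll that induction and prove the longest-palindromic-prefix property uniformly for every factor: an even-indexed letter $j$ has no previous occurrence because $q_1\cdots q_{2j-1}$ avoids $j$, and maximality at an odd-indexed factor holds because a longer prefix would force the letter $j+1$ to appear at a position strictly below $|q_1\cdots q_{2j+1}|=|p_{j+1}|+|T_{j-1}|$, inside a prefix that avoids $j+1$. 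This is exactly the paper's argument for its last two factors, applied at every level; it yields a self-contained, non-inductive proof of the second half (the quantifier over $\boldsymbol{w}$ then comes for free) at the price of explicit position bookkeeping, whereas the paper's version delegates all but two factors to the induction hypothesis. I see no gap in either route.
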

\begin{proof}
To prove this result, we proceed by induction on $m\ge 2$. 
To avoid any confusion, from Definition~\ref{def:mbonacci-particular-prefix-and-factors}, write 
$$
p_m =
z^{(m)}_0 z^{(m)}_1 \cdots z^{(m)}_{m-4} z^{(m)}_{m-3} z^{(m)}_{m-2} z^{(m)}_{m-3} z^{(m)}_{m-4} \cdots z^{(m)}_1 z^{(m)}_0 (m-1).
$$
The case $m=2$ is easily checked for we have $p_2=01=z_0^{(2)}\cdot 1$. 
Now suppose that $m\ge 3$ and assume that the result holds for values less than $m$. 
Let us prove the first part of the statement.
Using Definition~\ref{def:z-n-rec-mbonacci}(1) to rewrite $ z^{(m)}_{m-2}$, we first have
\begin{align}
p_m =&
z^{(m)}_0 z^{(m)}_1 \cdots z^{(m)}_{m-4} z^{(m)}_{m-3} (z^{(m)}_{m-4} z^{(m)}_{m-5} \cdots z^{(m)}_1 z^{(m)}_0 (m-2) z^{(m)}_0 z^{(m)}_1 \cdots z^{(m)}_{m-5} z^{(m)}_{m-4}) \nonumber \\ &z^{(m)}_{m-3} z^{(m)}_{m-4} \cdots z^{(m)}_1 z^{(m)}_0 (m-1) \nonumber \\
=& z^{(m)}_0 z^{(m)}_1 \cdots z^{(m)}_{m-5} z^{(m)}_{m-4} z^{(m)}_{m-3}  z^{(m)}_{m-4} z^{(m)}_{m-5} \cdots z^{(m)}_1 z^{(m)}_0 (m-2) \label{eq:p-m-fact} \\
&z^{(m)}_0 z^{(m)}_1 \cdots z^{(m)}_{m-5} z^{(m)}_{m-4} z^{(m)}_{m-3} z^{(m)}_{m-4} z^{(m)}_{m-5} \cdots z^{(m)}_1 z^{(m)}_0 (m-1). \nonumber
\end{align}
From Proposition~\ref{prop:comparison-p-sing-words-different-m}, the two finite words
$$
z^{(m)}_0 z^{(m)}_1 \cdots z^{(m)}_{m-5} z^{(m)}_{m-4} z^{(m)}_{m-3} z^{(m)}_{m-4} z^{(m)}_{m-5} \cdots z^{(m)}_1 z^{(m)}_0 (m-2)
$$
and 
$$
z^{(m-1)}_0 z^{(m-1)}_1 \cdots z^{(m-1)}_{m-5} z^{(m-1)}_{m-4} z^{(m-1)}_{m-3} z^{(m-1)}_{m-4} z^{(m-1)}_{m-5} \cdots z^{(m-1)}_1 z^{(m-1)}_0 (m-2)
$$
are equal since $z^{(m)}_k = z^{(m-1)}_k$ for all $-1 \le k \le m-2$.
Since the latest word is $p_{m-1}$ by Definition~\ref{def:mbonacci-particular-prefix-and-factors}, we deduce that
$$
p_m = p_{m-1} z^{(m)}_0 z^{(m)}_1 \cdots z^{(m)}_{m-5} z^{(m)}_{m-4} z^{(m)}_{m-3} z^{(m)}_{m-4}  z^{(m)}_{m-5} \cdots z^{(m)}_1 z^{(m)}_0 (m-1).
$$ 
By the induction hypothesis, we know that
\begin{align*}
p_{m-1} =& z^{(m-1)}_0 \cdot 1 \cdot z^{(m-1)}_0 \cdot 2 \cdot (z^{(m-1)}_0 z^{(m-1)}_1 z^{(m-1)}_0) \cdot 3 \cdots \\
&(z^{(m-1)}_0 z^{(m-1)}_1 \cdots z^{(m-1)}_{m-5} z^{(m-1)}_{m-4} z^{(m-1)}_{m-5} \cdots z^{(m-1)}_1 z^{(m-1)}_0) \cdot (m-2).
\end{align*}
Proposition~\ref{prop:comparison-p-sing-words-different-m} finally gives
\begin{align*}
p_m =& z^{(m)}_0 \cdot 1 \cdot z^{(m)}_0 \cdot 2 \cdot (z^{(m)}_0 z^{(m)}_1 z^{(m)}_0) \cdot 3 \cdots (z^{(m)}_0 z^{(m)}_1 \cdots z^{(m)}_{m-5} z^{(m)}_{m-4} z^{(m)}_{m-5} \cdots z^{(m)}_1 z^{(m)}_0) \cdot (m-2)  \\
& (z^{(m)}_0 z^{(m)}_1 \cdots z^{(m)}_{m-5}  z^{(m)}_{m-4} z^{(m)}_{m-3} z^{(m)}_{m-4} z^{(m)}_{m-5} \cdots z^{(m)}_1 z^{(m)}_0) \cdot (m-1),
\end{align*}
as expected. 
Moreover, using the induction hypothesis, this factorization contains $(2\cdot (m-1)-2) + 2=2 m - 2$ factors, which are all palindromes. 
Note that we have 
$$
q_{2m-3} = z^{(m)}_0 z^{(m)}_1 \cdots z^{(m)}_{m-5}  z^{(m)}_{m-4} z^{(m)}_{m-3} z^{(m)}_{m-4} z^{(m)}_{m-5} \cdots z^{(m)}_1 z^{(m)}_0
$$
and $q_{2m-2} = m-1$. 
This ends the proof of the first part of the statement.

Let us show that the second part of the statement also holds. 
The proof is divided into three cases according to the value of the index of the considered factor $q_k$. 

\textbf{Case 1}. Suppose that $1\le k \le 2m-4$. 
For all infinite word $\boldsymbol{w}$, $q_k$ is the longest palindromic prefix of $q_k q_{k+1} \cdots q_{2m-2}\boldsymbol{w}$ with a previous occurrence in $p_{m}\boldsymbol{w}=q_1 q_2 \cdots q_{2m-2}\boldsymbol{w}$, or if this prefix does not exist, $q_k$ is limited to a single letter.
Indeed, by the induction hypothesis and since $q_{2m-3}q_{2m-2}\boldsymbol{w}$ is a particular infinite word, $q_k$ is the longest palindromic prefix of $q_k q_{k+1} \cdots q_{2m-4}(q_{2m-3}q_{2m-2}\boldsymbol{w})$ with a previous occurrence in $p_{m-1}(q_{2m-3}q_{2m-2}\boldsymbol{w})=p_{m}\boldsymbol{w}=q_1 q_2 \cdots q_{2m-2}\boldsymbol{w}$, or if this prefix does not exist, $q_k$ is limited to a single letter..

\textbf{Case 2}. Assume that $k = 2m-3$, and let $\boldsymbol{w}$ be any infinite word. 
Looking at~\eqref{eq:p-m-fact}, we get
$$
p_m \boldsymbol{w} 
= q_1 q_2 \cdots q_{2m-4} q_{2m-3} q_{2m-2} \boldsymbol{w}
= q_{2m-3} (m-2) q_{2m-3} (m-1) \boldsymbol{w}.
$$
Using Proposition~\ref{pro:z-n-do-not-contain}, we see that $q_{2m-3}$ is the longest palindromic prefix of $q_{2m-3} q_{2m-2} \boldsymbol{w}$ that has already occurred in $p_m \boldsymbol{w}$.

\textbf{Case 3}. Suppose that $k = 2m-2$, and let $\boldsymbol{w}$ be any infinite word.
Proposition~\ref{pro:z-n-do-not-contain} shows that $ m-1=q_{2m-2}$ does not appear previously in $p_m \boldsymbol{w}$. 
Hence, $q_{2m-2} = m-1$ also satisfies the second part of the statement.
%
%
%
\end{proof}

Since the idea is to adopt the same strategy as in the previous case, we define a sequence of specific prefixes of the $m$-bonacci word $\mbonacci$. 
This definition gives the sequence of prefixes of Definition~\ref{def:prefix} in the Fibonacci case as proved in Remark~\ref{rk:mbonacci-Fibonacci-prefix-g-n}.

\begin{definition}\label{def:mbonacci-prefix-g-n}
For all $n\ge m-2$, define 
$$
g_n:= p_m \prod_{m-2 \le k \le n-1} Q(k), 
$$
where the words $p_m$ and $(Q(k))_{k\ge m-2}$ are given in Definition~\ref{def:mbonacci-particular-prefix-and-factors}.
Notice that $g_{m-2} = p_m  \cdot \varepsilon = p_m$.
From Proposition~\ref{pro:mbonacci-fact-p-m-Q-n}, also observe that, for all $n\ge m-2$, we have 
$$
\mbonacci = g_n \cdot Q(n) \cdot \prod_{ k\ge n+1 } Q(k).
$$
\end{definition}

\begin{remark}\label{rk:mbonacci-Fibonacci-prefix-g-n}
Let $(g'_n)_{n\ge 2}$ denote the sequence of words of Definition~\ref{def:prefix}.
For all $n\ge 1$, Proposition~\ref{pro:f-n-z-n} gives
$$
g'_{n+1}
= 010 \prod_{2 \le k \le n} \hat{f}_{k-1} \hat{f}_{k} \hat{f}_{k-1}
= 010 \prod_{2 \le k \le n} z_{k-2} z_{k-1} z_{k-2} 
= 01 \prod_{0 \le k \le n-1} z_{k-1} z_{k} z_{k-1}
= g_n. 
$$
This shows that Definition~\ref{def:mbonacci-prefix-g-n} agrees with Definition~\ref{def:prefix} when $m=2$. 
\end{remark}

As in the Fibonacci case (see Proposition~\ref{pro:prefix-Fib}), any word $g_n$ can be written using p-singular words. 

\begin{proposition}\label{pro:mbonacci-prefix-g-n}
For all $n\ge m-1$, we have 
$$
g_n = z_0 z_1 \cdots z_{n-m} \cdot Q(n) \cdot z_{n-m}.
$$
\end{proposition}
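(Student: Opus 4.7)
The plan is to prove the identity by induction on $n \ge m-1$, with the inductive step essentially reducing to the recursive definition of the p-singular words given in Definition~\ref{def:z-n-rec-mbonacci}(2).

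For the base case $n = m-1$, the product $z_0 z_1 \cdots z_{n-m}$ is empty (since $n-m = -1$) and $z_{n-m} = z_{-1} = \varepsilon$, so the claim reduces to $g_{m-1} = Q(m-1)$. From Definition~\ref{def:mbonacci-prefix-g-n}, $g_{m-1} = p_m \cdot Q(m-2)$. Expanding $p_m$ and $Q(m-2)$ using Definitions~\ref{def:mbonacci-particular-prefix-and-factors} (noting $z_{-1} = \varepsilon$), one obtains
\[
g_{m-1} = z_0 z_1 \cdots z_{m-3} z_{m-2} z_{m-3} \cdots z_1 z_0 \cdot (m-1) \cdot z_0 z_1 \cdots z_{m-3} z_{m-2} z_{m-3} \cdots z_1 z_0.
\]
On the other hand, $Q(m-1) = z_0 z_1 \cdots z_{m-2} z_{m-1} z_{m-2} \cdots z_1 z_0$, and substituting the formula for $z_{m-1}$ from Definition~\ref{def:z-n-rec-mbonacci}(1) produces exactly the same expression. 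This establishes the base case.

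For the inductive step, assume $g_n = z_0 z_1 \cdots z_{n-m} \cdot Q(n) \cdot z_{n-m}$ and note that by Definition~\ref{def:mbonacci-prefix-g-n}, $g_{n+1} = g_n \cdot Q(n)$. Applying the inductive hypothesis, the desired equality $g_{n+1} = z_0 z_1 \cdots z_{n-m+1} \cdot Q(n+1) \cdot z_{n-m+1}$ reduces (after cancelling the common prefix $z_0 z_1 \cdots z_{n-m}$) to
\[
Q(n) \cdot z_{n-m} \cdot Q(n) = z_{n-m+1} \cdot Q(n+1) \cdot z_{n-m+1}.
\]
Writing out both sides using the definition of $Q$ and setting $a = n-m+1$, one checks that after cancelling the symmetric outer factors $z_a z_{a+1} \cdots z_n$ on the left and $z_n z_{n-1} \cdots z_a$ on the right, the identity becomes
\[
z_{n-1} z_{n-2} \cdots z_{n-m+1} z_{n-m} z_{n-m+1} \cdots z_{n-2} z_{n-1} = z_{n+1}.
\]
Since $n+1 \ge m$, this is precisely the recursion Definition~\ref{def:z-n-rec-mbonacci}(2) applied to $z_{n+1}$, which completes the induction.

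The argument is essentially bookkeeping: the only thing to be careful about is matching indices correctly between $Q(n)$, $Q(n+1)$, and the recursion for $z_{n+1}$, and handling the edge case $z_{-1} = \varepsilon$ in the base case so that the empty products collapse as intended. No serious obstacle is expected beyond this careful alignment of indices.
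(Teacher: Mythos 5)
Your proof is correct and follows essentially the same route as the paper: induction on $n$, with the base case obtained by expanding $g_{m-1}=p_mQ(m-2)$ and recombining via Definition~\ref{def:z-n-rec-mbonacci}(1), and the inductive step using $g_{n+1}=g_nQ(n)$ together with Definition~\ref{def:z-n-rec-mbonacci}(2) applied to $z_{n+1}$. Presenting the inductive step as cancellation of the common outer factors is only a cosmetic repackaging of the paper's direct recombination, and your index bookkeeping (including the $z_{-1}=\varepsilon$ edge case) checks out.
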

\begin{proof}
We proceed by induction on $n\ge m-1$.
For the base case $n=m-1$, Definitions~\ref{def:mbonacci-particular-prefix-and-factors} and~\ref{def:mbonacci-prefix-g-n} give
\begin{align*}
g_{m-1} =& p_m Q(m-2) \\
=& z_0 z_1 \cdots z_{m-4} z_{m-3} z_{m-2} z_{m-3} z_{m-4} \cdots z_1 z_0 (m-1) \\
&(z_{-1} z_{0} \cdots z_{m-4} z_{m-3} z_{m-2} z_{m-3} z_{m-4} \cdots  z_{0} z_{-1}).
\end{align*}
By Definition~\ref{def:z-n-rec-mbonacci}(1) ($1 \le m-1$), we have
$$
z_{m-1} = z_{m-3} z_{m-4} \cdots z_1 z_0 (m-1) z_0 z_1 \cdots z_{m-4} z_{m-3}.
$$
Using Definition~\ref{def:mbonacci-particular-prefix-and-factors}, we thus have
\begin{align*}
g_{m-1} 
&= z_0 z_1 \cdots z_{m-4} z_{m-3} z_{m-2} z_{m-1} z_{m-2} z_{m-3} z_{m-4} \cdots  z_{0} z_{-1} \\
&= z_{-1} (z_0 z_1 \cdots z_{m-4} z_{m-3} z_{m-2} z_{m-1} z_{m-2} z_{m-3} z_{m-4} \cdots  z_1 z_{0}) z_{-1} \\
& = z_{-1} \cdot Q(m-1) \cdot z_{-1},
\end{align*}
as expected. 
%
%
Assume that $n\ge m-1$, and suppose the result holds up to $n$ and we show it still holds for $n+1$. 
Using Definition~\ref{def:mbonacci-prefix-g-n}, we have $g_{n+1} = g_n  Q(n)$.
By the induction hypothesis, we get
$$
g_{n+1}
= z_0 z_1 \cdots z_{n-m} \cdot Q(n) \cdot z_{n-m} \cdot Q(n).
$$
Rewriting $Q(n)$ using Definition~\ref{def:mbonacci-particular-prefix-and-factors}, we find
\begin{align*}
g_{n+1}
=& z_0 z_1 \cdots z_{n-m} \cdot (z_{n-(m-1)} z_{n-(m-2)} \cdots z_{n-2} z_{n-1} z_n z_{n-1} z_{n-2} \cdots  z_{n-(m-2)} z_{n-(m-1)}) \cdot z_{n-m} \\ 
&\cdot (z_{n-(m-1)} z_{n-(m-2)} \cdots z_{n-2} z_{n-1} z_n z_{n-1} z_{n-2} \cdots  z_{n-(m-2)} z_{n-(m-1)}).
\end{align*}
Since $n+1 \ge m$, from Definition~\ref{def:z-n-rec-mbonacci}(2), we have 
$$
z_{n+1}=z_{n-1} z_{n-2} \cdots z_{n+1-(m-1)} z_{n+1-m} z_{n+1-(m+1)} z_{n+1-m} z_{n+1-(m-1)}  \cdots z_{n-2} z_{n-1},
$$ 
and we deduce that
\begin{align*}
g_{n+1} 
&= z_0 z_1 \cdots z_{n-m} z_{n+1-m} z_{n+1-(m-1)} \cdots z_{n-2} z_{n-1} z_n z_{n+1} z_n z_{n-1} z_{n-2} \cdots  z_{n+1-(m-1)} z_{n+1-m} \\
&=z_0 z_1 \cdots z_{n-m} z_{n+1-m} ( z_{n+1-(m-1)}  z_{n+1-(m-2)} \cdots z_{n-1} z_n z_{n+1} z_n z_{n-1}  \cdots  z_{n+1-(m-1)} ) z_{n+1-m}.
\end{align*}
Consequently, from Definition~\ref{def:mbonacci-particular-prefix-and-factors}, we obtain 
$$
g_{n+1} 
= z_0 z_1 \cdots z_{n-m} z_{n+1-m} \cdot Q(n+1) \cdot z_{n+1-m},
$$
which ends the proof. 
\end{proof}

\subsection{The palindromic $z$-factorization of the $m$-bonacci word}

In this section, we obtain the palindromic $z$-factorization of the $m$-bonacci word.

\begin{lemma}\label{lem:pal-preimages}
  Let $p$ be a non-empty palindromic factor of $\mbonacci$.
\begin{itemize}
  \item If $p$ begins with the letter $0$, then $p = \phi_m(p')0$, where $p'$ is a palindromic factor of $\mbonacci$.
  \item If $p$ begins with the letter $a\neq 0$, then $p = 0^{-1}\phi_m(p')$, where $p'$ is a palindromic factor of $\mbonacci$.
\end{itemize}
\end{lemma}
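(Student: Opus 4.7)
The plan is to exploit that $\mbonacci = \phi_m(\mbonacci)$ together with the injectivity of $\phi_m$ (Lemma~\ref{lem:code-mbonacci}), which endows $\mbonacci$ with a canonical decomposition into blocks of the form $\phi_m(a)$: each such block starts with $0$ and has length $2$, except $\phi_m(m-1) = 0$ which has length $1$. The immediate structural consequence used throughout is that, inside $\mbonacci$, every non-zero letter $a$ is the second letter of the block $\phi_m(a-1) = 0a$ and therefore is always immediately preceded by a $0$, while every $0$ starts a block.

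For the first bullet, let $p$ be palindromic beginning (hence ending, by palindromicity) with $0$. Its leading $0$ must sit at a block start, and its trailing $0$ either completes the one-letter block $\phi_m(m-1)$ or sits at the first letter of a two-letter block. In both subcases, reading along the unique block decomposition gives $p = \phi_m(p') \cdot 0$ for a unique factor $p'$ of $\mbonacci$, uniqueness following from injectivity of $\phi_m$. I would then prove that $p'$ is a palindrome by strong induction on $|p|$. The base cases $|p'| \le 1$ are handled by direct verification. For the inductive step, write $p' = b\, p''\, c$ with $b$ and $c$ its first and last letters. A letter-by-letter comparison of the two sides of the identity $\phi_m(p') \cdot 0 = 0 \cdot \phi_m(p')^R$, carried out by a case analysis according to whether $b$ and $c$ equal $m-1$ (short block) or not (length-two block), rules out the mixed combinations and forces $b = c$. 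Stripping $\phi_m(b)$ from the left and the trailing $0$ from the right on both sides then reduces the identity to $\phi_m(p'') \cdot 0 = 0 \cdot \phi_m(p'')^R$. Since $\phi_m(p'') \cdot 0$ is a factor of $\mbonacci$ (it is a prefix of the factor $\phi_m(p''\, b) = \phi_m(p'')\phi_m(b)$) and is strictly shorter than $p$, the induction hypothesis applies and gives that $p''$ is a palindrome, hence so is $p' = b\, p''\, b$.

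For the second bullet, let $p$ begin with $a \neq 0$. Palindromicity forces $p$ to end with $a$ as well, so within $\mbonacci$ it is flanked on each side by a $0$ coming from the enclosing two-letter blocks. Thus $0\, p\, 0$ is a factor of $\mbonacci$, is palindromic, and starts with $0$; applying the first bullet yields $0\, p\, 0 = \phi_m(p') \cdot 0$ for some palindromic factor $p'$, whence $p = 0^{-1}\phi_m(p')$ as required.

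The principal obstacle will be the case analysis in the inductive step of the first bullet: one must control the asymmetry introduced by the length-one block $\phi_m(m-1) = 0$ among the otherwise uniformly length-two blocks $\phi_m(a) = 0(a+1)$, and check that every mixed combination of the first and last letters of $p'$ (exactly one of them equal to $m-1$) yields a contradiction in the comparison of second letters, so that only the equal-type situations survive and the identity cleanly reduces to one on a strictly shorter palindromic factor of $\mbonacci$ to which the induction hypothesis applies.
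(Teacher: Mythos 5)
Your argument is correct, but it is organized differently from the paper's. The paper also proceeds by induction on $|p|$, but its inductive step peels the outer letters of $p$ itself (the cases $p=00q00$, $p=0aqa0$, $p=aqa$), applies the induction hypothesis to the inner palindrome to obtain its preimage, wraps that preimage with the letters $m-1$ or $a-1$, and then invokes Lemma~\ref{lem:strange-technical-lemma-v2} to conclude that the wrapped word is again a factor of $\mbonacci$; in particular the two bullets are proved simultaneously, each case of the induction calling on the other. You instead exploit synchronization: in the decomposition $\mbonacci=\phi_m(\mbonacci)$ every $0$ starts a block and every nonzero letter is the second letter of a block, so any occurrence of a palindrome beginning and ending with $0$ aligns with the block decomposition, the desubstituted word $p'$ is a factor of $\mbonacci$ for free (no analogue of Lemma~\ref{lem:strange-technical-lemma-v2} is needed for factoriality), and the palindromicity of $p'$ comes from comparing $\phi_m(p')0$ with its reversal, where your case analysis on whether the first and last letters of $p'$ equal $m-1$ correctly rules out the mixed cases and reduces the identity to the strictly shorter palindromic factor $\phi_m(p'')0$, handled by the induction hypothesis for the first bullet alone; the second bullet then follows from the first applied to $0p0$ (legitimate, since a factor beginning and ending with a nonzero letter is necessarily flanked by $0$'s), with no circularity because the first bullet is self-contained. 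What each route buys: yours gets factoriality of $p'$ directly and decouples the two bullets, while the paper's reuses its already-established desubstitution lemma and avoids the explicit letter-by-letter comparison. Two small points to tidy when writing yours in full: the factor witnessing that $\phi_m(p'')0$ occurs in $\mbonacci$ is most naturally $\phi_m(p''c)$ with $p''c$ a suffix of $p'$ (equivalently $\phi_m(p''b)$ once $b=c$ is forced), and after stripping $\phi_m(b)$ on the left you must also cancel the common final letter $b+1$, not only the trailing $0$, to land exactly on $\phi_m(p'')\,0=0\,\phi_m(p'')^R$.
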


\begin{proof}
First, let us write $\mbonacci=up\boldsymbol{v}$ where $u$ (resp., $\boldsymbol{v}$) is a finite (resp., infinite) word over $\{0,1,\ldots, m-1\}$.
  By definition, we have 
  \begin{equation}\label{eq:pal-preimages}
  \mbonacci=\phi_m(\mbonacci) 
  \quad \text{ and } \quad
  |\phi_m(up)|>|up|.
  \end{equation}

  The proof is by induction on $|p|$.  The result is certainly true when
  $p$ is a single letter (for instance, combine Propositions~\ref{pro:mbonacci-fact-p-m-Q-n} and~\ref{pro:mbonacci-p-m-fact}), so suppose $|p|>1$.

  \textbf{Case 1a}.  Suppose $p$ begins with $00$.  If $p=00$, then $p
  = \phi_m(m-1)0$, as required (observe that $m-1$ is indeed a palindromic factor of $\mbonacci$: for instance, make use of Propositions~\ref{pro:mbonacci-fact-p-m-Q-n} and~\ref{pro:mbonacci-p-m-fact}).  Suppose $p=00q00$.  By the induction
  hypothesis, we have $0q0 = \phi_m(q')0$, where $q'$ is a palindromic factor of $\mbonacci$.
  We get $p = 0\phi_m(q')00 = \phi_m((m-1)q'(m-1))0$.
  It is clear that $p'=(m-1)q'(m-1)$ is a palindrome. 
  Let us show that $p'$ is also a factor of $\mbonacci$, then we are done.
  From~\eqref{eq:pal-preimages}, $p= \phi_m(p')0$ being a factor of $up$ implies that it is also a factor of $\phi_m(up)0$. 
  By Lemma~\ref{lem:strange-technical-lemma-v2}, $p'$ is a factor of $up$, so of $\mbonacci$.

  \textbf{Case 1b}.  Suppose $p$ begins with $0a$, where $a \in\{1,2,\ldots,m-1\}$.
  Then $p=0aqa0$.  By the induction
  hypothesis, we have $aqa = 0^{-1}\phi_m(p')$, where $p'$ is a palindromic factor of $\mbonacci$.
  We get $p = 00^{-1}\phi_m(p')0 = \phi_m(p')0$, as required.

  \textbf{Case 2}.  Suppose $p$ begins with $a$, where $a \in\{1,2,\ldots,m-1\}$.
  Then $p=aqa$, where $q$ is a palindromic factor of $\mbonacci$ that begins with $0$.  By
  the induction hypothesis, we have $q = \phi_m(q')0$ for a palindromic factor $q'$ of $\mbonacci$.  We get $p =
  0^{-1}0a\phi_m(q')0a = 0^{-1}\phi_m((a-1)q'(a-1))$.
  Again, it is easy to see that $p'=(a-1)q'(a-1)$ is a palindrome. It remains to prove that $p'$ is a factor of $\mbonacci$.
  From~\eqref{eq:pal-preimages}, we deduce that $p= 0^{-1}\phi_m(p')$ being a factor of $up$ implies that it is a factor of $\phi_m(up)$ too. 
  Thus, $\phi_m(p')$ is a factor of $\phi_m(up)$.
  By Lemma~\ref{lem:strange-technical-lemma-v2}, $p'$ is a factor of $up$, so also of $\mbonacci$.
\end{proof}

\begin{lemma}\label{lem:pal-prefixes-z-n-mbonacci}
Let $n\ge -1$. 
The set of palindromic prefixes $\mathcal{P}(z_n)$ of $z_n$ is 
$$
\mathcal{P}(z_n) 
=
\begin{cases}
\{z_{-1}, z_1, z_3, \ldots, z_{n-2}, z_n\},  & \text{if } n \text{ is odd}; \\
\{z_{-1}, z_0, z_2, \ldots, z_{n-2}, z_n\},  & \text{if } n \text{ is even}.
\end{cases}
$$
\end{lemma}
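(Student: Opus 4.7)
I would prove this by strong induction on $n \ge -1$, with base cases $n \in \{-1, 0, 1\}$ handled by direct inspection of $z_{-1} = \varepsilon$, $z_0 = 0$, $z_1 = 1$. The main idea for the inductive step is to set up a bijection between the non-empty palindromic prefixes of $z_n$ and (a suitable subset of) the palindromic prefixes of $z_{n-1}$, using the morphism $\phi_m$.

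Fix $n \ge 2$ and let $p \ne \varepsilon$ be a palindromic prefix of $z_n$. By Proposition~\ref{pro:z-n-first-last-letters}, $p$ begins with $0$ if $n$ is even and with $1$ if $n$ is odd. Since $z_n$ is a factor of $\mbonacci$ (Proposition~\ref{pro:mbonacci-fact-z-n}), so is $p$, and Lemma~\ref{lem:pal-preimages} produces a palindromic factor $p'$ of $\mbonacci$ with $p = \phi_m(p')\,0$ (even case) or $p = 0^{-1}\phi_m(p')$ (odd case). Combining with Lemma~\ref{lem:z-n-phi-m} applied to $z_n$ itself, this yields, respectively, that $\phi_m(p')\,0$ is a prefix of $\phi_m(z_{n-1})\,0$ or that $\phi_m(p')$ is a prefix of $\phi_m(z_{n-1})$.

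The heart of the proof is upgrading these conclusions to ``$p'$ is a prefix of $z_{n-1}$.'' Writing $z_{n-1} = c_0 c_1 \cdots c_{k-1}$ and using the code $C = \{01, 02, \ldots, 0(m-1), 0\}$ from Lemma~\ref{lem:code-mbonacci}, the unique $C$-factorization of $\phi_m(z_{n-1})$ is $\phi_m(c_0) \cdots \phi_m(c_{k-1})$. If $\phi_m(p')$ ends at a block boundary, uniqueness of factorization and injectivity of $\phi_m$ give $p' = c_0 \cdots c_{i-1}$, a prefix of $z_{n-1}$. Otherwise, $\phi_m(p')$ ends at the first letter of some block $\phi_m(c_i)$ of length two (so $c_i \ne m-1$), forcing $p'$ to end with $m-1$. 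I would rule out this misaligned case as follows: in the even case, the trailing $0$ of $\phi_m(p')\,0$ ought to match the next letter of $\phi_m(z_{n-1})$, which is actually $c_i + 1 \ne 0$, a contradiction; in the odd case, since $p$ starts with $1$, only a $p'$ beginning with $0$ can produce a $\phi_m(p')$ beginning with $01$, and palindromicity of $p'$ then forces $p'$ to end with $0 \ne m-1$.

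Once $p'$ is identified as a palindromic prefix of $z_{n-1}$, the inductive hypothesis places $p' = z_k$ with $k$ of parity opposite to $n$, and Lemma~\ref{lem:z-n-phi-m} reads off $p = z_{k+1}$, of the same parity as $n$. Together with the trivial prefix $\varepsilon = z_{-1}$, this gives the forward inclusion. The reverse inclusion is immediate: every $z_k$ in the claimed set is a palindrome (Proposition~\ref{pro:z-n-pal-mbonacci}) and is a prefix of $z_n$ by iterating the fact, read off from Definition~\ref{def:z-n-rec-mbonacci}, that $z_j$ starts with $z_{j-2}$ for all $j \ge 2$. The main obstacle is the alignment step: ruling out misalignment cleanly in both parity cases requires the right combination of the trailing-$0$ constraint (even case) and the palindromicity of $p'$ (odd case).
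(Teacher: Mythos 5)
Your proposal is correct and follows essentially the same route as the paper's proof: induction on $n$, using Lemma~\ref{lem:pal-preimages} to write $p=\phi_m(p')0$ or $p=0^{-1}\phi_m(p')$, identifying $p'$ as a palindromic prefix of $z_{n-1}$, and applying Lemma~\ref{lem:z-n-phi-m} to conclude $p=z_{k+1}$. Your explicit code-alignment argument only spells out a step the paper leaves implicit, and your reverse inclusion (iterating that $z_j$ begins with $z_{j-2}$) is a harmless simplification of the paper's $\phi_m$-based argument.
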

\begin{proof}
  The proof is by induction on $n$.  The result is clearly true for
  $n=-1,0,1$.  Suppose $n>1$ is even, the case where $n$ is odd is
  similar. First, let $p$ be a palindromic prefix
  of $z_n$.  Since $n$ is even, the word $z_n$, and hence $p$, begins
  with $0$ by Proposition~\ref{pro:z-n-first-last-letters}.  
  By Lemma~\ref{lem:pal-preimages} (and also Proposition~\ref{pro:mbonacci-fact-z-n}), we have $p =
  \phi_m(p')0$ where $p'$ is a palindrome.  By Lemma~\ref{lem:z-n-phi-m}, the word $p'$ is a
  palindromic prefix of $z_{n-1}$.  By the induction hypothesis, we
  have $p' \in \mathcal{P}(z_{n-1})$; i.e., $p'=z_i$ for some
  $i\in\{-1,1,3,\ldots,n-3,n-1\}$.  
  By Lemma~\ref{lem:z-n-phi-m} again, we have $p=\phi_m(z_i)0=z_{i+1} \in
  \mathcal{P}(z_n)$.
We have just showed that $\mathcal{P}(z_n) \subset \{z_{-1}, z_0, z_2, \ldots, z_{n-2}, z_n\}$.
  
Let us prove that the other inclusion holds too. 
We clearly have  $z_{-1}, z_n \in \mathcal{P}(z_n)$. 
Now let $i \in \{0,2,\ldots,n-2\}$. 
By Lemma~\ref{lem:z-n-phi-m}, $z_i = \phi_m(z_{i-1})0$. 
By the induction hypothesis, we know that $z_{i-1} \in \mathcal{P}(z_{n-1})$, i.e., there exists a non-empty word $w\in \{0,1,\ldots,m-1\}^*$ such that $z_{n-1}=z_{i-1}w$. 
Using Lemma~\ref{lem:z-n-phi-m} again, we get $z_n= \phi_m(z_{n-1})0
= \phi_m(z_{i-1})\phi_m(w)0$. 
By definition of $\phi_m$, we have $\phi_m(w)0=0w'$ with $w'\in \{0,1,\ldots,m-1\}^*$.
As a consequence, we find $z_n = \phi_m(z_{i-1})0w'=z_iw'$ and $z_i\in \mathcal{P}(z_n)$, as expected.
\end{proof}

The last result of this section establishes the $z$-factorization of the $m$-bonacci word $\mbonacci$ in terms of p-singular words.

\begin{theorem}\label{thm:pal-z-fact-mbonacci}
The palindromic $z$-factorization of the $m$-bonacci word $\mbonacci$ is 
$$
pz(\mbonacci)=(z_0, z_1, z_2, \ldots).
$$ 
\end{theorem}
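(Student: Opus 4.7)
The plan is to prove this by strong induction on $n \ge 0$, showing that once the first $n$ factors of $pz(\mbonacci)$ are the p-singular words $z_0, z_1, \ldots, z_{n-1}$, the next factor must be $z_n$. The base case $n=0$ is immediate since $\mbonacci$ starts with $0 = z_0$, and the trivial "no previous position" condition forces the shortest palindromic prefix, which is a single letter. For the inductive step, Proposition~\ref{pro:mbonacci-fact-z-n} gives us the factorization $\mbonacci = z_0 z_1 \cdots z_{n-1} z_n z_{n+1} \cdots$, so the next candidate factor starts at position $L := |z_0 z_1 \cdots z_{n-1}|$, and the suffix at this point begins with $z_n$, which is a palindrome by Proposition~\ref{pro:z-n-pal-mbonacci}. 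We must verify two things: (i) $z_n$ has no occurrence in $\mbonacci$ at any position $j < L$, and (ii) every strictly shorter palindromic prefix of $z_n z_{n+1}\cdots$ does occur at some earlier position.

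For (ii), a palindromic prefix of $z_n z_{n+1}\cdots$ of length strictly less than $|z_n|$ is automatically a palindromic prefix of $z_n$. By Lemma~\ref{lem:pal-prefixes-z-n-mbonacci}, such a prefix must be one of $z_k$ with $k < n$ and $k\equiv n\pmod 2$. But each such $z_k$ appears at position $|z_0\cdots z_{k-1}| < L$ by the factorization of Proposition~\ref{pro:mbonacci-fact-z-n}, so it fails the "no earlier occurrence" requirement. Hence no shorter palindromic prefix is eligible, and the shortest admissible one is of length $\ge |z_n|$; since $z_n$ itself works (pending (i)), it is exactly $z_n$.

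The main work is (i): showing $z_n$ does not occur at any position $j < L$ in $\mbonacci$. Suppose for contradiction such a $j$ exists, and set $\ell = L - j > 0$. If $j + |z_n| \le L$, then $z_n$ is a factor of $z_0 z_1 \cdots z_{n-1}$, contradicting Proposition~\ref{pro:z-n-not-factor-prod-mbonacci}. Otherwise $\ell < |z_n|$, and an overlap forces the $\ell$-length suffix of $z_0 z_1 \cdots z_{n-1}$ to coincide with the $\ell$-length prefix of $z_n$. If $\ell \le |z_{n-1}|$, this prefix is a non-empty common suffix of $z_{n-1}$ and prefix of $z_n$, which contradicts Lemma~\ref{lem:common-suffix-prefix-mbonacci}. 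If $\ell > |z_{n-1}|$, then $z_{n-1}$ sits as a factor inside the $\ell$-prefix of $z_n$, so $z_{n-1}$ is a factor of $z_n$, contradicting Proposition~\ref{pro:z-n-not-factor-mbonacci}.

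The most delicate point is this overlap analysis in (i); everything else is bookkeeping with the factorization and the known prefix/palindrome structure of the $z_n$. The key observation that makes the argument clean is that the tools available (non-factorship of $z_{n-1}$ inside $z_n$, and the empty common suffix-prefix lemma) exactly cover the two possible length regimes for the overlap $\ell$, so no additional combinatorial input beyond the previous subsection is needed.
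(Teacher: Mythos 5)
Your proof is correct and follows essentially the same route as the paper: induction on $n$ using the factorization $\mbonacci=\prod_{k\ge 0} z_k$ from Proposition~\ref{pro:mbonacci-fact-z-n}, with Lemma~\ref{lem:pal-prefixes-z-n-mbonacci} showing that every strictly shorter palindromic prefix of $z_nz_{n+1}\cdots$ is some $z_k$, $k<n$, and hence already occurs inside $z_0\cdots z_{n-1}$. The one difference is that you also verify explicitly, via Proposition~\ref{pro:z-n-not-factor-prod-mbonacci} for internal occurrences and Lemma~\ref{lem:common-suffix-prefix-mbonacci} together with Proposition~\ref{pro:z-n-not-factor-mbonacci} for occurrences overlapping the boundary, that $z_n$ itself has no occurrence starting before position $|z_0\cdots z_{n-1}|$ --- a condition the paper's proof leaves implicit, so your write-up is, if anything, more complete.
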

\begin{proof}
If $m=2$, one simply has to combine Theorem~\ref{thm:pal-z-fact-fibonacci} and Proposition~\ref{pro:f-n-z-n}.
Now assume that $m\ge 3$.
Let $pz(\mbonacci)=(t_0, t_1, t_2, \ldots)$ be the palindromic $z$-factorization of the $m$-bonacci word $\mbonacci$.
We proceed by induction on $n\ge 0$ to show that $t_n=z_n$. 
From Proposition~\ref{pro:mbonacci-fact-z-n}, we have
\begin{equation}\label{eq:mbonacci-z-fact}
\mbonacci= \prod_{n \ge 0} z_n  = 0 \cdot  1 \cdot  020 \cdot z_3 \cdot z_4 \cdots,
\end{equation}
and we see that the first two factors of the $z$-factorization of the $m$bonacci word $\mbonacci$ are $t_0=0=z_0$ and $t_1=1=z_1$. 
Now suppose that $n\ge 2$. 
From~\eqref{eq:mbonacci-z-fact}, we deduce that 
$$
\mbonacci
= \prod_{n \ge 0} z_n
= \left( \prod_{0 \le k \le n-1} z_k \right) z_n z_{n+1} \cdots.
$$ 
To prove that $t_n=z_n$, we need to show that every palindromic prefix of $z_n$ which is different from $z_n$ is a factor of
$$
P(n) = \prod_{k=0}^{n-1} z_{k}.
$$
This is clear from Lemma~\ref{lem:pal-prefixes-z-n-mbonacci}.
\end{proof}

\subsection{The palindromic $c$-factorization of the $m$-bonacci word}

In the following lemma, which is similar to Lemma~\ref{lem:two-occurrences}, recall that we start indexing words at $0$.

\begin{lemma}\label{lem:two-occurrences-mbonacci}
Let $n\ge m-1$. 
There are exactly two occurrences of the factor $z_n$ inside the word $g_{n+1}$: one at position $\sum_{k=0}^{n-1} | z_k | $, the other at position $\sum_{k=0}^{n+1} | z_k | $.
\end{lemma}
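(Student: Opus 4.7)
The plan is to mirror the structure of Lemma~\ref{lem:two-occurrences} from the Fibonacci case, but to streamline the argument by invoking Theorem~\ref{thm:pal-z-fact-mbonacci} to dispose of occurrences on the left in one stroke.

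First, I would apply Proposition~\ref{pro:mbonacci-prefix-g-n} at index $n+1$ (legitimate because $n\ge m-1$ gives $n+1\ge m$) to obtain the explicit decomposition
\[
g_{n+1} \;=\; \underbrace{z_0 z_1 \cdots z_{n-1}}_{P(n)} \cdot z_n \cdot z_{n+1} \cdot z_n \cdot \underbrace{z_{n-1} z_{n-2} \cdots z_{n+1-m}}_{D}.
\]
This immediately exhibits the two claimed occurrences of $z_n$, one at position $\pi_1 := \sum_{k=0}^{n-1}|z_k|$ and one at position $\pi_2 := \sum_{k=0}^{n+1}|z_k|$. The task then reduces to showing that no other occurrence of $z_n$ can be found in $g_{n+1}$. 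Suppose for contradiction that $z_n$ occurs at some $p\notin\{\pi_1,\pi_2\}$. Since $g_{n+1}$ is a prefix of $\mbonacci$ and Theorem~\ref{thm:pal-z-fact-mbonacci} identifies $z_n$ as the $n$-th block of $pz(\mbonacci)$, the word $z_n$ cannot occur in $\mbonacci$ at any position $<\pi_1$, so $p>\pi_1$.

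Next I would split the remaining possibilities into two cases and settle each using Lemma~\ref{lem:common-suffix-prefix-mbonacci} (together with the palindromicity of the p-singular words, Proposition~\ref{pro:z-n-pal-mbonacci}), Proposition~\ref{pro:z-n-not-factor-mbonacci}, and Proposition~\ref{pro:z-n-not-factor-prod-mbonacci}. If $\pi_1 < p < \pi_2$, the occurrence starts inside the first $z_n$ or inside $z_{n+1}$. An occurrence starting strictly inside the first $z_n$ forces a non-empty suffix of $z_n$ to equal a non-empty prefix of $z_{n+1}$, contradicting Lemma~\ref{lem:common-suffix-prefix-mbonacci}. Since $|z_{n+1}|>|z_n|$ by Proposition~\ref{pro:comparison-of-length-mbonacci}(3), an occurrence contained in $z_{n+1}$ would make $z_n$ a factor of $z_{n+1}$, contradicting Proposition~\ref{pro:z-n-not-factor-mbonacci}. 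An occurrence that straddles $z_{n+1}$ and the second $z_n$ produces a non-empty prefix of $z_n$ equal to a non-empty suffix of $z_{n+1}$; reversing (using Proposition~\ref{pro:z-n-pal-mbonacci}) turns this into a forbidden suffix-prefix pair and again contradicts Lemma~\ref{lem:common-suffix-prefix-mbonacci}.

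If $p>\pi_2$, the occurrence lies in $z_n \cdot D$. When it starts strictly inside the second $z_n$, the offset $r=p-\pi_2$ is either at most $|z_{n-1}|$, yielding a non-empty prefix of $z_{n-1}$ equal to a suffix of $z_n$ (contradicting Lemma~\ref{lem:common-suffix-prefix-mbonacci} after reversing), or greater than $|z_{n-1}|$, which forces $z_{n-1}$ to be a factor of $z_n$ and contradicts Proposition~\ref{pro:z-n-not-factor-mbonacci}. Finally, if the occurrence lies entirely in $D$, then $z_n=z_n^R$ is a factor of $D^R = z_{n+1-m}z_{n+2-m}\cdots z_{n-1}$, which is a factor of $P(n) = \prod_{k=0}^{n-1}z_k$; this contradicts Proposition~\ref{pro:z-n-not-factor-prod-mbonacci}.

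The main obstacle will be bookkeeping rather than substance: each overlap case produces slightly different lengths and orientations, so one must repeatedly combine palindromicity with Lemma~\ref{lem:common-suffix-prefix-mbonacci} in the correct direction, and one must carefully check with Proposition~\ref{pro:comparison-of-length-mbonacci}(3) that inequalities like $|z_{n+1}|>|z_n|$ keep occurrences inside the intended block. Otherwise the proof closely parallels the seven-case analysis of Lemma~\ref{lem:two-occurrences}, with the advantage that Theorem~\ref{thm:pal-z-fact-mbonacci} dispatches all occurrences to the left of $\pi_1$ at once.
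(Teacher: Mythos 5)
Your proposal is correct and exhaustive, and it rests on the same skeleton as the paper's argument: the decomposition $g_{n+1}=z_0\cdots z_{n-1}\cdot z_n z_{n+1} z_n\cdot z_{n-1}\cdots z_{n+1-m}$ from Proposition~\ref{pro:mbonacci-prefix-g-n}, plus Propositions~\ref{pro:z-n-not-factor-mbonacci} and~\ref{pro:z-n-not-factor-prod-mbonacci} and Lemma~\ref{lem:common-suffix-prefix-mbonacci} (with palindromicity) for the overlap cases. Where you genuinely deviate is on the left side: the paper excludes occurrences inside $P(n)$ and occurrences straddling $P(n)$ and the first $z_n$ (its Cases 1 and 4) by a direct argument that needs the length estimate of Corollary~\ref{cor:comparison-of-length-mbonacci-2}, whereas you dispose of every position $p<\pi_1$ in one stroke by citing Theorem~\ref{thm:pal-z-fact-mbonacci}. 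This is logically admissible: the theorem is proved before this lemma and does not depend on it, and its statement, via Definition~\ref{def:pal-z-c-fact}, does give that $z_n$ has no occurrence in $\mbonacci$ (hence in the prefix $g_{n+1}$) before position $\pi_1$. Be aware, though, that the fact you extract is exactly the content of the paper's Cases 1 and 4, and the paper's proof of Theorem~\ref{thm:pal-z-fact-mbonacci} is terse on precisely this point (it checks that the shorter palindromic prefixes of $z_n$ occur earlier, not explicitly that $z_n$ itself does not), so the self-contained route has the advantage of not leaning on that step; your route buys brevity and avoids Corollary~\ref{cor:comparison-of-length-mbonacci-2} altogether. Two smaller differences are also fine and slightly cleaner: you treat the right-hand overlaps directly by splitting on the offset $r=p-\pi_2$ relative to $|z_{n-1}|$ instead of reversing back to the left-hand case as the paper does (its Cases 6 and 7), and you handle $m=2$ uniformly rather than delegating it to Lemma~\ref{lem:two-occurrences}.
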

\begin{proof}
First consider the case $m=2$, and let $n\ge 1$. 
Using the notation introduced in Remark~\ref{rk:mbonacci-Fibonacci-prefix-g-n}, Lemma~\ref{lem:two-occurrences} shows that there are exactly two occurrences of the factor $\hat{f}_{n+1}=z_{n}$ inside the word $g'_{n+2}=g_{n+1}$, one occurring at position $\sum_{k=1}^{n} | \hat{f}_{k} |=\sum_{k=0}^{n-1} | z_k | $, the other at position $\sum_{k=1}^{n+2} | \hat{f}_{k} |=\sum_{k=0}^{n+1} | z_k | $, as desired. 

Assume that $m\ge 3$.
Let $n\ge m-1$. 
Using Definition~\ref{def:mbonacci-particular-prefix-and-factors} and Proposition~\ref{pro:mbonacci-prefix-g-n}, let us write 
\begin{align*}
g_{n+1} 
&= z_0 z_1 \cdots z_{n+1-m} Q(n+1) z_{n+1-m}  \\
&= z_0 z_1 \cdots z_{n+1-m} (z_{n+2-m} z_{n+3-m} \cdots z_{n-1} z_{n} z_{n+1} z_{n} z_{n-1} \cdots  z_{n+3-m} z_{n+2-m}) z_{n+1-m} \\
&=u z_{n} z_{n+1} z_{n} v
\end{align*} 
with $u = z_0 z_1 \cdots z_{n-1}$ and $v=z_{n-1} z_{n-2} \cdots z_{n+1-m}$. 
Thanks to this factorization, we immediately see that $z_n$ occurs at least twice as a factor of $g_{n+1}$: one starting at position $|u|=\sum_{k=0}^{n-1} | z_{k} |$, the other beginning at position $|u z_{n} z_{n+1}|=\sum_{k=0}^{n+1} | z_{k} |$. 
We now show that there are no other occurrences of $z_{n}$ as a factor of $g_{n+1}$. There are several cases to consider. 

\textbf{Case 1}. The word $z_{n}$ cannot be a factor of $u$, otherwise it contradicts Proposition~\ref{pro:z-n-not-factor-prod-mbonacci}.

\textbf{Case 2}. The word $z_{n}$ cannot be a factor of $z_{n+1}$, otherwise it contradicts Proposition~\ref{pro:z-n-not-factor-mbonacci}.

\textbf{Case 3}. If the word $z_{n}$ were a factor of $v$, then $(z_{n})^{R}=z_{n}$ would be a factor of $v^R=z_{n+1-m} \cdots z_{n-2} z_{n-1}$, since the p-singular words are palindromes.
This is impossible due to Proposition~\ref{pro:z-n-not-factor-prod-mbonacci}.

\textbf{Case 4}. Suppose that $z_n$ is a factor of $uz_{n}$, overlapping $u$ and $z_{n}$.
Using Corollary~\ref{cor:comparison-of-length-mbonacci-2} ($n\ge m-1$), we know that
$$
|z_{n}| - 1 \le |z_{n-1}| + |z_{n-2}| + \cdots + |z_{n-m}|.  
$$
Consequently, $z_{n}$ is a factor of $z_{n-m} z_{n-(m-1)} \cdots z_{n-2} z_{n-1}z_{n}$, overlapping $z_{n-m} z_{n-(m-1)} \cdots z_{n-2} z_{n-1}$ and $z_{n}$.

If $z_{n}$ starts somewhere within $z_{n-k}$, with $k\in\{2, 3, \ldots m\}$, or if $z_{n}$ starts with the first letter of $z_{n-1}$, then, in each case, $z_{n-1}$ is a factor of $z_{n}$, which contradicts Proposition~\ref{pro:z-n-not-factor-mbonacci}.
Therefore the occurrence of $z_n$ must start after the first letter of $z_{n-1}$ in $z_{n-m} z_{n-(m-1)} \cdots z_{n-2} z_{n-1}z_{n}$, i.e., there exist a non-empty suffix $x$ of $z_{n-1}$ and a non-empty prefix $y$ of $z_{n}$ such that 
$$z_{n} = xy.$$
Observe that, in this case, $x$ is also a non-empty prefix of $z_n$. This contradicts Lemma~\ref{lem:common-suffix-prefix-mbonacci}.

\textbf{Case 5}. Suppose that $z_n$ is a factor of $z_{n}z_{n+1}$, overlapping $z_{n}$ and $z_{n+1}$.
There exist a non-empty suffix $x'$ of $z_{n}$ and a non-empty prefix $y'$ of $z_{n+1}$ such that 
$$z_{n} = x'y'.$$
In this case, notice that $y'$ is also a non-empty suffix of $z_n$. This violates Lemma~\ref{lem:common-suffix-prefix-mbonacci}.

\textbf{Case 6}. Suppose that $z_n$ is a factor of $z_{n+1}z_{n}$, overlapping $z_{n+1}$ and $z_{n}$.
Since the p-singular words are palindromes, we obtain that $(z_{n})^{R}=z_{n}$ is a factor of $(z_{n+1}z_{n})^{R}=z_{n}z_{n+1}$, overlapping $z_{n}$ and $z_{n+1}$.
As in the fifth case, we raise a contradiction.

\textbf{Case 7}. Suppose that $z_n$ is a factor of $z_{n}v$, overlapping $z_{n}$ and $v$.
Then $(z_{n})^{R}=z_{n}$ is a factor of $(z_{n}v)^{R}=z_{n+1-m} z_{n-m} \cdots z_{n-1} z_{n}$, overlapping (at least) $z_{n-1}$ and $z_{n}$. 
In the view of the fourth case, we also reach a contradiction.
\end{proof}

The following result is the counterpart to Proposition~\ref{pro:nothing-to-add}.

\begin{proposition}\label{pro:nothing-to-add-mbonacci}
Let $n\ge m-1$. 
Let $w$ be a non-empty common finite prefix of the infinite words 
$$
\boldsymbol{u_{n+1,1}} = z_{n-(m-1)} Q(n+1) Q(n+2) \cdots
$$
and
$$
\boldsymbol{u_{n+1,2}} = Q(n+2) Q(n+3) \cdots.
$$ 
Then $Q(n+1) w$ is not a palindrome.
\end{proposition}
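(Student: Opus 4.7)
The plan is to mimic the approach used for the Fibonacci case (Proposition~\ref{pro:nothing-to-add}), replacing singular words by p-singular words. First note that $Q(n+1)$ is itself a palindrome, because each $z_k$ is a palindrome by Proposition~\ref{pro:z-n-pal-mbonacci} and the sequence of indices defining $Q(n+1)$ is symmetric. Assume for contradiction that $Q(n+1)\,w$ is a palindrome. Taking the reverse then yields the identity
\[
Q(n+1)\,w = (Q(n+1)\,w)^R = w^R\,Q(n+1).
\]

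The crux of the argument is to establish the bound $0<|w|<|z_{n+1}|$. The lower bound holds by hypothesis. For the upper bound, I would compare the first $|z_{n+1}|$ letters of $\boldsymbol{u_{n+1,1}}$ and $\boldsymbol{u_{n+1,2}}$ via their block decompositions into p-singular words: the leftmost block of $\boldsymbol{u_{n+1,1}}$ is $z_{n-(m-1)}$ and it continues into $Q(n+1)=z_{n-(m-2)}z_{n-(m-3)}\cdots$, whereas $\boldsymbol{u_{n+1,2}}$ starts with $Q(n+2)=z_{n-(m-3)}z_{n-(m-4)}\cdots$. Using Corollary~\ref{cor:comparison-of-length-mbonacci-2} to control cumulative block lengths (distinguishing the parity of $m$ because of the $(-1)^n$ correction when $m$ is odd), one identifies a position before $|z_{n+1}|$ at which the two words read letters belonging to different relative positions inside different p-singular words, producing a disagreement. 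I expect this to be the main obstacle, since the Fibonacci analog relies on the clean identity $\hat{f}_{n+1}=\hat{f}_{n-1}\hat{f}_n'$ of Proposition~\ref{pro:properties-of-singular-words}(6), which has no direct counterpart in the $m$-bonacci setting.

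With the bound in hand, the contradiction is obtained from the identity $Q(n+1)\,w = w^R\,Q(n+1)$. Compare the central factor $z_{n+1}$ of $Q(n+1)$ on the left-hand side, positioned at $[L,\,L+|z_{n+1}|)$ with $L=\sum_{k=n-(m-2)}^{n}|z_k|$, with the left-of-center factor $z_n$ of $Q(n+1)$ on the right-hand side, positioned at $[|w|+L-|z_n|,\,|w|+L)$; the bound on $|w|$ forces these intervals to overlap. Reading off the letters on both sides and using that every $z_k$ is a palindrome yields \emph{(suffix of $z_{n+1}$ of length $|w|$)}~$=$~\emph{(prefix of $z_n$ of length $|w|$)}. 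Since $n+1\ge m$, Definition~\ref{def:z-n-rec-mbonacci}(2) shows that $z_{n+1}$ both starts and ends with $z_{n-1}$. I then split on $|w|$: if $|w|\le|z_{n-1}|$, a non-empty suffix of $z_{n-1}$ equals a non-empty prefix of $z_n$, contradicting Lemma~\ref{lem:common-suffix-prefix-mbonacci}; if $|z_{n-1}|<|w|\le|z_n|$, then $z_{n-1}$ appears as a suffix of a prefix of $z_n$, hence as a factor of $z_n$, contradicting Proposition~\ref{pro:z-n-not-factor-mbonacci} applied to the index $n-1$; if $|z_n|<|w|<|z_{n+1}|$, the left-of-center $z_n$ on the right-hand side is entirely contained inside the central $z_{n+1}$ on the left-hand side, so $z_n$ is a factor of $z_{n+1}$, again contradicting Proposition~\ref{pro:z-n-not-factor-mbonacci}. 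Each subcase yields a contradiction, finishing the proof.
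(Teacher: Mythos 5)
There is a genuine gap, and you have in fact flagged it yourself: the entire argument hinges on the bound $0<|w|<|z_{n+1}|$, and you never prove it. In the Fibonacci case this bound comes for free from the identity $\hat{f}_{n+1}=\hat{f}_{n-1}\hat{f}_n'$ of Proposition~\ref{pro:properties-of-singular-words}(6), which pinpoints the exact position where the two infinite words first disagree; as you note, there is no counterpart of that identity for $m\ge 3$. Your replacement --- ``compare the block decompositions and use Corollary~\ref{cor:comparison-of-length-mbonacci-2} to identify a disagreement before position $|z_{n+1}|$'' --- is a hope, not an argument: the two words $\boldsymbol{u_{n+1,1}}=z_{n+1-m}z_{n+2-m}\cdots z_n z_{n+1}\cdots$ and $\boldsymbol{u_{n+1,2}}=z_{n+3-m}z_{n+4-m}\cdots z_{n+1}z_{n+2}\cdots$ are built from shifted lists of p-singular words, and length bookkeeping alone does not tell you that the letters disagree somewhere before $|z_{n+1}|$; locating the first mismatch is precisely the hard combinatorial content of the proposition. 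Given that bound, your overlap analysis of $Q(n+1)w=w^RQ(n+1)$ (the three subcases $|w|\le|z_{n-1}|$, $|z_{n-1}|<|w|\le|z_n|$, $|z_n|<|w|<|z_{n+1}|$, contradicting Lemma~\ref{lem:common-suffix-prefix-mbonacci} or Proposition~\ref{pro:z-n-not-factor-mbonacci}) is correct and is a faithful generalization of the Fibonacci argument, but as it stands the proof is conditional on an unproved (and nontrivial) claim.

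The paper sidesteps this issue entirely by inducting on $n$ through the morphism. By Definition~\ref{def:mbonacci-particular-prefix-and-factors} and Lemma~\ref{lem:z-n-phi-m}, one has $Q(i)=\phi_m(Q(i-1))0$ or $Q(i)=0^{-1}\phi_m(Q(i-1))$ according to the first letter of $Q(i)$, whence $\boldsymbol{u_{n+1,j}}=\phi_m(\boldsymbol{u_{n,j}})$ or $0^{-1}\phi_m(\boldsymbol{u_{n,j}})$ for $j=1,2$. If $Q(n+1)w$ were a palindrome, Lemma~\ref{lem:pal-preimages} lets one desubstitute it to a palindrome $Q(n)w'$ where $w'$ is a non-empty common prefix of $\boldsymbol{u_{n,1}}$ and $\boldsymbol{u_{n,2}}$, contradicting the induction hypothesis; the base case $n=m-1$ is checked directly (there the common prefix is forced to be $w=0$, and $Q(m)0$ is visibly not a palindrome). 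If you want to salvage your direct approach, you would need to prove the length bound separately --- for instance by the same $\phi_m$-induction, at which point the paper's route is shorter.
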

\begin{proof}
If $m=2$, the result directly follows from
Propositions~\ref{pro:nothing-to-add} and~\ref{pro:f-n-z-n}.  Suppose
that $m\ge 3$.  We proceed by induction on $n$.
  First, suppose $n=m-1$.  Then $\boldsymbol{u_{m,1}} = 0 Q(m) Q(m+1)
  \cdots$ and $\boldsymbol{u_{m,2}} = Q(m+1) Q(m+2) \cdots$.  Note
  that $Q(m)$ begins with $z_1=1$ and $Q(m+1)$ begins with $z_2=020$.
  The only possibility for $w$ is $w=0$ and in this case $Q(m)0$ is
  not a palindrome, as required.

  We now suppose that the result holds for $\boldsymbol{u_{n,1}}$ and
  $\boldsymbol{u_{n,2}}$.  We proceed by contradiction and suppose
  that $Q(n+1)w$ is a palindrome.

  We first observe that by
  Definition~\ref{def:mbonacci-particular-prefix-and-factors} and
  Lemma~\ref{lem:z-n-phi-m} we have $Q(i) = \phi_m(Q(i-1))0$ if $Q(i)$
  begins with $0$ and $Q(i) = 0^{-1}\phi_m(Q(i-1))$ if $Q(i)$ begins with $1$. 
  Thus either
  $$\boldsymbol{u_{n+1,1}} =
  \phi_m(z_{n-m})00^{-1}\phi_m(Q(n))\phi_m(Q(n+1))0 \cdots =
  \phi_m(\boldsymbol{u_{n,1}})$$ and
  $$\boldsymbol{u_{n+1,2}} = \phi_m(Q(n+1))00^{-1}\phi_m(Q(n+2)) \cdots =
  \phi_m(\boldsymbol{u_{n,2}});$$ or
  $$\boldsymbol{u_{n+1,1}} =
  0^{-1}\phi_m(z_{n-m})\phi_m(Q(n))00^{-1}\phi_m(Q(n+1)) \cdots =
  0^{-1}\phi_m(\boldsymbol{u_{n,1}})$$ and
  $$\boldsymbol{u_{n+1,2}} = 0^{-1}\phi_m(Q(n+1))\phi_m(Q(n+2))0 \cdots =
  0^{-1}\phi_m(\boldsymbol{u_{n,2}}).$$

  By Definition~\ref{def:mbonacci-prefix-g-n}, observe that $Q(n+1)w$ is a factor of $\mbonacci$.
  Now using Lemma~\ref{lem:pal-preimages}, we have either $Q(n+1)w
  = \phi_m(Q(n)w')0=Q(n+1)0^{-1}\phi_m(w')0$ or $Q(n+1)w = 0^{-1}\phi_m(Q(n)w')=Q(n+1)\phi_m(w')$, where
  $Q(n)w'$ is a palindrome and $w'$ is a common prefix of
  $\boldsymbol{u_{n,1}}$ and $\boldsymbol{u_{n,2}}$.  This contradicts
  the induction hypothesis.  We conclude that $Q(n+1)w$ is not a
  palindrome, as required.
\end{proof}

In this last result, we obtain the $c$-factorization of the $m$-bonacci word $\mbonacci$ in terms of p-singular words.

\begin{theorem}\label{thm:pal-c-fact-mbonacci}
Let $pc(\mbonacci)=(c_{-m}, c_{-(m-1)}, c_{-(m-2)}, \ldots)$ denote the palindromic $c$-factorization of the $m$-bonacci word $\mbonacci$. 
Then, the first $2m-2$ factors are given by the factorization of $p_m$ emphasized in Proposition~\ref{pro:mbonacci-p-m-fact}, and, for all $n\ge m-2$, $c_n = Q(n)$.
\end{theorem}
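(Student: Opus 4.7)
The plan is to mirror the proof of Theorem~\ref{thm:pal-c-fact-fibonacci}. First, the initial $2m-2$ factors $c_{-m},c_{-(m-1)},\ldots,c_{m-3}$ will be identified with $q_1,\ldots,q_{2m-2}$ from the factorization of $p_m$ by applying the second part of Proposition~\ref{pro:mbonacci-p-m-fact} with the infinite word $\boldsymbol{w}$ taken to be $Q(m-2) Q(m-1) \cdots$, which is the suffix of $\mbonacci$ beginning at position $|p_m|$. It then remains to prove $c_n = Q(n)$ for $n\ge m-2$ by induction on $n$.

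For the base case $n=m-2$, the prefix consumed so far is $g_{m-2}=p_m = Q(m-2)\cdot(m-1)$. The word $Q(m-2) = z_0 z_1 \cdots z_{m-3} z_{m-2} z_{m-3} \cdots z_0$ is a palindromic prefix of the continuation $Q(m-2) Q(m-1) \cdots$ and is also a prefix of $p_m$, so $|c_{m-2}| \ge |Q(m-2)|$. To rule out a longer palindromic prefix $Q(m-2) w$ with $w$ non-empty and an earlier occurrence at some $j<|p_m|$, I would use Proposition~\ref{pro:z-n-do-not-contain}, which tells us that $Q(m-2)$ avoids the letter $m-1$, combined with the fact that $(m-1)$ only appears as the last letter of $p_m$. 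This forces $j=0$, after which $w$ would need to start both with $(m-1)$ (since the content of $\mbonacci$ immediately after the initial $Q(m-2)$ begins with $(m-1) Q(m-2) \cdots$) and with $z_0=0$ (as a prefix of $Q(m-1) Q(m) \cdots$), a contradiction.

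For the inductive step, assume $c_k = Q(k)$ for $m-2 \le k \le n$. By Proposition~\ref{pro:z-n-pal-mbonacci}, $Q(n+1)$ is a palindrome, and Proposition~\ref{pro:mbonacci-prefix-g-n} writes $g_{n+1} = z_0 z_1 \cdots z_{n+1-m} \cdot Q(n+1) \cdot z_{n+1-m}$, which exhibits an earlier occurrence of $Q(n+1)$ at position $\sum_{k=0}^{n+1-m}|z_k|$. Hence $c_{n+1} = Q(n+1)\, w$ for some (possibly empty) $w$; I would suppose $w$ is non-empty with $c_{n+1}$ occurring at some $j<|g_{n+1}|$, and derive a contradiction. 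The central letter $z_{n+1}$ of $Q(n+1)$ then sits at position $j+\sum_{k=n+2-m}^{n}|z_k|$, which a short length check places inside $g_{n+2}$, so Lemma~\ref{lem:two-occurrences-mbonacci} (with parameter $n+1$) forces this position to equal either $\sum_{k=0}^{n}|z_k|$ or $\sum_{k=0}^{n+2}|z_k|$.

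In the first case, $j=\sum_{k=0}^{n+1-m}|z_k|$, and $w$ must be a common prefix of $\boldsymbol{u_{n+1,1}}$ and $\boldsymbol{u_{n+1,2}}$. For $n\ge m-1$, Proposition~\ref{pro:nothing-to-add-mbonacci} yields that $Q(n+1) w$ is not a palindrome, contradicting that $c_{n+1}$ is a palindrome; for the boundary value $n=m-2$, the two infinite words begin respectively with $z_0=0$ and $z_1=1$, so no non-empty $w$ exists. In the second case, a length computation using Corollary~\ref{cor:comparison-of-length-mbonacci-2} (handling even and odd $m$ separately) yields $j=|g_{n+1}|$, contradicting $j<|g_{n+1}|$. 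Either way $w$ must be empty, so $c_{n+1}=Q(n+1)$. The main delicate point is this second-case length calculation: for odd $m$, the $(-1)^n$ corrections in the recurrence for $|z_n|$ must cancel exactly to give $j=|g_{n+1}|$ rather than some position strictly less, and this parity-sensitive cancellation is precisely what rules out the second occurrence.
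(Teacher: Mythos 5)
Your proposal is correct, and its overall architecture is the same as the paper's: identify the first $2m-2$ factors via the second part of Proposition~\ref{pro:mbonacci-p-m-fact}, then induct using Proposition~\ref{pro:mbonacci-prefix-g-n}, Lemma~\ref{lem:two-occurrences-mbonacci} and Proposition~\ref{pro:nothing-to-add-mbonacci}, splitting into two cases according to which admissible occurrence the earlier copy of $Q(n+1)$ aligns with. The local mechanics differ in two worthwhile ways. First, you locate the \emph{central} factor $z_{n+1}$ of the earlier occurrence of $Q(n+1)w$ and apply Lemma~\ref{lem:two-occurrences-mbonacci} with parameter $n+1$ inside $g_{n+2}$, after checking that this occurrence of $z_{n+1}$ lies wholly in $g_{n+2}$ (your length check is right: its end position is at most $|g_{n+2}|-1-\sum_{k=n+2-m}^{n}|z_k|$); the paper instead tracks $z_n$ inside $g_{n+1}$ with the lemma at parameter $n$, and your variant handles the containment issue more explicitly. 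Second, in the case where the central $z_{n+1}$ sits at $\sum_{k=0}^{n+2}|z_k|$, you conclude by the exact identity $j=|g_{n+1}|$, contradicting $j<|g_{n+1}|$, whereas the paper derives from two length inequalities that $z_{n+1}$ would end with a non-empty prefix of $z_n$ and invokes Lemma~\ref{lem:common-suffix-prefix-mbonacci}; both work, and your computation is even simpler than you suggest, since Definition~\ref{def:z-n-rec-mbonacci}(2) gives $|z_{n+2}|=2\sum_{k=n+2-m}^{n}|z_k|+|z_{n+1-m}|$ directly, with no need for the parity-sensitive Corollary~\ref{cor:comparison-of-length-mbonacci-2}. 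Finally, you run the induction from $n=m-2$ with a direct first-letter argument ($z_0=0$ versus $z_1=1$) at the boundary where Proposition~\ref{pro:nothing-to-add-mbonacci} does not apply, instead of the paper's separate base case $n=m-1$; your base case $c_{m-2}=Q(m-2)$ matches the paper's argument via Proposition~\ref{pro:z-n-do-not-contain}, and your treatment covers $m=2$ uniformly where the paper defers to Theorem~\ref{thm:pal-c-fact-fibonacci}.
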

\begin{proof}
Consider first the case where $m=2$. 
From Theorem~\ref{thm:pal-c-fact-fibonacci}, we know that $c_{-2}=0$, $c_{-1}=1$, $c_0=0$ and, for all $n\ge 1$, 
$$
c_n = \hat{f}_{n}  \hat{f}_{n+1} \hat{f}_{n}.
$$
We clearly have $c_0= \varepsilon \cdot 0 \cdot \varepsilon = z_{-1} z_0 z_{-1} = Q(0)$, and for all $n\ge 1$, we also get $c_n = z_{n-1} z_n z_{n-1}= Q(n)$ using Proposition~\ref{pro:f-n-z-n}.
Assume now that $m\ge 3$.
From Definition~\ref{def:mbonacci-particular-prefix-and-factors},  Propositions~\ref{pro:mbonacci-fact-p-m-Q-n} and~\ref{pro:mbonacci-p-m-fact}, we have 
\begin{align}
\mbonacci 
&= p_m \prod_{n \ge m-2} Q(n) \nonumber \\
&=  z_0 z_1 \cdots z_{m-4} z_{m-3} z_{m-2} z_{m-3} z_{m-4} \cdots z_1 z_0 (m-1) \prod_{n \ge m-2} Q(n) \label{eq:pal-c-fact-mbonacci-1} \\
&= z_0 \cdot 1 \cdot z_0 \cdot 2 \cdot (z_0 z_1 z_0) \cdot 3 \cdots (z_0 z_1 \cdots z_{m-4} z_{m-3} z_{m-4} \cdots z_1 z_0) \cdot (m-1) \cdot \prod_{n \ge m-2} Q(n). \label{eq:pal-c-fact-mbonacci-2}
\end{align}
Using the definition of the $c$-palindromic factorization and looking at~\eqref{eq:pal-c-fact-mbonacci-2}, the first $(2m-2)$ factors 
$c_{-m}, c_{-(m-1)}, \ldots, c_{m-3}$ 
of $pc(\mbonacci)$ are given by the factorization of $p_m$ emphasized in Proposition~\ref{pro:mbonacci-p-m-fact} (recall the second part of Proposition~\ref{pro:mbonacci-p-m-fact}: each $q_k=c_{k-m-1}$, for $1\le k \le 2m-2$, is the longest palindromic prefix of $q_kq_{k+1}\cdots q_{2m-2} \cdot \prod_{n \ge m-2} Q(n)$ that has already occurred in $\mbonacci = p_m \prod_{n \ge m-2} Q(n)$, or if this prefix does not exist, $q_k$ is a single letter).

For the second part of the result, proceed by induction on $n\ge m-2$.
If $n=m-2$, we must find the factor $c_{m-2}$ of the palindromic $c$-factorization of $\mbonacci$.
Using Definition~\ref{def:mbonacci-particular-prefix-and-factors}, we have
$$
Q(m-2) = z_{-1} z_{0} \cdots z_{m-4} z_{m-3} z_{m-2} z_{m-3} z_{m-4} \cdots  z_{0} z_{-1},
$$
and from~\eqref{eq:pal-c-fact-mbonacci-1}, we get
\begin{align}
\mbonacci 
=&
z_0 z_1 \cdots z_{m-4} z_{m-3} z_{m-2} z_{m-3} z_{m-4} \cdots z_1 z_0 (m-1) (z_{-1} z_{0} \cdots z_{m-4} z_{m-3} z_{m-2} z_{m-3} z_{m-4} \cdots  z_{0} z_{-1}) \nonumber\\ 
&\prod_{n \ge m-1} Q(n) \label{eq:pal-c-fact-mbonacci-3} \\
=&
Q(m-2) \cdot (m-1) \cdot Q(m-2) \cdot \prod_{n \ge m-1} Q(n). \nonumber
\end{align}
On the one hand, observe that $Q(m-1)$ starts with $z_0$ by Definition~\ref{def:mbonacci-particular-prefix-and-factors}, and $z_0 \neq m-1$. On the other hand, $Q(m-2)$ does not contain the letter $m-1$ by Proposition~\ref{pro:z-n-do-not-contain}.
Thus, the longest palindrome occurring before is $c_{m-2}=Q(m-2)$.

If $n=m-1$, we must find the factor $c_{m-1}$ of the palindromic $c$-factorization of $\mbonacci$.
Using Definition~\ref{def:mbonacci-particular-prefix-and-factors}, we have
$$
Q(m-1) = z_{0} z_{1} \cdots z_{m-3} z_{m-2} z_{m-1} z_{m-2} z_{m-3} \cdots z_{1}  z_{0}.
$$
Starting from~\eqref{eq:pal-c-fact-mbonacci-3} and using Definition~\ref{def:z-n-rec-mbonacci}(1), we have
\begin{align*}
\mbonacci 
=&
z_0 z_1 \cdots z_{m-4} z_{m-3} z_{m-2} (z_{m-3} z_{m-4} \cdots z_1 z_0 (m-1) z_0 z_1 \cdots z_{m-4} z_{m-3}) z_{m-2} z_{m-3} z_{m-4} \cdots  z_{0} \\ 
&\prod_{n \ge m-1} Q(n) \\
=& z_0 z_1 \cdots z_{m-4} z_{m-3} z_{m-2} \cdot  z_{m-1}  \cdot z_{m-2} z_{m-3} z_{m-4} \cdots  z_{0} \cdot Q(m-1) \cdot \prod_{n \ge m} Q(n).
\end{align*}
On the one hand, $Q(m)$ starts with $z_1$ by Definition~\ref{def:mbonacci-particular-prefix-and-factors}, and $z_1 \neq z_0$.
On the other hand, Proposition~\ref{pro:z-n-do-not-contain} shows that $m-1$ never occurs in $z_{-1},z_0,z_1,\ldots,z_{m-2}$.
So the longest palindrome occurring before is $c_{m-1}=Q(m-1)$. 

For the induction step, suppose $n\ge m-1$ and assume that, for all $1\le k \le n$, we have $c_k = Q(k)$.
We show it is still true for $k=n+1$. 
On the one hand, using the induction hypothesis, Proposition~\ref{pro:mbonacci-p-m-fact} and finally Definition~\ref{def:mbonacci-prefix-g-n}, we have
\begin{eqnarray}
\mbonacci
&=& c_{-m} c_{-(m-1)} \cdots c_{m-3} \left( \prod_{m-2 \le k \le n} c_k \right) c_{n+1} c_{n+2} \cdots \nonumber  \\
&=& p_m \left( \prod_{m-2 \le k \le n} Q(k) \right) c_{n+1} c_{n+2} \cdots \nonumber  \\
&=& g_{n+1} c_{n+1} c_{n+2} \cdots, \label{eq:pal-c-fact-mbonacci-ind-step-1}
\end{eqnarray}
and the goal is to find the next factor of the palindromic $c$-factorization of $\mbonacci$, i.e., the word $c_{n+1}$.
On the other hand, using Proposition~\ref{pro:mbonacci-fact-p-m-Q-n} first, then Definition~\ref{def:mbonacci-prefix-g-n} and finally Proposition~\ref{pro:mbonacci-prefix-g-n}, we get
\begin{eqnarray}
\mbonacci
&=& p_m \left( \prod_{m-2 \le k \le n} Q(k)\right) Q(n+1) Q(n+2) \cdots \nonumber \\
&=& g_{n+1} Q(n+1) Q(n+2)  \cdots \nonumber \\
&=& (z_0 z_1 \cdots z_{n+1-m} \cdot Q(n+1) \cdot z_{n+1-m}) Q(n+1) Q(n+2)  \cdots. \label{eq:pal-c-fact-mbonacci-ind-step-2}
\end{eqnarray}
Comparing~\eqref{eq:pal-c-fact-mbonacci-ind-step-1} and~\eqref{eq:pal-c-fact-mbonacci-ind-step-2}, we see that $|c_{n+1}| \ge |Q(n+1) |$ since $Q(n+1)$ is a palindrome occurring before in $g_{n+1}$.
Therefore, there exists a word $w \in \{0,1,\ldots,m-1\}^{*}$ such that $c_{n+1} = Q(n+1) w$.
We claim that $w$ is in fact the empty word. 
For the following argument, recall that Definition~\ref{def:mbonacci-particular-prefix-and-factors} gives
$$
Q(n+1)=z_{n+2-m} z_{n+3-m} \cdots z_{n-1} z_{n} z_{n+1} z_{n} z_{n-1} \cdots  z_{n+3-m} z_{n+2-m}.
$$
By Lemma~\ref{lem:two-occurrences-mbonacci}, we know that there are exactly two occurrences of $z_{n}$ in $g_{n+1}$: one starting at position $\sum_{k=0}^{n-1} | z_{k} | $, the other at position $\sum_{k=0}^{n+1} | z_{k} | $. 

\textbf{Case 1}. Let us deal with the first occurrence of $z_n$ in $g_{n+1}$.
In this case, $w$ must be a common prefix of the infinite words 
$$
\boldsymbol{u_{n+1,1}} = z_{n+1-m} Q(n+1) Q(n+2) \cdots
$$
and
$$
\boldsymbol{u_{n+1,2}} = Q(n+2) Q(n+3) \cdots.
$$ 
But by Proposition~\ref{pro:nothing-to-add-mbonacci}, we know that $Q(n+1) w$ is not a palindrome unless $w$ is the empty word.

\textbf{Case 2}. Let us examine the second occurrence of $z_n$ in $g_{n+1}$.
In this case, the suffix $z_{n} z_{n+1} z_{n} z_{n-1} \cdots  z_{n+3-m} z_{n+2-m}$ of $Q(n+1)$ starts at position $\sum_{k=0}^{n+1} | z_{k} | $ in $g_{n+1}$.
In particular, $z_{n+1}$ is a prefix of the infinite word 
\begin{align*}
z_{n-1} &z_{n-2} \cdots  z_{n+3-m} z_{n+2-m} z_{n+1-m} Q(n+1) Q(n+2) \cdots \\
=& z_{n-1} z_{n-2} \cdots  z_{n+3-m} z_{n+2-m} z_{n+1-m}
(z_{n+2-m} z_{n+3-m} \cdots z_{n-1} z_{n} z_{n+1} z_{n} z_{n-1} \cdots  z_{n+3-m} z_{n+2-m}) \\
&Q(n+2) Q(n+3) \cdots.
\end{align*}
Since $n+1\ge m$, Definition~\ref{def:z-n-rec-mbonacci} gives 
$$
z_{n+1}= z_{n-1} z_{n-2} \cdots z_{n+2-m-} z_{n+1-m} z_{n-m} z_{n+1-m} z_{n+2-m}  \cdots z_{n-2} z_{n-1}.
$$
We now show that 
\begin{align}\label{eq:pal-c-fact-mbonacci-ind-step-3}
|z_{n-1} z_{n-2} \cdots  z_{n+3-m} z_{n+2-m} z_{n+1-m}
z_{n+2-m} z_{n+3-m} \cdots z_{n-2} z_{n-1}| 
< |z_{n+1}|,
\end{align}
and
\begin{align}\label{eq:pal-c-fact-mbonacci-ind-step-4}
|z_{n+1}|
\le |z_{n-1} z_{n-2} \cdots  z_{n+3-m} z_{n+2-m} z_{n+1-m}
z_{n+2-m} z_{n+3-m} \cdots z_{n-2} z_{n-1} z_{n}|.
\end{align}
Let us show~\eqref{eq:pal-c-fact-mbonacci-ind-step-3}. 
We have 
\begin{align*}
&|z_{n+1}| - |z_{n-1} z_{n-2} \cdots  z_{n+3-m} z_{n+2-m} z_{n+1-m}
z_{n+2-m} z_{n+3-m} \cdots z_{n-2} z_{n-1}|  \\
&= 2 \sum_{k=n+1-m}^{n-1} |z_{k}| + |z_{n-m}| - \left( 2 \sum_{k=n+2-m}^{n-1} |z_{k}| + |z_{n+1-m}| \right) \\
&= |z_{n+1-m}| + |z_{n-m}| >0
\end{align*}
since $n+1\ge m$ implies that $|z_{n+1-m}|>0$. 
Let us prove~\eqref{eq:pal-c-fact-mbonacci-ind-step-4}.
We get 
\begin{align*}
&|z_{n-1} z_{n-2} \cdots  z_{n+3-m} z_{n+2-m} z_{n+1-m}
z_{n+2-m} z_{n+3-m} \cdots z_{n-2} z_{n-1} z_n| - |z_{n+1}|  \\
&= 2 \sum_{k=n+2-m}^{n-1} |z_{k}| + |z_{n+1-m}| + |z_n| - \left( 2 \sum_{k=n+1-m}^{n-1} |z_{k}| + |z_{n-m}| \right) \\
&= |z_{n}| - |z_{n+1-m}| - |z_{n-m}| \ge 0.
\end{align*}
Indeed, to see this, we make use of Corollary~\ref{cor:comparison-of-length-mbonacci-2}. 
If $m$ is even, then the result is clear. 
If $m$ is odd, then $|z_{n}| \ge |z_{n+1-m}| + |z_{n-m}|$ since $|z_{n-1}| + |z_{n-2}| + \cdots + |z_{n+2-m}| + (-1)^n\ge 0$.
As a consequence of~\eqref{eq:pal-c-fact-mbonacci-ind-step-3} and~\eqref{eq:pal-c-fact-mbonacci-ind-step-4}, $z_{n+1}$ must end with a non-empty prefix of $z_n$, which contradicts Lemma~\ref{lem:common-suffix-prefix-mbonacci}.

As a conclusion to both cases, the longest palindrome occurring before is $c_{n+1} = Q(n+1)$ as required.
%
%
\end{proof}

\section{Open Problems}

\begin{problem}
Let $A=\{0,1, \ldots, m-1\}$ be a finite alphabet of size $m$, $m\ge 1$.
Define the family $\mathcal{F}_m$ of infinite words $\boldsymbol{w}$ over $A$ such that $z(\boldsymbol{w})=pz(\boldsymbol{w})$.
For instance, when $m=2$, observe that $\fib \in \mathcal{F}_2$ (Proposition~\ref{pro:fact-Fici-Fibonacci} and Theorem~\ref{thm:pal-z-fact-fibonacci}) and 
$$
010^21^20^31^3 \cdots \in \mathcal{F}_2,
$$
but the Thue--Morse word 
$$
\boldsymbol{t} = 0110100110010110 \cdots
$$
which is the fixed point of the morphism $\tau : 0 \to 01, 1 \to 10$ does not belong to $\mathcal{F}_2$.
Give a characterization of this family. 
Also give a characterization of the set of automatic words among this family.
\end{problem}

The $m$-bonacci words belong to the family of episturmian words.
When studying episturmian words, it is standard to introduce a particular sequence $(h_n)_{n\ge 0}$ of finite words related to their directive word and palindromic prefixes. 
Justin and Pirillo \cite{Justin-Pirillo02} showed that the sequence $(u_n)_{n\ge 1}$ of palindromic prefixes of a standard episturmian word $\boldsymbol{s}$ verifies
$$
u_{n+2} 
= h_{n} h_{n-1} \cdots h_1 h_0 
= h_0^R h_1^R \cdots h_{n-1}^R h_{n}^R \quad \forall \, n\ge 0.
$$
When it comes to $m$-bonacci words, the sequence $(h_n)_{n\ge 0}$ coincides with the one from Definition~\ref{def:mbonacci}.
From Lemma~\ref{lem:z-n-mbonacci}, we can also show that for all $n\ge 0$, we have 
$$
u_{n+2} = z_0 z_1 \cdots z_{n-2} z_{n-1} z_n  z_{n-1} z_{n-2}\cdots z_1 z_0, 
$$
which in particular gives another way of showing that the words $(z_n)_{n\ge -1}$ are all palindromes. 
We observe that the sequences $(h_n)_{n\ge 0}$ and $(z_n)_{n\ge -1}$ are intimately bonded, so a natural question is the following open problem.

\begin{problem}
Find the palindromic $z$- and $c$-factorizations of other infinite words such as the Thue--Morse word, or more specifically episturmian words, billiard words, or rich words.
\end{problem}

As we observed in Section~\ref{sec:p-singular},
Lemma~\ref{lem:z-n-mbonacci} gives another definition of the
p-singular words, which may possibly be the more useful one when trying to
extend the results of this paper to episturmian words (see the
ordinary $z$- and $c$-factorizations of episturmian words given by
Ghareghani, Mohammad-Noori, and Sharifani \cite{GMS11}).

\section*{Acknowledgements}

We thank the anonymous referee for their useful and thorough proofreading of a first draft of this paper. Their comments helped us significantly.

Morteza Mohammad-noori is supported in part by a grant from IPM No. 96050113.
Narad Rampersad is supported by NSERC Discovery Grant 418646-2012.
Manon Stipulanti is supported by FRIA Grant 1.E030.16.

\end{document}